\documentclass[lettersize,journal]{IEEEtran}
\usepackage{amsmath,amsfonts, amsthm}
\usepackage{array}
\usepackage{subcaption} 
\usepackage{textcomp}
\usepackage{stfloats}
\usepackage{url}
\usepackage{verbatim}
\usepackage{graphicx}
\usepackage{mathtools}
\usepackage{multirow}
\usepackage{algorithm}
\usepackage[noend]{algpseudocode}
\usepackage{adjustbox}

\newtheorem{theorem}{Theorem}[section]

\newtheorem{lemma}{Lemma}[section]

\newtheorem{remark}{Remark}

\usepackage{cite}
\usepackage{subcaption}

\usepackage[colorlinks=true, linkcolor=blue, citecolor=blue, urlcolor=blue]{hyperref}
\hyphenation{op-tical net-works semi-conduc-tor IEEE-Xplore}
\def\BibTeX{{\rm B\kern-.05em{\sc i\kern-.025em b}\kern-.08em
    T\kern-.1667em\lower.7ex\hbox{E}\kern-.125emX}}
\usepackage{balance}

% All declarations will be here
\DeclarePairedDelimiter\abs{\lvert}{\rvert}
\DeclarePairedDelimiter\norm{\lVert}{\rVert}
\newcommand{\bignorm}[1]{\left\lVert#1\right\rVert}

\DeclareMathOperator*{\argmax}{arg\,max}
\DeclarePairedDelimiter\ceil{\lceil}{\rceil}

\algblock{ParFor}{EndParFor}
\algnewcommand\algorithmicparfor{\textbf{parfor}}
\algnewcommand\algorithmicpardo{\textbf{do}}
\algnewcommand\algorithmicendparfor{\textbf{end\ parfor}}
\algrenewtext{ParFor}[1]{\algorithmicparfor\ #1\ \algorithmicpardo}
\algrenewtext{EndParFor}{\algorithmicendparfor}

\def\itemautorefname~#1\null{#1\null}

\setlength\unitlength{1mm}

\long\def\comment#1{}

% bb font symbols

\newfont{\bbb}{msbm10 scaled 700}

\newfont{\bb}{msbm10 scaled 1000}

\newcommand{\EE}{\mbox{\bb E}}

\newcommand{\Prob}{\mbox{\bb P}}

% Vectors

\newcommand{\bv}{{\bf b}}
\newcommand{\cv}{{\bf c}}

\newcommand{\ev}{{\bf e}}
\newcommand{\fv}{{\bf f}}
\newcommand{\gv}{{\bf g}}

\newcommand{\pv}{{\bf p}}
\newcommand{\qv}{{\bf q}}
\newcommand{\rv}{{\bf r}}

\newcommand{\uv}{{\bf u}}

\newcommand{\xv}{{\bf x}}
\newcommand{\yv}{{\bf y}}

\newcommand{\zerov}{{\bf 0}}

% Matrices

\newcommand{\Am}{{\bf A}}

\newcommand{\Dm}{{\bf D}}
\newcommand{\Em}{{\bf E}}

\newcommand{\Id}{{\bf I}}

\newcommand{\Lm}{{\bf L}}
\newcommand{\Mm}{{\bf M}}

\newcommand{\Pm}{{\bf P}}
\newcommand{\Qm}{{\bf Q}}

\newcommand{\Um}{{\bf U}}

\newcommand{\Xm}{{\bf X}}

\newcommand{\Zm}{{\bf Z}}

% Calligraphic
\newcommand{\Ac}{{\cal A}}
\newcommand{\Bc}{{\cal B}}

\newcommand{\Ec}{{\cal E}}

\newcommand{\Gc}{{\cal G}}

\newcommand{\Nc}{{\cal N}}
\newcommand{\Oc}{{\cal O}}
\newcommand{\Pc}{{\cal P}}

\newcommand{\Rc}{{\cal R}}
\newcommand{\Sc}{{\cal S}}

\newcommand{\Vc}{{\cal V}}

% Bold greek letters

\newcommand{\deltav}{\hbox{\boldmath$\delta$}}

\newcommand{\Phim}{\hbox{\boldmath$\Phi$}}

% mixed symbols

\newcommand{\diag}{{\hbox{diag}}}
\renewcommand{\det}{{\hbox{det}}}

\renewcommand{\arg}{{\hbox{arg}}}

% Ben

%\DeclareMathOperator{\dist}{d}

\begin{document}
\title{Graph-based Scalable Sampling of 3D Point Cloud Attributes}
\author{%
  \IEEEauthorblockN{%
    Shashank N. Sridhara,
    Eduardo Pavez,
   Ajinkya Jayawant, 
   Antonio Ortega,
  Ryosuke Watanabe, and 
  Keisuke Nonaka%
  }%\\
\thanks{
This work was funded in part by KDDI Research, Inc. and NSF under grant NSF CNS-1956190.

S. N. Sridhara, E. Pavez, A. Jayawant, and A. Ortega are  with the Ming Hsieh Department of Electrical and Computer Engineering, University of Southern California, Los Angeles, 90089, United States
(email: nelamang@usc.edu; pavezcar@usc.edu; jayawant@alumni.usc.edu; aortega@usc.edu).

R. Watanabe and K. Nonaka are with KDDI Research, Inc., 2-1-15 Ohara, Fujimino, Saitama, 356-8502, Japan (email: ru-watanabe@kddi.com; ki-nonaka@kddi.com).

Source code: \url{https://github.com/STAC-USC/3D_point_cloud_sampling}
}

}

%\address{$^{\star}$University of Southern California, Los Angeles, CA, USA \\
%$^{\dagger}$KDDI Research, Inc.}

%\thanks{Manuscript created March, 2023; Placeholder}}

\markboth{under review}%
{Graph-based Scalable Sampling of 3D Point Cloud Attributes}

\maketitle
\begin{abstract}
3D Point clouds (PCs) are commonly used to represent 3D scenes. They can have millions of points, making subsequent downstream tasks such as compression and streaming computationally expensive. PC sampling (selecting a subset of points) can be used to reduce complexity. 
Existing PC sampling algorithms focus on preserving geometry features and often do not scale to handle large PCs. In this work, we develop scalable graph-based sampling algorithms for PC color attributes, assuming the full geometry is available. 
Our sampling algorithms are optimized for a signal reconstruction method that minimizes the graph Laplacian quadratic form. We first develop a global sampling algorithm that can be applied to  PCs with millions of points by exploiting sparsity and sampling rate adaptive parameter selection. 
%sampling algorithms that have knowledge of the reconstruction method for color attributes. 
Further, we propose a block-based sampling strategy where each block is sampled independently. 
We show that sampling the corresponding sub-graphs with optimally chosen self-loop weights (node weights) will produce a sampling set that approximates the results of global sampling while reducing complexity by an order of magnitude.
Our empirical results on two large PC datasets show that our algorithms outperform the existing fast PC subsampling techniques (uniform and geometry feature preserving random sampling) by $2$dB. Our algorithm is up to $50$ times faster than existing graph signal sampling algorithms while providing better reconstruction accuracy. Finally, we illustrate the efficacy of PC attribute sampling within a compression scenario, showing that pre-compression sampling of PC attributes can lower the bitrate by $11\%$ while having minimal effect on reconstruction.

\end{abstract}

\begin{IEEEkeywords}
point cloud sampling, graph signal sampling, graph signal reconstruction, point cloud compression.
\end{IEEEkeywords}

\section{Introduction}
\label{sec:intro}

\IEEEPARstart{P}{oint} clouds (PCs) have become a popular data format to represent 3D objects and scenes. A PC is comprised of a list of unordered points in $\mathbb{R}^{3}$, $\Pc = \{ \mathbf{p}_{i} = (x_{i}, y_{i}, z_{i}) \}$, representing the geometry, and their corresponding attributes such as color, intensity, or surface normals. A typical PC may contain millions of points. As a result, subsequent downstream tasks, such as compression and real-time streaming in telepresence, become computationally costly \cite{Mekuria2017_immersivecomm, Zhang2021_volumetricSR, Zhang2022_volumetric_streaming }. To alleviate the computational load, we can employ PC sampling (selecting a subset of representative points), compress and stream a low-resolution PC, and reconstruct the original high-resolution PC. A similar technique is employed in image and video compression, i.e., chroma sub-sampling,  where chrominance channels are sampled and encoded at a lower resolution \cite{chen2009_chromaImages}.
 
Existing PC sampling frameworks mainly focus on preserving geometry features (e.g., edges, corners) \cite{chen2017_pcsamplgraph, lang2020_samplenet}, which can be suboptimal when the goal is to accurately represent the attributes, as shown in our experiments (\autoref{sec:experiments}).
This work focuses on scenarios where only attributes are sampled, while geometry information is not sampled (i.e., all point coordinates are preserved). For example, 
in PC compression applications, geometry is available at both the encoder and the decoder and can be used to compress the color attributes. Thus, some color channels (chrominance) can be sampled at the encoder and reconstructed/interpolated to the original higher resolution at the decoder \cite{sridhara2022_chromapc}.
PC attribute sampling is also crucial in immersive communication requiring low latency  \cite{Mekuria2017_immersivecomm} or rendering at varying resolutions \cite{Zhang2021_volumetricSR, Zhang2022_volumetric_streaming}.

\begin{figure}[t]
     \centering
         \includegraphics[width=\linewidth]{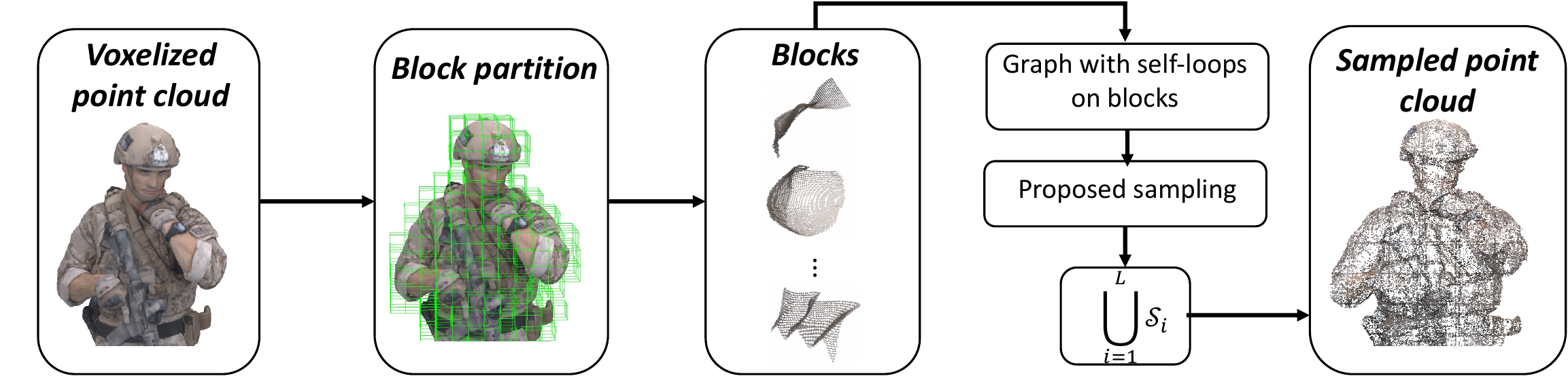}
        \caption{Summary of the proposed reconstruction aware block sampling (RABS) algorithm.  RABS first divides the PC into blocks, then self-loops are added to sub-graphs, followed by sub-graph sampling.}
\label{fig:summary}
\end{figure}

%Our approach can be applied to different types of attributes, %such as color, surface normals, reflectance, etc. 

Signal sampling and reconstruction are fundamental operations in signal processing. For 1D signals and images
%, signals lie on a regular grid, 
we can employ regular sampling patterns (e.g., selecting even samples). In our preliminary work, we proposed an  \textit{uniform sampling} algorithm, which selects points in the PC belonging to a regular 3D grid \cite{sridhara2022_chromapc}. 
While this approach is computationally inexpensive, its performance suffers due to several factors: 1) the sampling algorithm is unaware of the reconstruction algorithm, 2)  sampling can only be performed at certain fixed rates, e.g.,  $50\%$, $25\%$, $12.5\%$, etc., 3) regular sampling is sub-optimal for PCs with irregular point distribution and non-uniform point densities   (e.g., in LiDAR PCs \cite{li2021_lidarPC, sridhara2021_lidarpc}). All these factors make 3D PC sampling challenging.
% thus making  3D point cloud attribute sampling more challenging.
 
In this paper, we formulate the problem of sampling set selection with two goals.
First, the sampling set should be optimized for both the reconstruction algorithm and the sampling rate.
Second, the sampling algorithm should scale to PCs with millions of points, often encountered in practice \cite{chou2017_8idataset,loop2016_mvub}.   

3D PCs can be represented and processed using graphs,  with nodes 
corresponding to points, edge weights chosen based on the relative positions of the points, and PC attributes are treated as graph signals \cite{Shuman2013_gsp, ortega2022_gsp}. 
This has led to the development of graph signal processing (GSP) techniques for PC denoising \cite{watanane2022_pcdenoising}, attribute compression \cite{hong2022_fractional, hong2022_motion} and sampling \cite{dinesh2023_pcsamplBSGDA,tanaka2020_samploverview}.  
However, even existing fast (eigendecomposition-free) graph signal sampling algorithms \cite{sakiyama2019_edfreesampling, jayawant2021_avm, wang2019_samplAopt, puy2018_randomsampl, puy2018_structuredsampl, bai2020_bsgda} can generally only be applied to small/medium graphs (1000-10,000 nodes) \cite{jayawant2021_avm} and are optimized for bandlimited graph signal reconstruction. 

This work proposes a computationally efficient \textit{reconstruction-aware global sampling (RAGS) algorithm}  for PC attributes. 
Our RAGS algorithm is optimized for a popular and scalable graph Laplacian regularized (GLR) reconstruction method 
\cite{ Zhu2002LearningFL, bai2020_bsgda, dinesh2019_pcgeorecon}. 
Similar to \cite{sakiyama2019_edfreesampling, jayawant2021_avm}, we choose samples such that certain interpolating vectors at the sampled locations are as close to orthogonal as possible. %we construct a series of interpolating vectors (interpolators) and choose samples such that interpolators at the sampled locations are as close to orthogonal as possible. 
In \cite{sakiyama2019_edfreesampling, jayawant2021_avm},  interpolators based on spectral filters are used, for which it is necessary to estimate the cut-off frequency. 
Instead, in our work, we construct polynomial interpolators directly in the vertex domain without requiring graph frequency analysis. 
Unlike prior work, our proposed interpolators   \textit{adapt to the sampling rate} by means of their polynomial degree, $p$. 
This allows us to select highly localized (small $p$) or less localized (large $p$) interpolators depending on whether the sampling rate is high (e.g., over 25\%) or low.   We propose a method to choose the localization parameter $p$ as a function of the sampling rate, which allows our RAGS algorithm to handle large PCs, even at higher sampling rates.
Our proposed sampling algorithm has a complexity of $\Oc((\Bar{d}^{3p} + s)N)$, where $\Bar{d}-1$ is the maximum unweighted graph degree,  $s$ is the number of samples and $N$ is the total number of points. 

Since PCs can be partitioned into non-overlapping blocks using the octree structure with $\mathcal{O}(N\log(N))$ complexity \cite{jackins1980_octtrees, pavez2018_polygoncloudcompr}, we propose a  \textit{reconstruction-aware block sampling (RABS) algorithm}, which samples each block independently (see  \autoref{fig:summary}).
Because independent block sampling ignores connections between points across block boundaries, we propose to add node weights (self-loops) to those points. We show that RABS on graphs with self-loops reduces the sampling complexity and the runtime by an order of magnitude compared to the RAGS with minimal loss in reconstruction quality. 

Our main contributions are:
\begin{itemize}
    \item A global sampling set selection algorithm optimized for a graph Laplacian-based reconstruction(\autoref{sec:sampling_algo_development}).
    \item Methods to select the interpolation localization parameter $p$ as a function of the sampling rate (\autoref{subsec:effect_of_p}). 
    \item A block-based graph sampling framework where self-loops are added to block-boundary nodes (\autoref{sec:block_pcsampl}). 

    \item A comprehensive evaluation of our algorithms on two large PC datasets and their application to \textit{PC attribute compression through chroma subsampling}  (\autoref{sec:experiments}).

\end{itemize}

\subsection{Related work}
\subsubsection{Graph signal sampling}
Most existing graph signal sampling algorithms have been developed for bandlimited graph signals \cite{tanaka2020_samploverview}. Early methods required computing the full eigendecomposition (ED) \cite{chen2015_graphsampling} or computing one eigenvector in each step \cite{anis2016_spectralproxies}, making them only suitable for small graphs. 
Random sampling \cite{puy2018_randomsampl, puy2018_structuredsampl} is the fastest ED-free algorithm, but it can result in subpar reconstruction quality \cite{tanaka2020_samploverview}.
Deterministic ED-free sampling algorithms \cite{sakiyama2019_edfreesampling, jayawant2021_avm, wang2019_samplAopt} provide better reconstruction quality. However, all ED-free sampling algorithms developed for bandlimited graph signals, including random sampling \cite{puy2018_randomsampl}, have to estimate the bandwidth (cut-off frequency) to design spectral filter parameters, which makes them too complex for large graphs (e.g., more than $10,000$ nodes). 
While our proposed block-based sampling approach can be combined with existing ED-free sampling algorithms \cite{sakiyama2019_edfreesampling, jayawant2021_avm, wang2019_samplAopt}, their complexity would still be high owing to the use of spectral filtering and the need to estimate certain parameters such as 
cut-off frequencies (\autoref{subsec:comparison}), or coverage sets \cite{bai2020_bsgda}, while often also having poor reconstruction accuracy at higher sampling rates (see \autoref{subsec:comparison}).
Because our proposed algorithms, RAGS and RABS,  are based on a graph Laplacian regularizer (GLR) reconstruction \cite{chapelle2006_labelprop}, we do not require spectral filtering or bandwidth estimation.  
%The closest to our approach is sampling set selection through Gerschgorin disk alignment (BS-GDA) \cite{bai2020_bsgda}. But BS-GDA involves estimation of \textit{coverage sets} and \textit{disc alignment}, which makes it computationally slow for large graphs (\autoref{subsec:complexity_analysis}). 
%Also, BS-GDA results in poor reconstruction accuracy at higher sampling rates (\autoref{subsec:comparison}).

\subsubsection{Point cloud sampling}
The existing literature revolves around methods that preserve geometry features \cite{chen2017_pcsamplgraph, lang2020_samplenet,dinesh2023_pcsamplBSGDA}. 
 \cite{chen2017_pcsamplgraph, lang2020_samplenet} use the sampled PC for registration and classification without requiring attribute reconstruction. More recently, \cite{dinesh2023_pcsamplBSGDA,bai2020_bsgda}  proposed a graph-based sampling set selection algorithm that only considers geometry reconstruction and has relatively high computational complexity.
%While methods for PC attribute reconstruction from randomly subsampled points have been proposed  \cite{wang2022_pccolorrecon, heimann2021_pccolorrecon}, our work focuses instead on optimizing the choice of a sampling set for a particular reconstruction algorithm. 

The rest of the paper is organized as follows. GSP preliminaries and notations are given in \autoref{sec:preliminaries}. We develop the sampling algorithm and discuss its properties in \autoref{sec:sampling_algo_development}. We propose block-based sampling and provide a detailed complexity analysis in \autoref{sec:block_pcsampl}. Experimental results and conclusions are in  \autoref{sec:experiments} and \autoref{sec:conclusion}, respectively.

\section{Preliminaries}
\label{sec:preliminaries}

We represent sets using calligraphic uppercase: the sampling set $\Sc$ is a subset of $\Vc$ and its complement is $\Rc = \Vc \setminus \Sc$. 
We represent matrices by bold uppercase, $\Xm$, vectors by bold lowercase, $\xv$, and scalars by plain lowercase, $x$. 
The submatrix of $\Xm$ obtained by extracting columns and rows indexed by the sets $\Ac$ and $\Bc$ respectively is denoted by $\Xm_{\Bc \Ac}$.

We denote 3D points representing the geometry of a PC as $\Pc = \{ \mathbf{p}_{i} = (x_{i}, y_{i}, z_{i}) , \forall i \}  \in \mathbb{R}^{N \times 3}$, where $N$ is the number of points. Without loss of generality, we assume that the point attributes are scalar and denoted by $\mathcal{C} = \{ f_{i} \in  \mathbb{R}, \forall i\}$.

A graph  $\Gc = (\Vc, \Ec)$, consists of a vertex set $\Vc$ of size $N$  and an edge set $\mathcal{E}$ connecting vertices. A graph signal is a vector $\fv \in \mathbb{R}^{N}$ whose $i$th entry corresponds to the $i$th node in the graph. We define $w_{ij}>0$ to be the weight of the edge between vertices $i$ and $j$. The adjacency matrix, $\Am \in \mathbb{R}^{N \times N}$, is a symmetric matrix whose $(i,j)$th element is $w_{ij}$, if $i$ and $j$ are connected and $0$ otherwise. The degree matrix $\Dm$ is a diagonal matrix with diagonal entries, $\Dm_{i,i} = \sum_{j} \Am_{i,j}$ \cite{ortega2022_gsp}. 

The combinatorial Laplacian, $\Lm:= \Dm - \Am$ is symmetric and positive semi-definite, with an orthogonal set of eigenvectors, $\mathbf{U} = [\uv_1, \uv_2, \ldots, \uv_N]$ and eigenvalues $0 = \lambda_{1} \leq \lambda_{2} \leq,\cdots,\leq \lambda_{N}$ representing the graph frequencies. 
The random walk Laplacian is defined as $\Lm_{rw}:= \Dm^{-1}\Lm$ \cite{ortega2022_gsp}. 
The generalized graph Laplacian of a graph $\mathcal{G}$ with self-loops is $\Tilde{\Lm} := \Dm - \Am + \Phi$, where $\Phim$ is diagonal and $\mathbf{\Phi}_{ii} \geq 0$ is the self-loop weight at node $i$, which can be interpreted as a  \textit{boundary condition} for that node (see \autoref{fig:1d_dct_bc} and \cite{biyikoglu2007laplacian}).
\begin{figure}[t]
     \centering
         \includegraphics[width=\linewidth]{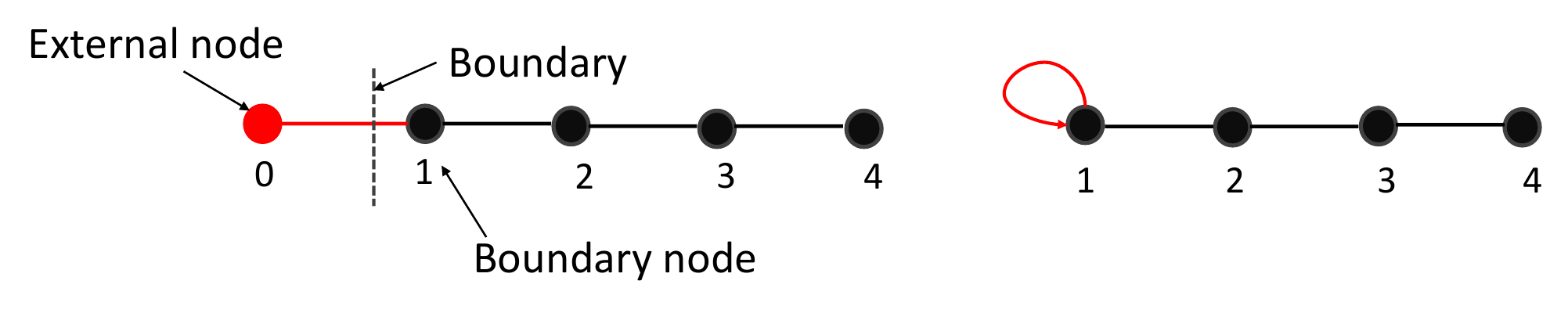}
         \caption{Illustration of self-loop for the boundary node as a consequence of boundary condition using a line graph}
\label{fig:1d_dct_bc}
\end{figure}

%\subsection{Graphs on 3D point cloud data}
%\label{subsec:pc_graph_construction}
We use a $K$-nearest neighbor (KNN) graph to represent the geometric relationships between points because KNN can be implemented efficiently for large PCs. 
However, our methods can be applied with other graph constructions \cite{shekkizhar2020_nnk, watanabe2024fast}. Each point $\mathbf{p}_i$ is associated with a node in the graph $i \in \mathcal{V}$. Two nodes are connected to their  $K$-nearest neighbors based on the  Euclidean distance between points. We symmetrize the KNN graph by connecting two nodes $i$ and $j$ with an edge $(i,j)$ if $i$ is a KNN of $j$, or vice versa. 
The neighborhood of a node $i \in \mathcal{V}$ is denoted by $\mathcal{N}(i)$. The edge weights are computed using the Gaussian kernel with parameter $\sigma>0$, 
  \begin{equation}
    \label{eqn:graph_weights}
     w_{ij} = e^{-\frac{\Vert \mathbf{p}_{i} - \mathbf{p}_{j}\Vert_{2}^2}{2 \sigma^2}}.
 \end{equation}

\section{Sampling algorithm development}
\label{sec:sampling_algo_development}

Given a set of points $\Pc = \lbrace \pv_1,\cdots, \pv_N \rbrace$, and its corresponding attribute (signal) $\fv \in \mathbb{R}^{N}$, we aim to obtain a sampling set, $\Sc \subset \lbrace 1, 2, \cdots, N \rbrace$,  such that we can accurately reconstruct $\fv_{\Rc}$ from the sampled signal $\fv_{\Sc}$, where $\Rc = \Vc\setminus\Sc$.

\subsection{Overview of proposed  PC attribute sampling framework}
\label{subsec:block_based_formulation}

Throughout this paper, we will perform signal reconstruction by minimizing the graph Laplacian quadratic form with sample consistency constraints    \cite{chapelle2006_labelprop, Zhu2002LearningFL, bai2020_bsgda, dinesh2023_pcsamplBSGDA}. Given a sampling set $\Sc$, the reconstructed signal $\hat{\fv}$ is
\begin{equation}
    \label{eqn:recon_minimization}
\hat{\fv} =   \arg\min_{\xv} \xv^{\top} \Lm \xv \quad
    \textrm{s.t.} \quad  \xv_{\Sc} = \fv_{\Sc}.
\end{equation}
$\hat{\fv}$ is the smoothest graph signal consistent with the samples $\fv_{\Sc}$. 
There exist various efficient algorithms to solve \eqref{eqn:recon_minimization}  (e.g.,  \cite{Zhu2002LearningFL} and \cite[Alg. 11.1]{chapelle2006_labelprop}) based on sparse matrix-vector products.  
%Its implementation only involves sparse matrix-vector products.
%

For a desired  sampling rate $\alpha$ ($s/N \leq \alpha$), our goal is to obtain a sampling set $\Sc^{*}$ of size $\vert \Sc^{*} \vert = s$ that gives the best reconstruction according to \eqref{eqn:recon_minimization}.
While reconstruction algorithms to solve \eqref{eqn:recon_minimization} can scale to large PCs, traditional graph-based sampling set selection algorithms are often too complex to be used in the PCs considered in this work. 

Two main ideas allow our approach to scale to large graphs. 
First, similar to  \cite{sakiyama2019_edfreesampling, jayawant2021_avm}, we compute interpolating vectors for each node,   but we avoid bandwidth estimation by using spatially-localized interpolators designed for reconstruction with \eqref{eqn:recon_minimization}. This approach is our Reconstruction Aware Global Sampling (\hyperref[algo:fast_sampling]{RAGS}), which is developed later in this section. Second, we propose using a block-based sampling strategy whereby the PC is divided into non-overlapping blocks as shown in \autoref{fig:summary}, and each block is sampled independently.  
The proposed Reconstruction Aware Block Sampling (\hyperref[algo:block_sampling]{RABS})  can be performed using information local to each block, leading to significant complexity reductions in the complexity of computing the interpolating vectors and selecting the sampling set. 
 In RABS,  
the point set $\Pc$ is partitioned into $L$ non-overlapping blocks (subsets of points) denoted by $\mathcal{B}_{1}, \mathcal{B}_{2}, \dots \mathcal{B}_{L}$ such that $\Pc = \bigcup_{j=1}^{L} \mathcal{B}_{j}$, with $\abs{\mathcal{B}_{j}} = b_{j},  \forall j$.  Let $\{\mathcal{G}_{j}\}_{j=1}^{L}$ be the subgraphs of $\mathcal{G}$ associated with the points in blocks $\{\mathcal{B}_{j}\}_{j=1}^{L}$. 
Our algorithm obtains sampling sets $\{\Sc_{j}\}_{j=1}^{L}$ for each of the subgraphs $\{\mathcal{G}_{j}\}_{j=1}^{L}$, which are aggregated to form a global sampling set $\bigcup_{j=1}^{L} \Sc_{j}$. 
We sample $s_i$ points from the $i$th block so that $\frac{s_i}{b_i} \leq \alpha$, and the global sampling rate is $\alpha$. 

In the rest of this section, we derive the RAGS algorithm, while the RABS algorithm will be introduced in  \autoref{sec:block_pcsampl}. %This sampling approach can be applied to large PCs with careful parameter choice that exploits graph sparsity. 
%An important element of our approach is a method to add node weights (i.e., self-loops) to account for block boundaries, which allows us to maintain the good performance of the global sampling algorithm while performing local (block-wise) sampling. 

\begin{algorithm}[t]
\caption{Reconstruction-aware global sampling (RAGS)}\label{algo:fast_sampling}
\begin{algorithmic}[1]
\Procedure{Sampling}{$\Am, \Dm, \alpha$}
\State $\Zm \gets \frac{1}{2}(\Id + \Dm^{-1} \Am)$
\Comment{one-hop operator}
\State Compute $p$ \Comment{$p$ from $\alpha$}
\State Compute $\Qm(p)$ 

\State Compute $\langle \qv_{i}^{(p)}, \qv_{j}^{(p)} \rangle$ for all $i,j$

\While{$\abs{\Sc}/N \leq  \alpha$} 
\State $x^{*}\gets \argmax_{x \in \Sc^{c}}  \norm{\qv_{x}^{(p)}}_{2}^{2} - \sum_{y \in \Sc} \frac{\langle \qv_{x}^{(p)}, \qv_{y}^{(p)} \rangle^{2}}{\norm{\qv_{y}^{(p)}}_{2}^{2}}$
\State $\Sc \gets \Sc \cup x^{*}$

\EndWhile
\State \textbf{return} $\Sc$
\EndProcedure
\end{algorithmic}
\end{algorithm}

\begin{algorithm}[t]
\caption{Reconstruction-aware block sampling (RABS)}\label{algo:block_sampling}
\begin{algorithmic}[1]
\Procedure{PC-sampling}{$\Pc, \alpha$}
\State $\Am \gets \text{construct\_graph}(\Pc)$
\State $\{\mathcal{B}_{1} \dots \mathcal{B}_{L} \} \gets \text{divide\_PC}(\Pc)$ \Comment{octree subdivision}
\State $\Sc \gets \emptyset$
\For{$i \gets 1, L$} \Comment{sampling blocks}
    \State Obtain $\Am^{(i)}$ \Comment{block adjacency matrix}

    \State $\mathbf{\Phi}^{(i)} \gets \text{selfloop-weights}(\Am, \Bc_{i})$
    \State $\Dm^{(i)} \gets diag(\Am^{(i)}.\mathbf{1} )$

    \State $\Tilde{\Dm}^{(i)} \gets \Dm^{(i)} + \mathbf{\Phi}^{(i)}$ \Comment{add self-loops}

    \Comment{Obtain local sampling set using \hyperref[algo:fast_sampling]{RAGS}}
    \State $\Sc_{i} \gets \text{SAMPLING}(\Am^{(i)}, \Tilde{\Dm}^{(i)}, \alpha)$  
\EndFor
 \State $\Sc \gets \bigcup_{i=1}^{L} \Sc_{i}$
\State \textbf{return} $\Sc$
\EndProcedure
\end{algorithmic}
\end{algorithm}

%\subsection{Problem Formulation}
%\label{subsec:problem_forml_sampl}

\subsection{From signal reconstruction to sampling}
\label{subsec:constructing_q}

%Given the reconstruction method in \eqref{eqn:recon_minimization}, we consider the constrained minimization problem widely used in signal reconstruction on graphs \cite{chapelle2006_labelprop, bai2020_bsgda, dinesh2023_pcsamplBSGDA}. 
To design a sampling set selection algorithm, first, from \cite{chapelle2006_labelprop, Zhu2002LearningFL}, we note that the closed-form solution to \eqref{eqn:recon_minimization} is: 
\begin{equation}
\label{eqn:recon_soln}
    \hat{\fv} =  
    \begin{bmatrix}      
    \fv_{\Sc}\\
        - \Lm_{\Rc\Rc}^{-1} 
        \Lm_{\Rc\Sc} \fv_{\Sc}
    \end{bmatrix} 
    = \Mm(\Sc)\fv_{\Sc}. 
\end{equation}
%where the reconstruction $\fv_{\Rc}$ is a weighted linear combination of the columns of  
%$\Mm(\Sc) \in \mathbb{R}^{N \times \abs{\Sc}}$. 
%
Since the interpolation matrix $\Mm(\Sc) \in \mathbb{R}^{N \times \abs{\Sc}}$  is a function of $\Sc$, we are not able to use efficient ED-free sampling set selection methods such as  \cite{sakiyama2019_edfreesampling, jayawant2021_avm}, which define a fixed (sampling-set independent) $N\times N$ matrix of interpolators and choose  $s = |\Sc|$  of its columns to 
optimize a reconstruction error criterion. 
%The position of the $s$ columns from the original interpolation matrix defines the sampling set.  
%However, while \eqref{eqn:recon_soln} can be implemented efficiently once the sampling set $\Sc$ is chosen, we must develop approximations to select the $s$ interpolators. In particular, note that the matrix $\Mm({\Sc})$ changes for each sampling set $\Sc$, and thus the interpolating vector associated with each node $i \in \Vc$ depends also on the rest of the nodes in the sampling set. 
%Additionally, because of matrix inversion, the columns of $\Mm(\Sc)$ tend to be dense vectors, and thus, sparse vector operations cannot be used. 
A key contribution of our work is introducing $\Qm(p)$, a parametric approximation of $\Mm(\Sc)$, where the parameter $p$ is a function of $|\Sc|$ rather than $\Sc$. This allows us to use efficient methods to select $|\Sc|$ columns from $\Qm(p)$, to obtain a sampling set $\Sc$ that performs well when interpolating with $\Mm(\Sc)$. 
%
%
%As an alternative, we propose an approximation to $\Mm(\Sc)$ that is amenable to sampling set selection. 
%
We start by defining the low-pass graph filter
\begin{equation}
    \label{eqn:1_hop_operator}
    \Zm := \frac{1}{2}(\Id + \Dm^{-1} \Am) = \Id - \frac{1}{2} \Lm_{rw},
\end{equation}
where $\Lm_{rw} = \Dm^{-1} \Lm$ is the random-walk Laplacian.
We define the diagonal node selection matrix $\Pm_{\Ac}$ for any $\Ac \subset \Vc$ as,
\begin{equation}\label{eq_sampling_upsampling}
    [\Pm_{\Ac}]_{ii} =  \begin{dcases}
    1, & \text{if }  i \in \Ac\\
    0,              & \text{otherwise}.
\end{dcases}
\end{equation}
We will require the following lemma (proved in Appendix \ref{app:reconstruction_proof}).
\begin{lemma}
\label{lemma_equivalent_interpolators}
    The closed-form solution in \eqref{eqn:recon_soln} can be equivalently written in terms of $\Zm$ as,
    \begin{equation}
        \label{eqn:recon_soln_graph_filter}
        \hat{\fv}=  (\Id - \Pm_{\Rc} \Zm)^{-1} \Pm_{\Sc} \fv = [\sum_{l=0}^{\infty} (\Pm_{\Rc} \Zm)^{l}]  \Pm_{\Sc} \fv.
    \end{equation}
\end{lemma}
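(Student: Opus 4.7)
The plan is to prove the two equalities separately. For the first, I will directly verify that the vector $\hat{\fv}$ defined in \eqref{eqn:recon_soln} satisfies the linear system $(\Id - \Pm_{\Rc}\Zm)\hat{\fv} = \Pm_{\Sc}\fv$, and then establish that $\Id - \Pm_{\Rc}\Zm$ is invertible. For the second, I will argue that the Neumann series $\sum_{l=0}^\infty (\Pm_{\Rc}\Zm)^l$ converges to $(\Id - \Pm_{\Rc}\Zm)^{-1}$ using a spectral-radius argument.

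For the first equality, I would order the nodes so that $\Sc$ comes first and $\Rc$ second. In this ordering, $\Pm_{\Rc}\Zm$ has zero rows on the $\Sc$ block, so $\Id - \Pm_{\Rc}\Zm$ is block lower-triangular with diagonal blocks $\Id_{\Sc}$ and $\Id_{\Rc} - \Zm_{\Rc\Rc}$. Using the definition $\Zm = \Id - \tfrac{1}{2}\Dm^{-1}\Lm$, the $(\Rc,\Rc)$ diagonal block equals $\tfrac{1}{2}\Dm_{\Rc}^{-1}\Lm_{\Rc\Rc}$, and the $(\Rc,\Sc)$ off-diagonal block of $-\Pm_{\Rc}\Zm$ equals $\tfrac{1}{2}\Dm_{\Rc}^{-1}\Lm_{\Rc\Sc}$. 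Plugging the candidate $\hat{\fv}$ from \eqref{eqn:recon_soln} into $(\Id - \Pm_{\Rc}\Zm)\hat{\fv}$, the $\Sc$-row block gives $\hat{\fv}_{\Sc} = \fv_{\Sc}$ trivially, while the $\Rc$-row block becomes $\tfrac{1}{2}\Dm_{\Rc}^{-1}(\Lm_{\Rc\Sc}\fv_{\Sc} + \Lm_{\Rc\Rc}\hat{\fv}_{\Rc})$, which vanishes exactly by construction of $\hat{\fv}_{\Rc} = -\Lm_{\Rc\Rc}^{-1}\Lm_{\Rc\Sc}\fv_{\Sc}$. Since $\Lm_{\Rc\Rc}$ is invertible (a standing assumption of the reconstruction problem, equivalent to every connected component containing at least one sample), the block-triangular matrix $\Id - \Pm_{\Rc}\Zm$ is also invertible, so $\hat{\fv} = (\Id - \Pm_{\Rc}\Zm)^{-1}\Pm_{\Sc}\fv$.

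For the series representation, the main obstacle is controlling the spectral radius of $\Pm_{\Rc}\Zm$. By the block-triangular structure above, the eigenvalues of $\Pm_{\Rc}\Zm$ are those of $\Zm_{\Rc\Rc}$ together with zeros. I would argue that $\rho(\Zm_{\Rc\Rc}) < 1$ as follows. The matrix $\Zm$ is the transition matrix of a lazy random walk (each row sums to $1$, with self-loop probability at least $1/2$ due to the $\Id/2$ term), and making $\Sc$ absorbing yields a Markov chain whose transient block is exactly $\Zm_{\Rc\Rc}$. Under the standing assumption that $\Lm_{\Rc\Rc}$ is invertible, every node in $\Rc$ is connected to some node in $\Sc$ by a path in $\mathcal{G}$, so with positive probability the walk from any transient state is absorbed. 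This is the classical sufficient condition for the transient block of an absorbing chain to have spectral radius strictly less than $1$. Consequently the Neumann series $\sum_{l=0}^\infty (\Pm_{\Rc}\Zm)^l$ converges absolutely to $(\Id - \Pm_{\Rc}\Zm)^{-1}$, yielding the second equality.

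The hardest step is the spectral-radius bound for $\Zm_{\Rc\Rc}$, since a direct row-sum argument only gives $\rho(\Zm_{\Rc\Rc}) \le 1$: strict inequality requires the reachability-to-$\Sc$ argument via the absorbing-chain interpretation (or, equivalently, a Perron-Frobenius/irreducibility argument on the graph restricted to $\Rc$). The remaining algebraic manipulations are routine once the ordering and block decomposition are fixed.
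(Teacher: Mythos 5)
Your proof is correct, and its second half takes a genuinely different route from the paper's. For the first equality the paper applies the block matrix inverse formula to compute $(\Id-\Pm_{\Rc}\Zm)^{-1}\Pm_{\Sc}\fv$ explicitly and then checks $(\Id_{\Rc\Rc}-\Zm_{\Rc\Rc})^{-1}\Zm_{\Rc\Sc}=-\Lm_{\Rc\Rc}^{-1}\Lm_{\Rc\Sc}$, whereas you verify that the candidate from \eqref{eqn:recon_soln} solves the system $(\Id-\Pm_{\Rc}\Zm)\hat{\fv}=\Pm_{\Sc}\fv$ and get invertibility from the block-triangular structure; this is essentially the same computation run in the opposite direction. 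The real divergence is the Neumann-series step: the paper (Lemmas \ref{lemma:convergence} and \ref{lemma:non_expansive}) argues that $\Pm_{\Rc}\Zm$ is strictly non-expansive in the Euclidean norm, using $\Vert\Zm\xv\Vert\le\Vert\xv\Vert$ with equality only for constant vectors, while you bound the spectral radius of $\Zm_{\Rc\Rc}$ by viewing $\Zm$ as a lazy random walk and $\Zm_{\Rc\Rc}$ as the transient block of the chain with $\Sc$ made absorbing. Your route buys two things: $\rho(\Pm_{\Rc}\Zm)<1$ is the exact (necessary and sufficient) condition for Neumann convergence, and it sidesteps the delicate claim that the row-stochastic but non-symmetric operator $\Zm=\tfrac{1}{2}(\Id+\Dm^{-1}\Am)$ contracts the Euclidean norm --- for irregular graphs $\Zm$ is only a similarity transform of a symmetric contraction, so its spectral radius is at most $1$ but its $2$-norm need not be, and the paper's norm argument requires more care than it is given. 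The hypothesis you invoke (every connected component of $\Gc$ contains a sample, equivalently $\Lm_{\Rc\Rc}$ invertible) is already implicit in the paper through the appearance of $\Lm_{\Rc\Rc}^{-1}$ in \eqref{eqn:recon_soln}, so nothing is lost.
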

%
%Using $\Zm$ we can equivalently write \eqref{eqn:recon_soln} as:
% \begin{proof}
%  The series expansion is due to  Neumann's results on matrices \cite{horn2012_matrix}. We have that $(\Id - \Pm_{\Rc} \Zm)^{-1} = [\sum_{l = 0}^{\infty} ( \Pm_{\Rc} \Zm)^{l}]$, 
% because the spectral radius $\rho(\Pm_{\Rc} \Zm)^{l}) < 1$ based on definition of $\Pm_{\Rc}$ and $\Zm$ (refer to \autoref{lemma:convergence} in Appendix \ref{app:reconstruction_proof}). 
% \end{proof}
%Also, since $\Pm_{\Sc} \fv = \begin{bmatrix}
%        \fv_{\Sc}^{\top} & \zerov^{\top}
%    \end{bmatrix}^{\top} $ we can equivalently write \eqref{eqn:recon_soln_graph_filter} as,
%\begin{equation}
%    \label{eqn:approx_M(S)}
%    \hat{\fv} = [\sum_{l=0}^{\infty} (\Pm_{\Rc} \Zm)^{l}]  \Pm_{\Sc} \fv = [\sum_{l=0}^{\infty} (\Pm_{\Rc} \Zm)^{l}]  \begin{bmatrix}
%        \fv_{\Sc} \\ \zerov
%    \end{bmatrix}.
%\end{equation}
%
\begin{figure}[t]
\centering
\begin{subfigure}[b]{0.28\textwidth}
         \centering
         \includegraphics[width=\textwidth]{./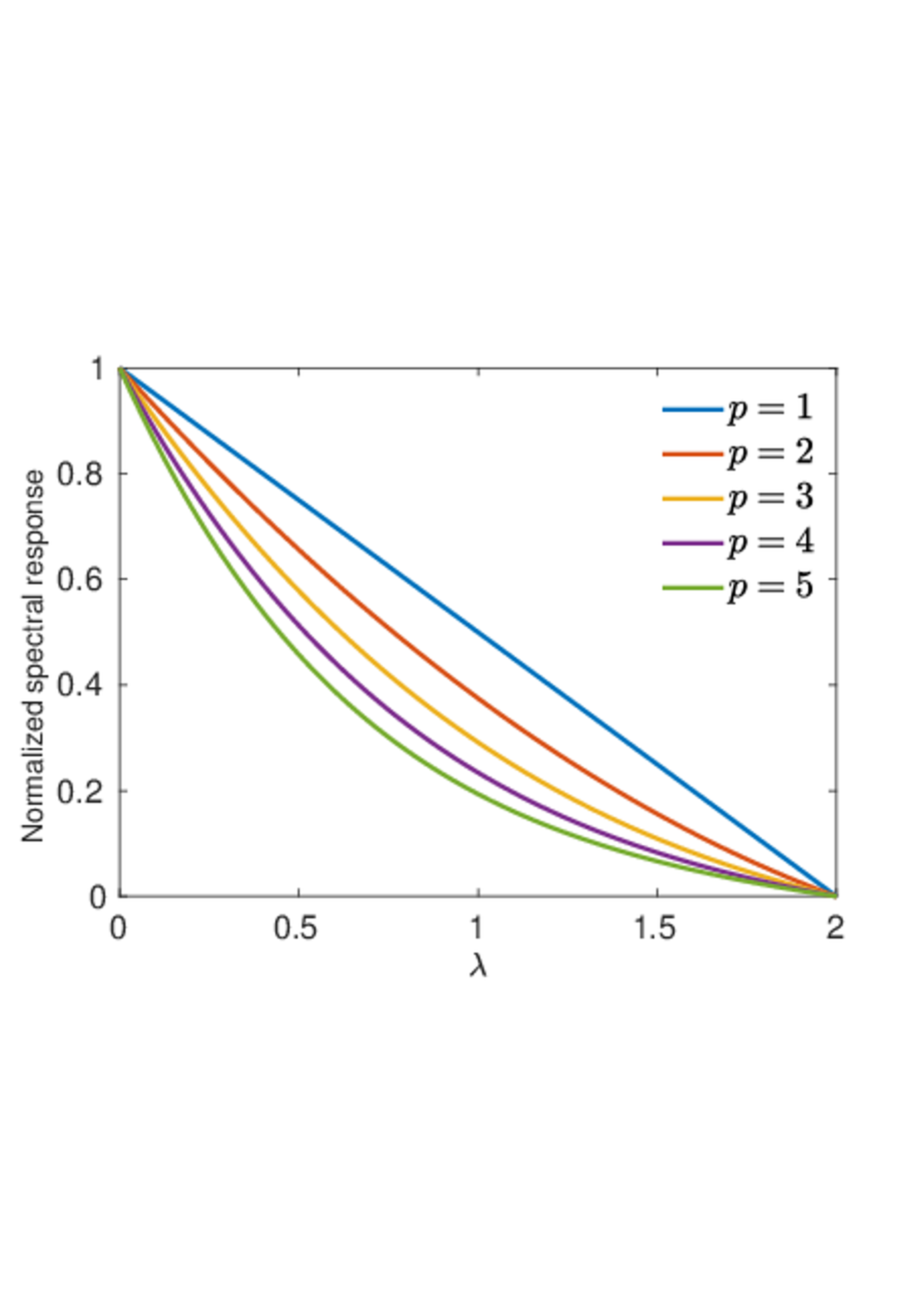}
         \caption{}
         \label{subfig:spectral_response_interpolator}
     \end{subfigure}
     \begin{subfigure}[b]{0.18\textwidth}
         \centering
         \includegraphics[width=\textwidth]{./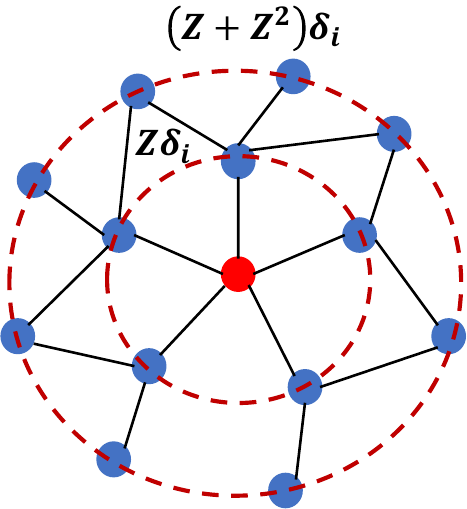}
         \caption{}
        \label{subfig:vertex_intrep_Z}
     \end{subfigure}
     \caption{(a) Normalized spectral response of the proposed interpolator - $g_{p}(\lambda_{i}) = \frac{1}{p} \sum_{l=1}^{p} (1 - \frac{1}{2} \lambda_{i})^{l}$.(b) Vertex domain interpretation of the interpolator $\mathbf{Q}(p)$.} %As we increase the powers of $\Zm$, we increase the number of hops around each point. Consequently, we get more information regarding the connectivity of the point. } 
\label{fig:interpolator}
\end{figure}
The reconstructed signal $\hat{\fv}$ is a weighted linear combination of the columns of $[\sum_{l=0}^{\infty} (\Pm_{\Rc} \Zm)^{l}]$ and given that    
%Also, from the definition of $\Pm_{\Sc}$, 
$\Pm_{\Sc} \fv = \begin{bmatrix}
\fv_{\Sc}^{\top} & \zerov^{\top} \end{bmatrix}^{\top}$, \eqref{eqn:recon_soln_graph_filter} can be approximated by using only the first $p$ terms of the Neumann expansion: 
%$[\sum_{l=0}^{\infty} (\Pm_{\Rc} \Zm)^{l}]$. We denote such approximation by
\begin{equation}
    \label{eqn:approx_fhat}
    \hat{\fv}^{(p)} = \left(\sum_{l=0}^{p} (\Pm_{\Rc} \Zm)^{l} \right)  \begin{bmatrix}
        \fv_{\Sc} \\ \zerov
    \end{bmatrix}.
\end{equation}
Because  \eqref{eqn:approx_fhat} satisfies $\hat{\fv}^{(p)} = \Pm_{\Rc} \Zm \hat{\fv}^{(p-1)} + \Pm_{\Sc} \fv$, we have that $(\hat{\fv}^{(p)})_{\Sc} = \fv_{\Sc}$ thus we can focus on the reconstruction on $\Rc$:
\begin{equation}
\label{eqn:p_interpolation_R}
   \begin{bmatrix}
        \zerov \\ \hat{\fv}^{(p)}_{\Rc}
    \end{bmatrix} =  \Pm_{\Rc} \hat{\fv}^{(p)} = [\sum_{l=1}^{p} (\Pm_{\Rc} \Zm)^{l}]  \begin{bmatrix}
        \fv_{\Sc} \\ \zerov
    \end{bmatrix}.
\end{equation}
Based on this equation, we propose using the matrix
%Since $\Mm(\Sc)$ is a function of the sampling set, 
%
\begin{equation}
\label{eqn:interpolator_def}
    \Qm(p) :=  \sum_{l = 1}^{p} \Zm^{l}
\end{equation}
for sampling set selection. Specifically, we claim (see \autoref{subsec:obj_algo}) that for a value of $p$ consistent with the sampling rate (i.e., $p$ large if $|\Sc|$ is small), we have:
\begin{equation}
    \label{eqn:Q_approx}
    \Qm(p)_{\Vc \Sc} \approx \Mm(\Sc).
\end{equation}
%
%to the interpolator from \eqref{eqn:p_interpolation_R} for sampling. 
%
Since $\Qm(p)$ does not depend on $\Sc$, it can be used for sampling.  %Moreover, it is closely related to the reconstruction on $\Rc$ due to \eqref{eqn:p_interpolation_R}.
%
%We will discuss why eliminating $\Pm_{\Rc}$ from \eqref{eqn:p_interpolation_R} to define the interpolator in \eqref{eqn:interpolator_def} is logical in the next section while discussing the sampling objective.
%In what follows, we will discuss the properties of the proposed interpolating vectors.
%
In the spectral domain, filtering with $\Zm$ in \eqref{eqn:1_hop_operator} is given by:
\begin{equation}
    \yv = \Zm \xv =  \Um_{rw} (\Id - \frac{1}{2}\mathbf{\Lambda}) \Um_{rw}^{-1} \mathbf{x},
\end{equation}
thus $\Zm$ is a graph filter  with frequency response $ g(\lambda) = 1 - \frac{1}{2} \lambda$. 
We can also observe that since $\lambda \in [0, 2]$, $g(\lambda) \in [0,1]$.
Similarly, $\Qm(p) = \sum_{l=1}^{p}\Zm^{l}$ is a low-pass filter with frequency response $g_p(\lambda_{i}) = \sum_{l=1}^{p}(1 - \frac{1}{2} \lambda_{i})^l$. Thus, increasing $p$ leads to a sharper frequency response,  as depicted in \autoref{subfig:spectral_response_interpolator}.

The $i$th column of $\Qm(p)$ can be seen as a low-pass filtered delta function $\deltav_{i}$, localized at node $i$. For any $p$ we have:
\begin{equation}
 \label{eqn:p_order_approx}
\qv_{i}^{(p)} := \Qm(p) \deltav_{i} = (\Zm\deltav_{i} + \Zm^{2}\deltav_{i} + \Zm^{3}\deltav_{i} + \cdots + \Zm^{p}\boldsymbol{\delta_{i}}).
\end{equation}
The parameter $p$ defines the localization of the interpolating vectors $\qv_{i}^{(p)}$, i.e., 
when $p = 1$,  $\qv_{i}^{(1)} = \Zm  \boldsymbol{\delta}_{i}$ which results in non-zero values only in the 1-hop neighbors of $i$. Similarly $\qv_{i}^{(2)} = \Zm \deltav_{i} + \Zm^{2} \deltav_{i}$ contains non-zero values within 2-hops of $i$, as shown in \autoref{subfig:vertex_intrep_Z}. In general,    $\qv_{i}^{(p)}$ has non-zero values at nodes within the $p$-hop neighborhood of $i$. 
%are low-pass filtered signals which represent the interpolation capability of the point if sampled. 

%In what follows, we develop a sampling set selection algorithm whereby the samples are selected based on the interpolating vectors $\Qm(p) = [\qv_{1}^{(p)}, \qv_{2}^{(p)} \cdots \qv_{N}^{(p)}]$. 

\subsection{Objective function and sampling algorithm}
\label{subsec:obj_algo}
Starting from the approximation in \eqref{eqn:Q_approx}, we will obtain a sampling set by selecting the $s$ most ``informative''  interpolating vectors, i.e., a subset of column vectors of $\Qm(p)$.
%= [\qv_{1}^{(p)}, \qv_{2}^{(p)} \cdots \qv_{N}^{(p)}]$. 
%
%Each $\qv_{i}^{(p)}$ takes the observation at the $i^{th}$ vertex and produces estimated values at points in the $p$-hop neighborhood of $i$. 
From \cite{sakiyama2019_edfreesampling, jayawant2021_avm}, 
the sampling set should be selected such that the interpolators are close to orthogonal. This can be achieved if the overlap between the $p$-hop neighborhoods of two selected interpolating vectors $\qv_{i}^{(p)}$, $\qv_{j}^{(p)}$  $i, j \in \Sc, i \neq j$  is  minimized.
This problem can be approximated by maximizing the volume of parallelepiped formed by the vectors $\{\qv_{i}^{(p)}\}_{i \in \Sc}$, i.e., $\Qm(p)_{\Vc \Sc} $ \cite[Section 4]{jayawant2021_avm}, which is equivalent to \cite{jayawant2021_avm, peng2007determinant}   
\begin{equation}
    \label{eqn:objective_function}
    \Sc^{*} = \argmax_{\Sc : \vert \Sc \vert \leq s} \text{det} (\Qm(p)_{\Vc \Sc}^{\top} \Qm(p)_{\Vc \Sc}).
\end{equation}
We adopt the greedy algorithm proposed in \cite[Appendix D]{jayawant2021_avm} to approximately solve \eqref{eqn:objective_function}, whereby at each iteration, we add to the sampling set the point $x^{*}$ that solves:
\begin{equation}
    \label{eqn:greedy_update}
    x^{*} = \argmax_{x \in \Rc}  \norm{\qv_{x}^{(p)}}_{2}^{2} - \sum_{y \in \Sc} \frac{\langle \qv_{y}^{(p)}, \qv_{x}^{(p)} \rangle^{2}}{\norm{\qv_{y}^{(p)}}_{2}^{2}}.  
\end{equation}
By minimizing \eqref{eqn:greedy_update}, we ensure the interpolator norm $\norm{\qv_{x}^{(p)}}_{2}^{2}$ is large, while the overlap with the already selected points, i.e., $\sum_{y \in \Sc} \langle \qv_{y}^{(p)}, \qv_{x}^{(p)} \rangle^{2}/\norm{\qv_{y}}^{2}, \forall x \in \Rc, y \in \Sc$ is small. 
The proposed reconstruction aware global sampling (RAGS) is presented as \autoref{algo:fast_sampling}, and complexity is discussed in \autoref{subsec:complexity_analysis}.
%In each iteration, we update the sampling set so that the selected point has large $\norm{\qv_{x}^{(p)}}_{2}^{2}$, which is precomputed for every point and does not depend on $\Sc$, while the overlap with the already selected points, i.e., $\sum_{y \in \Sc} \langle \qv_{y}^{(p)}, \qv_{x}^{(p)} \rangle^{2}/\norm{\qv_{y}}^{2}, \forall x \in \Rc, y \in \Sc$ is small, which depends on $\Sc$. 
%Overall, the greedy update from \eqref{eqn:greedy_update} aims at selecting interpolators at sampled locations $\{\qv_{i}^{(p)}\}_{i \in \Sc}$ that are as orthogonal as possible.
%Our proposed method (see \autoref{algo:fast_sampling}) pre-computes the inner products $\langle \qv_{i}, \qv_{j} \rangle, \forall i,j \in \Vc$, and then uses the greedy update of \eqref{eqn:greedy_update}.

We now revisit the approximation proposed in \eqref{eqn:Q_approx} given that the sampling set is chosen according to \eqref{eqn:objective_function}.
%With the right choice of $p$, the sampling set $\Sc$ obtained by solving the objective in \eqref{eqn:objective_function}, ensures that $\Qm(p)_{\Vc \Sc}$ is approximately equal to the original interpolator matrix $\Mm(\Sc)$ \eqref{eqn:Q_approx}. 
%From \eqref{eqn:p_interpolation_R}, the $p^{th}$ order approximation of $\Mm(\Sc)$ is obtained by selecting the columns of $[\sum_{l=1}^{p} (\Pm_{\Rc} \Zm)^{l}]$. However, we select the columns of $\Qm(p) = [\sum_{l=1}^{p} \Zm^{l}]$, which is independent of the sampling set. 
The overlap between the selected interpolators is dominated by the support of the $p$th order terms of  $\Qm(p)_{\Vc \Sc}$. If we compare it with the $p$th order term in  $\Mm(\Sc)$ we get
\begin{equation}
\label{eqn:interp_error}
    \Em(p) = \Pm_{\Rc} \Zm^{p} \Pm_{\Sc} -  (\Pm_{\Rc} \Zm)^{p} \Pm_{\Sc}.
\end{equation}
The error $\Em(p)$ is zero when $p = 1$. In general, when the samples are far apart, i.e., no two samples in $\Sc$ have common nodes in their $p$-hop neighborhood, the terms $\Pm_{\Rc} \Zm^p \Pm_{\Sc}$ and $(\Pm_{\Rc} \Zm)^{p} \Pm_{\Sc}$ will have the same sparsity pattern leading to small approximation error in \eqref{eqn:interp_error}. 
Although a smaller $p$, in general, reduces the error in \eqref{eqn:interp_error}, it limits the neighborhood information for sampling. For example, for low sampling rates (small $\alpha$) and small $p$, many pairs of interpolators are nearly orthogonal so that two nodes that are relatively close can both be selected for sampling. This can lead to highly non-uniform sampling. Thus, for small $\alpha$, we need a higher value of $p$ for better signal reconstruction. 
%However,  if the samples are far apart and chosen according to \eqref{eqn:objective_function},  the interpolators will be nearly orthogonal, and the errors in \eqref{eqn:interp_error} can be expected to be smaller even for larger $p$.
%
% 

\subsection{Choosing the parameter \texorpdfstring{$p$}{TEXT} based on sampling rate \texorpdfstring{$\alpha$}{TEXT}}
\label{subsec:effect_of_p} 
\begin{figure}[t]
\centering
\begin{subfigure}[b]{0.20\textwidth}
         \centering
         \includegraphics[width=\textwidth]{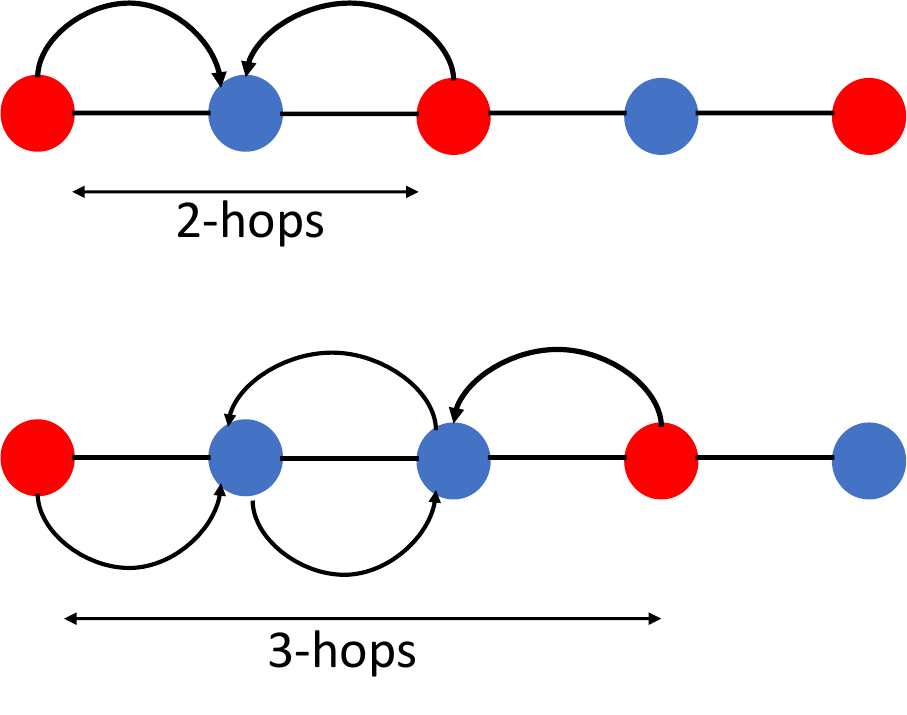}
         \caption{}
         \label{subfig:line_graph}
     \end{subfigure}
     \begin{subfigure}[b]{0.24\textwidth}
         \centering
         \includegraphics[width=\textwidth]{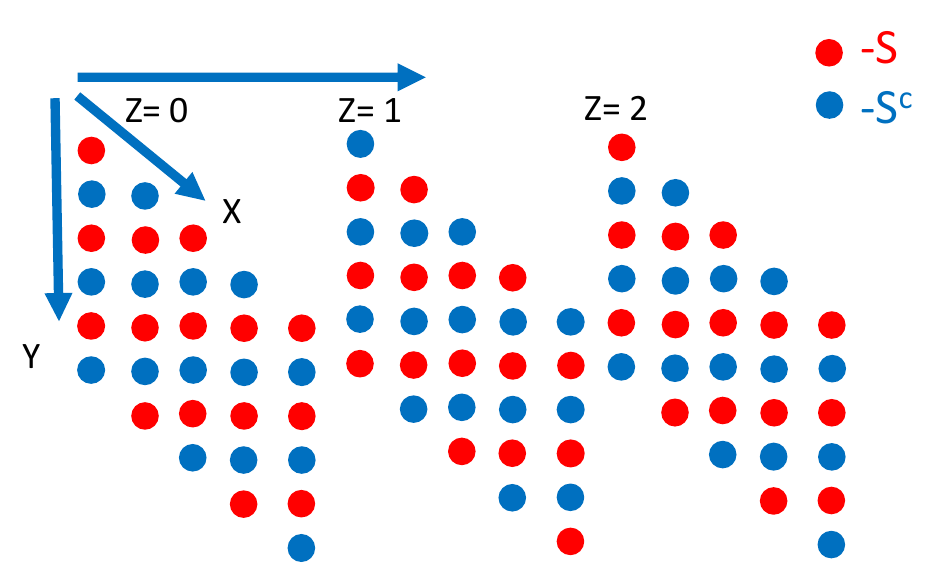}
         \caption{}
        \label{subfig:3d_grid}
     \end{subfigure}
     \caption{Illustration of interpolation when samples are selected uniformly on a line graph at rates $50\%$, $33\%$ (left) and uniform sampling on 3D grid graph at $50\%$ sampling rate (right)} 
\label{fig:choosing_p}
\end{figure}
 From previous discussions, and as will be confirmed by our experiments (\autoref{sec:experiments}), using a larger $p$ leads to sampling sets with better signal reconstruction.   
 However, complexity and runtime also increase with $p$ (see \autoref{subsec:complexity_analysis}). In this section, we investigate the problem of selecting $p$ to achieve a good trade-off between complexity and reconstruction accuracy. In particular, we propose to choose $p$ to be inversely proportional to the sampling rate $\alpha$.
 
To understand the interplay between $p$ and $\alpha$, consider the approximate reconstruction using interpolators in \eqref{eqn:approx_fhat}. Assuming $\Sc$ is known, the parameter $p$ in \eqref{eqn:approx_fhat} represents the minimum number of hops required to interpolate points in $\Rc$. When $\alpha$ is small, points in  $\Rc$ will be far away from sampled points, and thus a higher-order interpolator will be required. In general, the optimal   $p$  depends on  $\Sc$ and $\Zm$, and thus, it cannot be computed before sampling. In what follows, we propose two strategies to choose  $p$  for a given $\alpha$: 1) when $\Sc$ is obtained by uniform sampling on a regular grid graph, and 2) when $\Sc$ is obtained by random sampling on any arbitrary graph.
\subsubsection{Heuristic for uniform sampling of regular graphs}
Consider a uniform sampling set on a line graph (\autoref{subfig:line_graph}). 
When $\alpha=0.5$, at least 1 hop is required to interpolate the points in $\Rc$ from those in $\Sc$. But when $\alpha=0.33$, a minimum of 2 hops is required. This is true in the case of 3D grid graphs where samples are uniformly selected in each dimension as shown in \autoref{subfig:3d_grid}. 
Therefore, we propose selecting  $p$ as the smallest number of hops required for interpolation, that is,
\begin{equation}
    \label{eqn:p_estimate_uni}
    p_{\text{uni}} = \ceil*{\frac{1}{2\alpha}},
\end{equation}
where $\ceil{x}$ is the smallest integer satisfying $x\leq \ceil{x}$.
This choice of $p$  is a good heuristic for dense PCs representing a smooth surface, such as those used in AR/VR applications. %\footnote{This is true for PCs used in AR/VR, which have uniform point density. On the other hand, points in LiDAR PCs are non-uniformly scattered.}.
\begin{figure}[t]
     \centering
         \includegraphics[width=0.35\textwidth]{./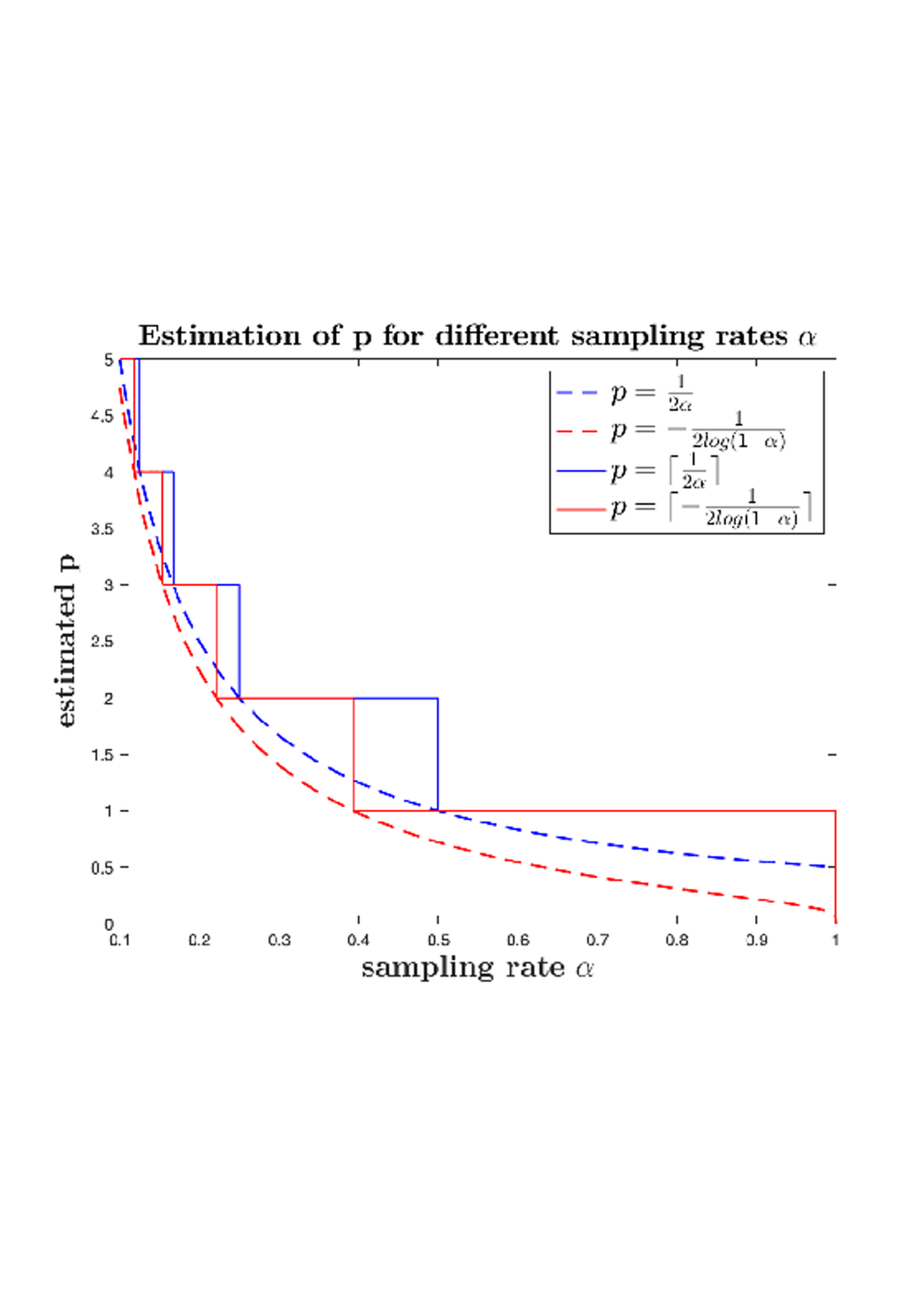}
        \caption{Proposed $p$ values from \eqref{eqn:p_estimate_uni} and \eqref{eqn:p_estimate_rand}.}
 \label{fig:p_vs_alpha}
\end{figure} 

\subsubsection{Choosing \texorpdfstring{$p$}{TEXT} based on random sampling}
We propose a method to choose $p$,  that is valid for arbitrary graphs under a random sampling strategy. In random sampling,  a node is added to the sampling set with probability $\alpha$. 
%In random sampling, a  node is added to the sampling set with probability $\alpha$, resulting in a sampling set with expected size $\EE[\vert \Sc \vert] = \alpha N$. 
%

We  study the error of the  approximately reconstructed signal $\hat{\fv}^{(p)}$ from \eqref{eqn:approx_fhat} with respect to  $\fv$. We show that part of this error depends only on $p$ and $\alpha$ (\autoref{lemma:error_p}). Later we show that by setting $p$ as a function of $\alpha$, we can control this error to any given tolerance (\autoref{thrm:selectin_p}), and thus a good trade-off between complexity (small $p$) and reconstruction accuracy can be achieved. Based on these results, we propose using
\begin{equation}
    \label{eqn:p_estimate_rand}
    p_{\text{rand}} =\ceil*{-\frac{1}{2\log(1- \alpha)}}.
\end{equation}
In the rest of this sub-section, we describe the mathematical derivation used to arrive at \eqref{eqn:p_estimate_rand} (see Appendix for proofs).
\begin{lemma}\label{lemma:error_p}
    If each node $i \in \Vc$ is assigned independently to the sampling set $\Sc$ with probability $\alpha$, then 
    \begin{align}
        &\EE\left[\frac{\Vert \hat{\fv}^{(p)} - \fv\Vert}{\Vert \fv \Vert} \right] \leq \EE\Vert \Pm_{\Rc}\Zm\Vert^p + \\
        &\left(1+\left(\EE\Vert \Pm_{\Rc}\Zm\Vert^{2p}\right)^{1/2}\right)\left(\EE\left[\frac{\Vert \hat{\fv} - \fv\Vert^2}{\Vert \fv \Vert^2} \right]\right)^{1/2},
    \end{align}
    where the expectation is taken with respect to $\Sc$.
\end{lemma}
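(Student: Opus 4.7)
The plan is to split the error into a Neumann-truncation term and the intrinsic GLR reconstruction error, bound each separately via operator submultiplicativity, and then handle the resulting cross term using Cauchy--Schwarz and Jensen's inequality when the expectation over the random sampling set is applied.

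First, I would invoke \autoref{lemma_equivalent_interpolators} to write $\hat{\fv} = \sum_{l=0}^{\infty}(\Pm_{\Rc}\Zm)^l \Pm_{\Sc}\fv$, so that $\hat{\fv}^{(p)}$ in \eqref{eqn:approx_fhat} is the truncation after $p+1$ terms. Subtracting gives the telescoped identity
\[
\hat{\fv}^{(p)} - \hat{\fv} = -\sum_{l=p+1}^{\infty}(\Pm_{\Rc}\Zm)^l \Pm_{\Sc}\fv = -(\Pm_{\Rc}\Zm)^{p+1}\hat{\fv}.
\]
The key algebraic step is to use the fixed-point relation $\hat{\fv} = \Pm_{\Rc}\Zm\hat{\fv} + \Pm_{\Sc}\fv$, which implies $\Pm_{\Rc}\Zm\hat{\fv} = \Pm_{\Rc}\hat{\fv}$, so that
\[
\hat{\fv}^{(p)} - \hat{\fv} = -(\Pm_{\Rc}\Zm)^p \Pm_{\Rc}\hat{\fv}.
\]
Submultiplicativity then yields $\Vert\hat{\fv}^{(p)} - \hat{\fv}\Vert \leq \Vert\Pm_{\Rc}\Zm\Vert^p \Vert\hat{\fv}\Vert$. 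This rewriting is what produces the exponent $p$ rather than $p+1$ in the final bound, and it is the only nontrivial algebraic manipulation in the argument.

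Next, I bound $\Vert\hat{\fv}\Vert \leq \Vert\hat{\fv} - \fv\Vert + \Vert\fv\Vert$ and combine with the triangle inequality $\Vert\hat{\fv}^{(p)} - \fv\Vert \leq \Vert\hat{\fv}^{(p)} - \hat{\fv}\Vert + \Vert\hat{\fv} - \fv\Vert$; dividing through by $\Vert\fv\Vert$ produces the pointwise (in $\Sc$) estimate
\[
\frac{\Vert\hat{\fv}^{(p)} - \fv\Vert}{\Vert\fv\Vert} \leq \Vert\Pm_{\Rc}\Zm\Vert^p + \bigl(1 + \Vert\Pm_{\Rc}\Zm\Vert^p\bigr)\frac{\Vert\hat{\fv} - \fv\Vert}{\Vert\fv\Vert}.
\]
Taking expectation over $\Sc$, I apply Cauchy--Schwarz to the product term so that $\EE\bigl[\Vert\Pm_{\Rc}\Zm\Vert^p \,\Vert\hat{\fv}-\fv\Vert/\Vert\fv\Vert\bigr]$ factors as $(\EE\Vert\Pm_{\Rc}\Zm\Vert^{2p})^{1/2}(\EE[\Vert\hat{\fv}-\fv\Vert^2/\Vert\fv\Vert^2])^{1/2}$, and I apply Jensen's inequality to the remaining linear error term to upgrade it to the squared quantity appearing on the right-hand side. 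Collecting the two pieces reproduces the stated inequality exactly.

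The step requiring the most care is justifying that the Neumann representation of $\hat{\fv}$ is legitimate, which relies on $\Vert\Pm_{\Rc}\Zm\Vert < 1$ (or at least that the series converges on the range of $\Pm_{\Sc}$). Since $\Zm = \Id - \tfrac{1}{2}\Lm_{rw}$ has spectrum in $[0,1]$ with the unit eigenvalue carried by the constant vector, and $\Pm_{\Rc}$ removes coordinates on $\Sc$, this spectral bound holds whenever $\Sc \neq \emptyset$, which is almost sure under the Bernoulli$(\alpha)$ sampling model for any $\alpha>0$. Notably, no smoothness or bandlimited assumption on $\fv$ is needed, which is crucial for using the resulting inequality later in \autoref{thrm:selectin_p} to choose $p$ as a function of $\alpha$ alone.
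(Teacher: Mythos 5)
Your proof is correct and follows essentially the same route as the paper's: a triangle-inequality split into the truncation error and the intrinsic reconstruction error, a contraction bound of the form $\Vert\hat{\fv}^{(p)}-\hat{\fv}\Vert \leq \Vert\Pm_{\Rc}\Zm\Vert^{p}\left(\Vert\fv\Vert+\Vert\hat{\fv}-\fv\Vert\right)$, and then Cauchy--Schwarz and Jensen after taking the expectation over $\Sc$. The only (immaterial) difference is that you obtain the exponent $p$ via the Neumann tail together with the identity $\Pm_{\Rc}\Zm\hat{\fv}=\Pm_{\Rc}\hat{\fv}$, whereas the paper telescopes the recursion to $\hat{\fv}^{(p)}-\hat{\fv}=(\Pm_{\Rc}\Zm)^{p}(\hat{\fv}^{(0)}-\hat{\fv})$ and bounds $\Vert\hat{\fv}^{(0)}-\hat{\fv}\Vert$ instead of $\Vert\hat{\fv}\Vert$, which yields the identical final inequality.
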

% \begin{proof}
%     Refer to Appendix \ref{app:proof_approx_p} for proof.
% \end{proof}
The first term depends on $p$ and $\alpha$ and the norm $\Vert \Pm_{\Rc}\Zm\Vert^p$ converges to zero as $p\rightarrow \infty$ (due to \autoref{lemma:non_expansive}). In addition since $\lim_{p \rightarrow \infty}\hat{\fv}^{(p)} = \hat{\fv}$, the second term is dominated by $\EE\left[{\Vert \hat{\fv} - \fv\Vert^2}/{\Vert \fv \Vert^2} \right]$, which is independent of $p$ and represents how good the signal reconstruction with \eqref{eqn:recon_soln} is, on average, for random sampling sets of expected size $\EE[\vert \Sc \vert] = \alpha N$.  
 The trade-off between $p$ and the sampling rate $\alpha$ is characterized by $\EE[\Vert \Pm_{\Rc}\Zm \Vert^p]$ in the next theorem.
\begin{theorem}
\label{thrm:selectin_p}
    If $p \geq 2$ and the sampling rate is $\alpha$, then
    \begin{equation}
        \sqrt{1-\alpha} \leq \left(\EE\Vert \Pm_{\Rc}\Zm\Vert^p \right)^{\frac{1}{p}}\leq \sqrt{1-\alpha} + \sqrt{\sigma_{\frac{p}{2}}(\alpha)},
    \end{equation}
    where 
    \begin{equation}
     \sigma_{\frac{p}{2}}^{\frac{p}{2}}(\alpha)   = \EE\Vert \Zm^{\top}\Pm_{\Rc}\Zm- (1-\alpha)\Zm^{\top}\Zm \Vert^{\frac{p}{2}}
    \end{equation}
    is the centered moment of order $p/2$, since $\EE[\Pm_{\Rc}] = (1-\alpha)\Id$.
\end{theorem}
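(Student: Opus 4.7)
The plan is to decouple the randomness of $\Pm_{\Rc}$ from the deterministic operator $\Zm$ by exploiting idempotence. Because $\Pm_{\Rc}$ is a diagonal $0/1$ matrix, $\Pm_{\Rc}^{\top}\Pm_{\Rc}=\Pm_{\Rc}^{2}=\Pm_{\Rc}$, so
\begin{equation*}
\Vert \Pm_{\Rc}\Zm\Vert^{p}=\Vert \Zm^{\top}\Pm_{\Rc}\Zm\Vert^{p/2}.
\end{equation*}
Since the diagonal entries of $\Pm_{\Rc}$ are independent Bernoulli$(1-\alpha)$, $\EE[\Pm_{\Rc}]=(1-\alpha)\Id$, hence $\EE[\Zm^{\top}\Pm_{\Rc}\Zm]=(1-\alpha)\Zm^{\top}\Zm$. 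This identity is what makes the recentering in the definition of $\sigma_{p/2}(\alpha)$ the right object to control.

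\textbf{Lower bound.} For $p\geq 2$, the map $\Mm\mapsto \Vert \Mm\Vert^{p/2}$ is convex on the space of matrices, so Jensen's inequality applied to the random matrix $\Zm^{\top}\Pm_{\Rc}\Zm$ yields
\begin{equation*}
\EE\Vert \Zm^{\top}\Pm_{\Rc}\Zm\Vert^{p/2}\geq \bigl\Vert\EE[\Zm^{\top}\Pm_{\Rc}\Zm]\bigr\Vert^{p/2}=(1-\alpha)^{p/2}\Vert \Zm\Vert^{p}.
\end{equation*}
Because the constant vector $\onev$ is a fixed point of the row-stochastic smoother $\Dm^{-1}\Am$, and hence of $\Zm$, we have $\Vert \Zm\Vert\geq 1$. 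Taking $p$-th roots delivers the claimed lower bound $\sqrt{1-\alpha}$.

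\textbf{Upper bound.} I would start from the decentered triangle inequality
\begin{equation*}
\Vert \Zm^{\top}\Pm_{\Rc}\Zm\Vert\leq \Vert \Zm^{\top}\Pm_{\Rc}\Zm-(1-\alpha)\Zm^{\top}\Zm\Vert+(1-\alpha)\Vert \Zm\Vert^{2},
\end{equation*}
using $\Vert \Zm\Vert\leq 1$ (the non-expansiveness of $\Zm$ that is already invoked when the Neumann series for $(\Id-\Pm_{\Rc}\Zm)^{-1}$ was expanded). Raising to the $(p/2)$-th power, taking expectation, and applying Minkowski's inequality in $L^{p/2}(\Prob)$ (valid since $p/2\geq 1$) to the sum of the random centered norm and the deterministic constant $1-\alpha$ gives
\begin{equation*}
\bigl(\EE\Vert \Zm^{\top}\Pm_{\Rc}\Zm\Vert^{p/2}\bigr)^{2/p}\leq \sigma_{p/2}(\alpha)+(1-\alpha).
\end{equation*}
Taking square roots and applying subadditivity $\sqrt{a+b}\leq \sqrt{a}+\sqrt{b}$ produces the stated bound $\sqrt{1-\alpha}+\sqrt{\sigma_{p/2}(\alpha)}$.

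\textbf{Main obstacle.} The only step that requires care is the deterministic bound $\Vert \Zm\Vert\leq 1$ used both to get the clean constant $1-\alpha$ in the upper bound and to make the proof go through without stray spectral factors. In the standard Euclidean norm this is not automatic for non-regular graphs, since $\Zm=\tfrac{1}{2}(\Id+\Dm^{-1}\Am)$ is then asymmetric and can have $\ell_{2}$-operator norm strictly larger than one. The natural remedy, consistent with the earlier \autoref{lemma:non_expansive}, is to interpret $\Vert\cdot\Vert$ throughout as the operator norm induced by the $\Dm$-inner product, under which $\Zm$ is self-adjoint with spectrum in $[0,1]$ and $\Pm_{\Rc}$ remains a projection, so every step above is unchanged; alternatively one absorbs an extra $\Vert\Zm\Vert$ factor and the conclusion reads $\sqrt{1-\alpha}\,\Vert\Zm\Vert+\sqrt{\sigma_{p/2}(\alpha)}$.
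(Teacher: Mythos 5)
Your proof follows essentially the same route as the paper's: the idempotence identity $\Vert \Pm_{\Rc}\Zm\Vert^{2}=\Vert \Zm^{\top}\Pm_{\Rc}\Zm\Vert$, Jensen's inequality for the lower bound, and the triangle inequality plus Minkowski in $L^{p/2}$ plus $\sqrt{a+b}\leq\sqrt{a}+\sqrt{b}$ (the paper writes it as $a+b\leq(\sqrt{a}+\sqrt{b})^{2}$) for the upper bound. Your closing remark about $\Vert\Zm\Vert$ is a fair observation --- the paper silently uses $\Vert\Zm\Vert=1$ in both bounds via \autoref{lemma:non_expansive}, and your suggestion to read the norm as the one induced by the $\Dm$-inner product (under which $\Zm$ is self-adjoint with spectrum in $[0,1]$ and $\Pm_{\Rc}$ is still an orthogonal projection) is the cleanest way to make that step airtight for non-regular graphs.
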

% \begin{proof}
%     Refer to Appendix \ref{app:proof_approx_p} for the proof
% \end{proof}
This result shows that $\EE\Vert \Pm_{\Rc}\Zm\Vert^p$ concentrates around $\left(\sqrt{1-\alpha}\right)^p$. We choose $p$ as the smallest value that  ensures $\left(\sqrt{1-\alpha}\right)^p \leq \delta$ for some $\delta>0$. This can be obtained if
\begin{equation}\label{eqn:p_upper_bound_rand}
    p \geq  {2\log({\delta})}/{\log(1 - \alpha)}.
\end{equation}
Our proposed practical rule in \eqref{eqn:p_estimate_rand} is obtained after setting $\log(1/\delta)=1/4$ and rounding up  \eqref{eqn:p_upper_bound_rand} to an integer.
The choice $\log(1/\delta)=1/4$ is justified by  following inequality due to Taylor's series approximation
%\subsubsection{Comparison between \texorpdfstring{$p_{\text{uni}}$}{TEXT} and \texorpdfstring{$p_{\text{rand}}$}{TEXT}}
%From the Taylor series approximation of the logarithm, 
\begin{equation}\label{eq_taylor_approx_p}
    -\frac{1}{2\log(1- \alpha)} =  \frac{1}{2 (\alpha + \frac{\alpha^2}{2} +\frac{\alpha^3}{3}+ \cdots)} \leq \frac{1}{2 \alpha},
\end{equation}
which  implies that for any $\alpha$ we have
\begin{equation}
    p_{\text{rand}} \leq p_{\text{uni}}.
\end{equation}
The comparison between  \eqref{eqn:p_estimate_uni} and \eqref{eqn:p_estimate_rand} for different sampling rates $\alpha$ is shown in \autoref{fig:p_vs_alpha}.  
When $\alpha$ is high ($0.3 - 0.5$),   $p$ will be $1$ or $2$, and the interpolators will be extremely localized and sparse, which can be exploited for efficient inner product computation in \hyperref[algo:fast_sampling]{RAGS}, thus allowing us to sample efficiently at higher rates.

In the next section, we introduce our block-based sampling algorithm. We also include a complexity analysis of both algorithms, which illustrates the effect of $p$ on complexity.

\section{Block-based PC subsampling using sub-graphs with self-loops}
\label{sec:block_pcsampl}

 So far, we have developed a sampling algorithm (\hyperref[algo:fast_sampling]{RAGS}) that acts on the whole point cloud. As will be discussed in \autoref{subsec:complexity_analysis}, the complexity of computing the interpolating vectors and sampling set selection scale significantly with the number of points $N$, even if graph sparsity is exploited. 
 We propose to employ a block-based sampling approach, reconstruction-aware block sampling (\hyperref[algo:block_sampling]{RABS}) to further reduce the complexity of \hyperref[algo:fast_sampling]{RAGS} when $N$ is large (e.g., $N > 10^6$).In  We show that  localizing the interpolating vectors to each block,  
 reduces their dimension, 
 and thus the overall complexity. 
 By performing sampling independently in each block, we also reduce the number of comparisons. 
Note that during block partitioning, the graph edges across the block boundaries are removed. 
To account for this loss of information while computing the block-wise interpolators, we propose adding self-loops (node weights) to the points along block boundaries. 
 
 \subsection{Problem formulation}
 \label{subsec:block_problem_forml}
We wish to obtain local sampling sets $\{\Sc_{j}\}_{j=1}^{L}$ on subgraphs $\{\Gc_{j}\}_{j=1}^{L}$ such that $\bigcup_{j = 1}^{L}\Sc_{j}$ is as close as possible to the global sampling set $\Sc^{*}$ obtained for the entire graph $\Gc$. 
The point set $\Pc$ is partitioned into $L$ non overlapping blocks (subsets of points) denoted by $\Bc_{1}, \Bc_{2}, \dots \Bc_{L}$ such that $\Pc = \bigcup_{j=1}^{L} \mathcal{B}_{j}$, with $\abs{\mathcal{B}_{j}} = b_{j},  \forall j$. The corresponding vertex sets are denoted by $\Vc_{1}, \Vc_{2}, \dots \Vc_{L}$. We sample $s_i$ points from the $i$th block such that $s_i/b_i \leq \alpha$, thus keeping the global sampling rate below $\alpha$. The parameter $p$ is the same for all blocks.
 
 We first consider the case of two blocks $\mathcal{B}_{1}$ and $\mathcal{B}_{2}$ and show the modifications that need to be made to \hyperref[algo:fast_sampling]{RAGS} for block-wise sampling. At the end of this section, we extend our solution to $L> 2$ blocks. 
 
 %Consider a simple case where the vertex set $\Pc$ is divided into two blocks $\mathcal{B}_{1}$ and $\mathcal{B}_{2}$. 
%To make explicit the dependency on the graph, 
We denote $\Qm^{\Gc}(p)$ the matrix of interpolating vectors for graph $\Gc$. 
%, which are required to optimize the sampling set by solving \eqref{eqn:objective_function}. 
Consider sub-graphs $\Gc_1, \Gc_2$, with corresponding interpolator matrices $\Qm^{\Gc_1}(p)$ and $\Qm^{\Gc_2}(p)$, which will be explicitely determined later in this section. Then, rewriting the objective \eqref{eqn:objective_function} for $m \in \{1, 2\}$, $\Sc_m$ is the solution to 
%\begin{equation}
\begin{align}\label{eq_optimization_2blocks}
    \max_{\Sc_{m} : \vert \Sc_{m} \vert \leq s_{m}} \det \left((\Qm^{\Gc_m}(p))_{\Vc_m \Sc_m}^{\top} (\Qm^{\Gc_m}(p))_{\Vc_m \Sc_m}\right). 
\end{align}
%\end{equation}
%where $\Qm^{\Gc_m}(p)$ is the interpolator matrix associated to a subgraph $\Gc_{m}$.  
The number of samples $s_{m}$ is chosen so that $s_m/b_m \leq \alpha$ and   $s_{1}+s_{2}=s$. 
%We set $\Sc_{1} \cup \Sc_{2}$.
%\eqref{eq_optimization_2blocks} can be solved using. 
%Let $\Gc = (\Vc, \Ec)$ be the graph constructed on $\Pc$ and $\mathcal{G}_{1} = (\Vc_{1}, \Ec_{1})$, $\mathcal{G}_{2} = (\Vc_{2}, \Ec_{2})$ be the sub-graphs corresponding to blocks $\Bc_{1}$, $\Bc_{2}$ respectively.
%
While independent sampling using subgraphs $\Gc_{1}$ and $\Gc_{2}$ can reduce overall computational complexity, in general,  $\Sc_{1} \cup \Sc_{2} \neq \Sc^{*}$ if the interpolators and sub-graphs are computed naively. In particular, if the subgraphs are obtained by removing connections across block boundaries, \eqref{eq_optimization_2blocks} tends to oversample points near the block boundaries. A comparison of global sampling with this naive approach is depicted in  \autoref{subfig:global_sampling} and \autoref{subfig:local_sampling}.

\begin{figure*}[t]
     \centering
        \begin{subfigure}[b]{0.30\textwidth}
            \centering
            \includegraphics[width=\linewidth]{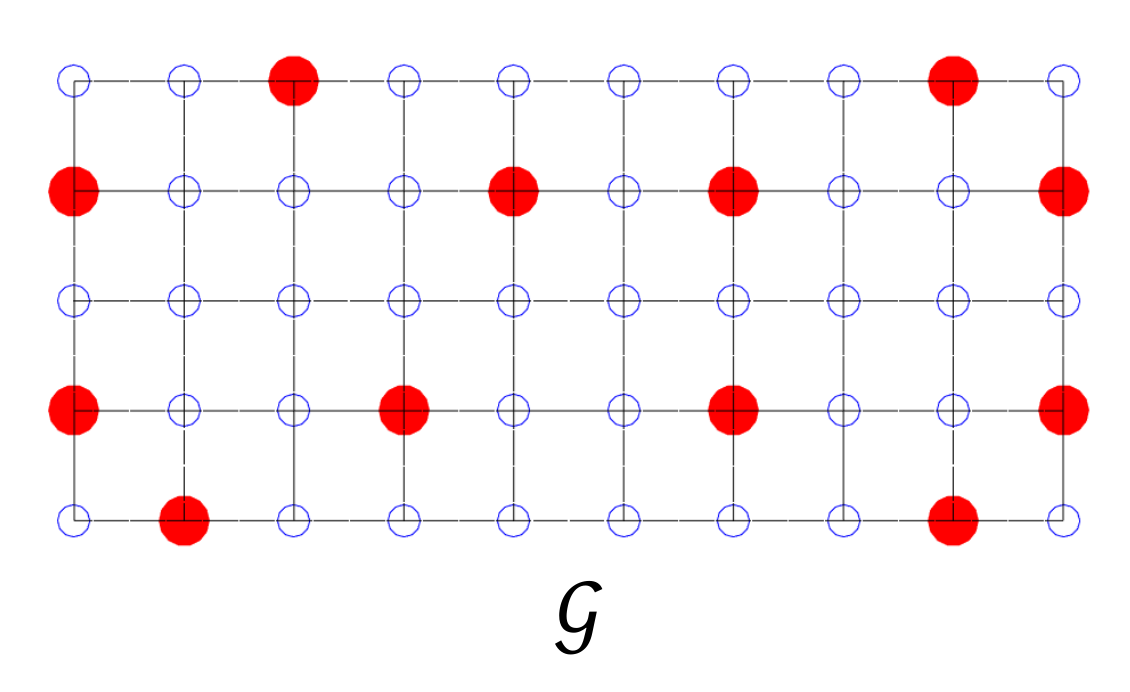}
            \caption{$\Sc^{*}$}
            \label{subfig:global_sampling}
        \end{subfigure}
     \begin{subfigure}[b]{0.30\textwidth}
         \centering
         \includegraphics[width=\linewidth]{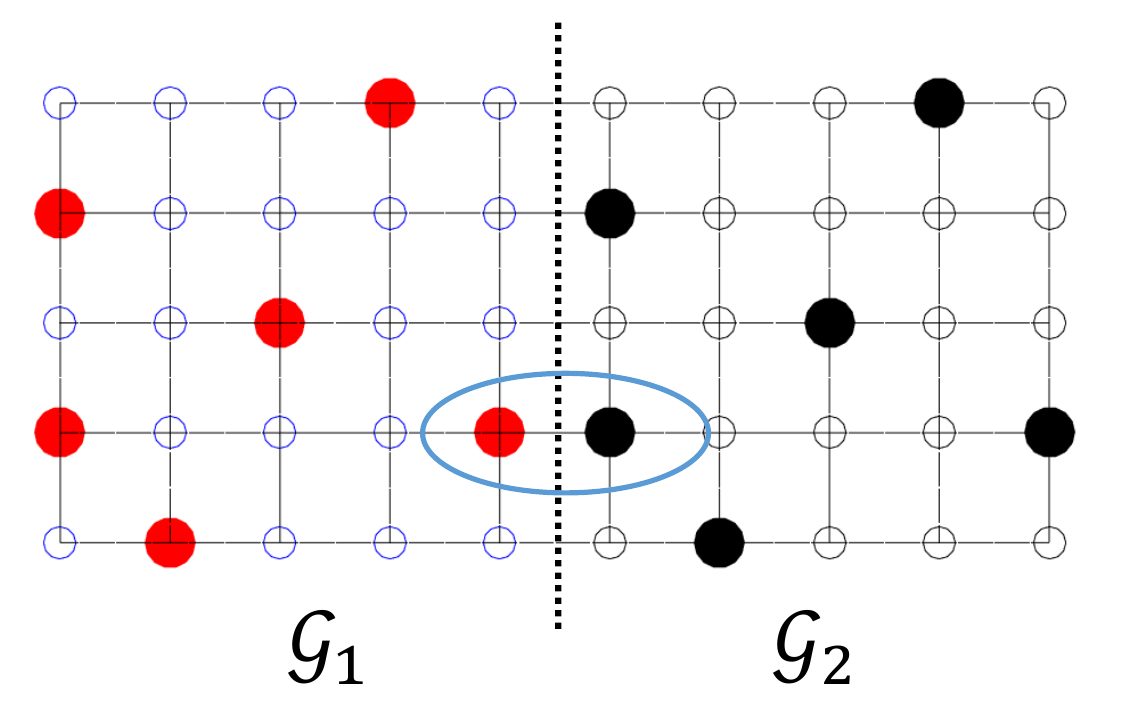}
         \caption{$\Sc_{1} \cup \Sc_{2}$}
         \label{subfig:local_sampling}
     \end{subfigure}
      \begin{subfigure}[b]{0.30\textwidth}
         \centering
         \includegraphics[width=\linewidth]{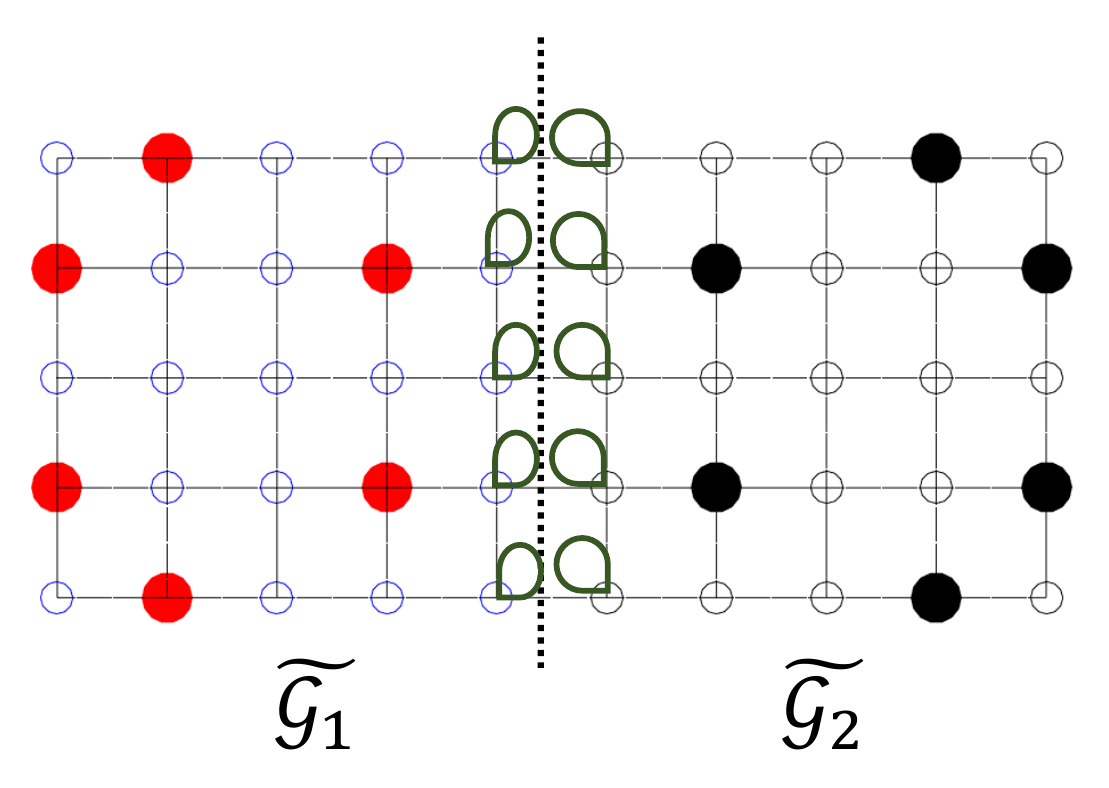}
         \caption{$\Tilde{\Sc_{1}} \cup \Tilde{\Sc_{2}}$}
         \label{subfig:local_sampling_self}
     \end{subfigure}
\caption{Comparison between sampling set obtained when the original graph $\mathcal{G}$ is sampled, sampling set obtained by sampling sub-graphs independently vs sampling set obtained by sampling modified sub-graphs with self-loops independently. Sampling sets in all three cases are obtained with $p = 3$.}
\label{fig:global_vs_local}
\end{figure*}

\subsection{Computing local interpolating vectors on sub-graphs}

A solution to the poor sampling caused by block-wise graphs would be to truncate the interpolators computed based on $\Gc$ so that they only have values in $\Gc_m$,  i.e., 
\begin{equation}
\label{eqn:local_global_intpl}
    \begin{aligned}
         \Qm^{\Gc_m}(p) = (\Qm^{\Gc}(p))_{\Vc_m \Vc_m},
    \end{aligned}
\end{equation}
for $m=1,2$. 
However, this still requires computing the global interpolating vectors, which is computationally expensive. 

Instead, we propose to identify modified block-wise graphs, $\Tilde{\Gc_{1}}$, whose local interpolators approximate the truncated interpolator of  \eqref{eqn:local_global_intpl}, i.e., such that 
\begin{equation}
\label{eqn:modified_graph_opt}
     \min_{\Tilde{\Gc}_{1}}\Vert  \Qm^{\Tilde{\Gc}_{1}}(p) -(\Qm^{\Gc}(p))_{\Vc_1 \Vc_1} \Vert_F^2.
\end{equation}
From \eqref{eqn:interpolator_def}, we can equivalently write  \eqref{eqn:modified_graph_opt} in-terms of  $\Zm$ as
\begin{align}
\label{eqn:frob_norm_obj}
   %\Tilde{\Zm}_{\Tilde{\Gc}_{1}} = \argmin_{\Tilde{\Zm}_{\Tilde{\Gc}_{1}}} 
   J(\Tilde{\Zm}_{\Tilde{\Gc}_{1}}) = 
   \bignorm{ \sum_{l = 1}^{p} \left(\begin{bmatrix}
            \Zm_{\Vc_{1} \Vc_{1}} &  \Zm_{\Vc_{1} \Vc_{2}}\\
            \Zm_{\Vc_{2} \Vc_{1}} &  \Zm_{\Vc_{2} \Vc{2}}
        \end{bmatrix}^{l}\right)_{\Vc_1 \Vc_1} - \sum_{l = 1}^{p} (\Tilde{\Zm}_{\Tilde{\Gc}_{1}})^{l}}_{F}^{2}.
\end{align}
When $p = 1$,  it is easy to notice that minimizing
 \eqref{eqn:frob_norm_obj} with respect to graph operator $\Tilde{\Zm}_{\Tilde{\Gc}_{1}}$ for the graph $\Tilde{\Gc}_{1}$ will result in,
\begin{equation*}
    \Tilde{\Zm}_{\Tilde{\Gc}_{1}}= \Zm_{\Vc_{1} \Vc_{1}} 
\end{equation*}
But, $\Zm_{\Vc_{1} \Vc_{1}} = \frac{1}{2}(\Id + \Dm_{\Vc_{1} \Vc_{1}}^{-1} \Am_{\Vc_{1} \Vc_{1}})$, where $\Am_{\Vc_{1} \Vc_{1}}$ is the adjacency matrix of subgraph $\Gc_{1}$. The degree corresponding to the points $i \in \Vc_{1}$ according to $\Dm_{\Vc_{1} \Vc_{1}}$ can be written as,
\begin{equation}
    [\Dm_{\Vc_{1} \Vc_{1}} ]_{ii}= 
\begin{dcases}
    \sum_{j \in \Vc_{1}} w_{ij} + \sum_{k \in \Vc_{2}} w_{ik}, & \text{if }  \Nc(i) \subset \Vc_{2} \\
    \sum_{j \in \Vc_{1}} w_{ij},              & \text{otherwise},
\end{dcases}
\end{equation}
where $\sum_{k \in \Vc_{2}} w_{ik}$ is the node weight (self-loop weight) on the point that shares edges with the neighboring blocks.
Therefore, the optimal sub-graph $\Tilde{\Gc}_{1}$ when $p=1$ is the sub-graph $\Gc_{1}$ with node set $\Vc_1$ and weighted self-loops (node weights) given by,
\begin{equation}
\label{eqn:self_loop_weight}
    \Phim_{ii}  = \sum_{k \in \Vc_{2}} w_{ik}.
\end{equation}
We can write the interpolating vectors for $p=1$ with respect to the optimized sub-graph $\Tilde{\Gc}_{1}$ as,
\begin{equation}
    \label{eqn:interpolating_vec_p1}
   \Qm^{\Tilde{\Gc}_{1}}(1) = \frac{1}{2}(\Id + (\Dm_{\Gc_{1}} + \Phim_{\Gc_{1}})^{-1} \Am_{\Gc_{1}}),  
\end{equation}
where $\Am_{\Gc_{1}} = \Am_{\Vc_{1} \Vc_{1}}$ and $\Dm_{\Gc_{1}}$ are the adjacency matrix and degree matrix of sub-graph $\Gc_{1}$ respectively. $\Phim_{\Gc_{1}}$ is the diagonal self-loop matrix with weights given in \eqref{eqn:self_loop_weight}. Thus, when $p=1$, minimizing the objective in \eqref{eqn:frob_norm_obj} does not require computing the interpolating vectors of the entire graph $\Gc$. Instead, we can compute vectors independently on the sub-graphs of $\Gc$ using \eqref{eqn:interpolating_vec_p1} while satisfying \eqref{eqn:local_global_intpl}. 
However, for $p > 1$, the truncation in \eqref{eqn:local_global_intpl} requires calculating interpolating vectors of the entire graph $\Gc$. To circumvent this, we  obtain the sub-graph interpolators recursively from \eqref{eqn:interpolating_vec_p1} using
\begin{equation}
\label{eqn:interp_vec_p}
    \Qm^{\Tilde{\Gc}_{1}}(p) =  \Zm_{\Vc_{1} \Vc_{1}}\Qm^{\Tilde{\Gc}_{1}}(p-1) + \Qm^{\Tilde{\Gc}_{1}}(1) = \sum_{l=1}^{p} (\Zm_{\Vc_{1} \Vc_{1}})^{l}, 
\end{equation}
 where $ \Qm^{\Tilde{\Gc}_{1}}(1) = \Zm_{\Vc_{1} \Vc_{1}}$
%  $\Zm_{\Vc_{1} \Vc_{1}} = \Qm^{\Tilde{\Gc}_{1}}(1)$
 from \eqref{eqn:interpolating_vec_p1}. It is important to note that while \eqref{eqn:interp_vec_p} is not an optimal solution to \eqref{eqn:modified_graph_opt} for $p>1$, it is a computationally efficient approximation.  The same steps can be repeated to obtain interpolators for the sub-graph on $\Vc_2$.

The effect of self-loops in block-based sampling is demonstrated through a toy example in \autoref{fig:global_vs_local}. \autoref{subfig:global_sampling}  shows  $\Sc^{*}$ obtained from sampling the entire graph using RAGS with $p=3$. \autoref{subfig:local_sampling} shows the results of sampling sub-graphs independently without self-loops while
%We can observe the boundary points being redundantly sampled. 
\autoref{subfig:local_sampling_self} shows the results of proposed RABS with self-loops. We can observe that $\Tilde{\Sc}_{1} \cup \Tilde{\Sc}_{2}$ is closer than $\Sc_{1} \cup \Sc_{2}$ to $\Sc^{*}$.

Next, we extend our approach to   $L>2$ blocks by leveraging PC partitioning with octrees \cite{jackins1980_octtrees}. 
Initially, a $\Pc$ is divided across the $x$, $y$, and $z$ axes,  
resulting in a total of $8$ subsets (PC blocks) after one level of octree partitioning (refer to \cite{pavez2018_polygoncloudcompr}).  This process is repeated recursively for non-empty subsets until the desired resolution level (or block size) is reached.
%The self-loop weight in \eqref{eqn:self_loop_weight} can be computed recursively whenever a block is divided into 2 sub-blocks. 
If there are $L$ blocks after the octree sub-division, then the self-loop weights for points in block $\Bc_{m}$ can be calculated by simply considering the edges across the block boundaries, i.e.,
$\Phim_{ii} = \sum_{k \in \Nc(\Bc_{m})} w_{ik}$, where $i \in \Bc_m$ and $\Nc(\Bc_{m})$ are the points in the neighboring blocks. Therefore, for each block, we find the boundary points that share edges with the points in neighboring blocks and compute the edge weight for each point using \eqref{eqn:self_loop_weight}. We can obtain self-loop matrix for the block $\Bc_{m}$ by a sparse matrix-vector product given by,

\begin{equation}
    \label{eqn:self_loop_matrix}
    \Phim_{\Gc_{m}} =  \diag (\Am_{m} \cv^{(m)}),
\end{equation}
where $\Am_{m} \in \mathbb{R}^{b_{m} \times N}$ is constructed by keeping the rows indexed by $\Vc_{m}$ of the adjacency matrix of  $\Gc$. $\cv^{(m)} \in \mathbb{R}^{N}$ is a vector defined as $\cv^{(m)}_{i} = 1, \forall i \notin \Vc_{m}$ and zero otherwise.

\subsection{Complexity }
\label{subsec:complexity_analysis}
In this section, we compare the run-time complexity of the proposed \hyperref[algo:fast_sampling]{RAGS} and  \hyperref[algo:block_sampling]{RABS} algorithms with existing graph signal sampling algorithms (see \autoref{tab:complexity}).
\subsubsection{Complexity of RAGS} This algorithm has two main stages: preparation and sampling. The preparation step consists of  obtaining the interpolators $\mathbf{q}_{i}^{(p)}$ (i.e., the columns of $\Qm(p)$) and the inner products $\langle \mathbf{q}^{(p)}_{i},\mathbf{q}^{(p)}_{j} \rangle $,  $\forall i,j \in \lbrace 1, \cdots, N \rbrace$. The sampling step implements the greedy update \eqref{eqn:greedy_update}. An important quantity appearing in our complexity analysis is the maximum number of non-zero entries of each row and column of  $\Zm$, which is denoted by $\Bar{d}$. For KNN graphs $\Bar{d} = k+1$, the unweighted degree of $\Gc$ is $\Bar{d}-1$.
The complexity of the preparation step is dominated by the computation of the inner products $\langle \mathbf{q}^{(p)}_{i},\mathbf{q}^{(p)}_{j} \rangle $ resulting in a total cost of $\Oc(\Bar{d}^{3p} N)$. Once the inner products are available, the complexity of the greedy update is $\Oc(sN)$, which is a result of the number of additions and comparisons per iteration and the total number of iterations. See Appendix \ref{app:complexity} for a detailed explanation.
\subsubsection{Complexity of RABS}
\label{subsec:block_sampl_complexity} 
Because  \hyperref[algo:block_sampling]{RABS} is based on \hyperref[algo:fast_sampling]{RAGS} applied to each block independently, we can estimate its complexity using similar arguments. Since the graphs in each block are sub-graphs of the original graph, the parameter $\Bar{d}$ estimated for the whole graph can be used as an upper bound for each sub-graph. 
%For the $i$th block, we need to perform the preparation step. 
%
Because the interpolator vectors have lower dimensions, the complexity of the preparation stage can be reduced to $\Oc(\Bar{d}^p b_i^2)$ for the $i$th block (see \autoref{remark_app_complexity_densegraph} in the Appendix). Thus, given that  $N \leq \sum_{i=1}^L b_i^2 \leq \Bar{B} N$, where $\Bar{B} = \max_{i}b_i$ is the maximum number of points in a block, the overall complexity of interpolator and inner product computations is $\Oc(\Bar{d}^p \sum_{i=1}^Lb_i^2) = \Oc(\Bar{B}\Bar{d}^pN)$. 
    % \begin{equation}\label{eq_block_complexity_reduction}
    %     N \leq \sum_{i=1}^L b_i^2 \leq \Bar{B} N,
    % \end{equation}
    % where $\Bar{B} = \max_{i}b_i$.
%
%due to \eqref{eq_block_complexity_reduction}. 
Block partitioning using octree has complexity $\Oc(N\log(N))$, which is also part of the preparation step.  Greedy sampling of each block runs in $\Oc(s_i b_i)$ time, resulting in  total complexity of  sampling equal to $\Oc(\sum_{i=1}^L s_i b_i)$.  If $s_i = { \alpha b_i}$ and $\alpha = s/N$   we have that 
\begin{equation}
    \sum_{i=1}^L s_i b_i \leq s \max_i b_i = s\Bar{B},
\end{equation}
which can be much smaller than $\Oc(sN)$, especially when the number of points per block or the size of the sampling set is small compared to $N$. In our experiments, $N$ can be in the order of hundreds of thousands, while $\Bar{B}$ can be a few hundred points, resulting in orders of magnitude runtime reduction. 
Note that this complexity reduction with respect to global sampling is not due to parallelization. 
\begin{table}[tp]
\centering
\caption{Complexity comparison - AVM: $m$ is the order of the filter, $T_{1}$ is the number of iterations to compute cut-off frequency, BS-GDA: $\epsilon$ is the numerical precision to terminate the binary search and $P, Q$ represents the number of nodes and edges within specified hop distance from points}
\begin{tabular}{|c|c|c| c|}
\hline
\begin{tabular}[c]{@{}c@{}c@{}} Algorithm\end{tabular} & Type &
  Preparation &
  \begin{tabular}[c]{@{}c@{}c@{}}Sampling \end{tabular} \\ \hline
AVM \cite{jayawant2021_avm} & Global   & $\mathcal{O}(\abs{\mathcal{E}} m T_{1} log(N))$ &  $\mathcal{O}(s \abs{\mathcal{E}} m)$  \\ \hline
BS-GDA \cite{bai2020_bsgda} & Global & $\mathcal{O}(N (P + Q) \log_{2}\frac{1}{\epsilon})$& $\mathcal{O} (sN P  \log_{2} \frac{1}{\epsilon})$  \\ \hline
\begin{tabular}[c]{@{}c@{}c@{}}
\hyperref[algo:fast_sampling]{RAGS} \end{tabular} & Global& $\mathcal{O}( \Bar{d}^{3p}N)$
   & $\mathcal{O}(sN)$
   \\ \hline
\hyperref[algo:block_sampling]{RABS} & Block &  $\mathcal{O}(  (\Bar{B}\Bar{d}^{p}+\log(N))N)$ & $\mathcal{O}(s\Bar{B})$\\ \hline
\end{tabular}
\label{tab:complexity}
\end{table}
%\subsubsection{Comparison between proposed sampling algorithms}
\subsubsection{Comparison with AVM and BS-GDA}
We now compare the complexity of the proposed  \hyperlink{\autoref{algo:fast_sampling}}{RAGS} and \hyperlink{\autoref{algo:block_sampling}}{RABS}  with the existing ED-free graph signal sampling algorithms - approximate volume maximization (AVM) \cite{jayawant2021_avm} and BS-GDA \cite{bai2020_bsgda} (see \autoref{tab:complexity}). The overall complexity is divided into two parts: 1) preparation  and 2) sampling set selection. In the sampling step, most algorithms adopt a greedy selection strategy.  

For sparse KNN graphs, since $\vert \Ec \vert = kN$, the preparation step of the four approaches scales near linearly with $N$, with different factors depending on their specific parameters. 
The main difference is observed in actual runtimes (see \autoref{sec:experiments}), where the runtime of  AVM is dominated by 
the estimation of cut-off frequencies through graph filtering, and for BS-GDA complexity is dominated by Gershgorin disk alignment of all nodes. 
In the proposed methods, complexity is dominated by inner product computation, where the complexity depends exponentially on $p$. This highlights the importance of choosing $p$ as small as possible, which is achieved using the methods from \autoref{sec:sampling_algo_development}. The dependence on $p$ is also improved by RABS  from $3p$ to $p$, due to the use of more localized interpolators. 

The sampling stage of AVM involves computing interpolators through spectral domain filtering in every iteration, and BS-GDA performs  $\mathcal{O}(\log_{2} \frac{1}{\epsilon})$  binary searches in every iteration. The sampling stage of RAGS and RABS does not involve additional computation apart from evaluating the objective function from the precomputed inner products.
While block sampling can be combined with AVM and BS-GDA, the resulting runtime would still be high because of the expensive preparation stage, which involves graph frequency analysis (see \autoref{subsec:comparison}). 

\section{Experiments}
\label{sec:experiments}
We provide a comprehensive evaluation of proposed sampling algorithms - \hyperref[algo:block_sampling]{RABS} and \hyperref[algo:fast_sampling]{RAGS} on large PCs. We first analyze the effect of the parameter $p$ on the reconstruction accuracy and run-time in \autoref{subsec:p_effect_exps}. We compare the performance of proposed RAGS and RABS algorithms and evaluate the effect of self-loops in RABS in \autoref{subsec:self_vs_noself}.
\autoref{subsec:comparison} compares against  uniform sampling \cite{sridhara2022_chromapc} and other state-of-the-art graph-based sampling methods \cite{chen2017_pcsamplgraph,jayawant2021_avm, bai2020_bsgda}. Finally, we apply our sampling algorithms to chroma subsampling in \autoref{subsec_exp_chroma}.

\begin{figure}[t]
\centering
\begin{subfigure}[b]{0.24\textwidth}
         \centering
         \includegraphics[width=\textwidth]{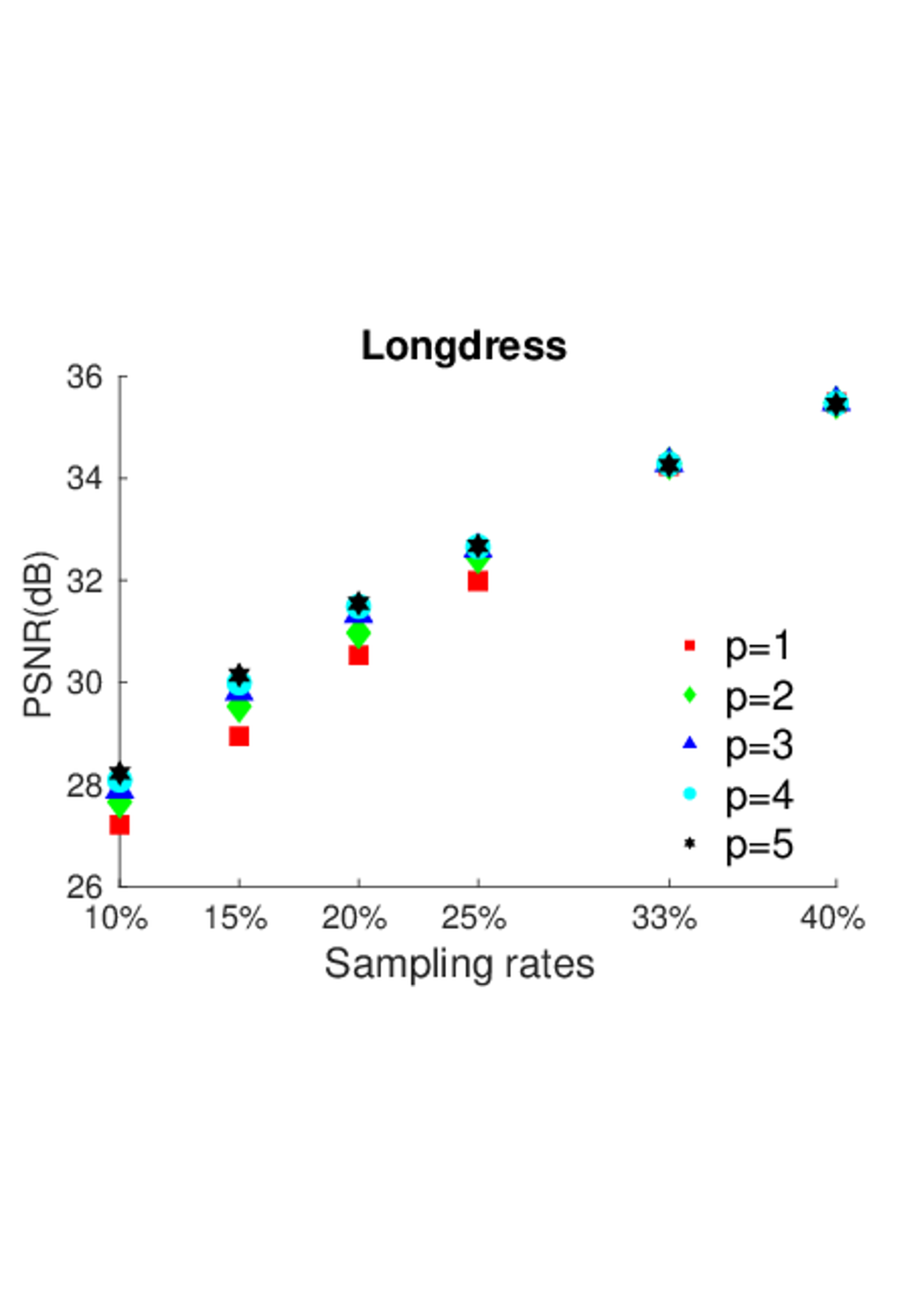}
     \end{subfigure}
     \begin{subfigure}[b]{0.24\textwidth}
         \centering
         \includegraphics[width=\textwidth]{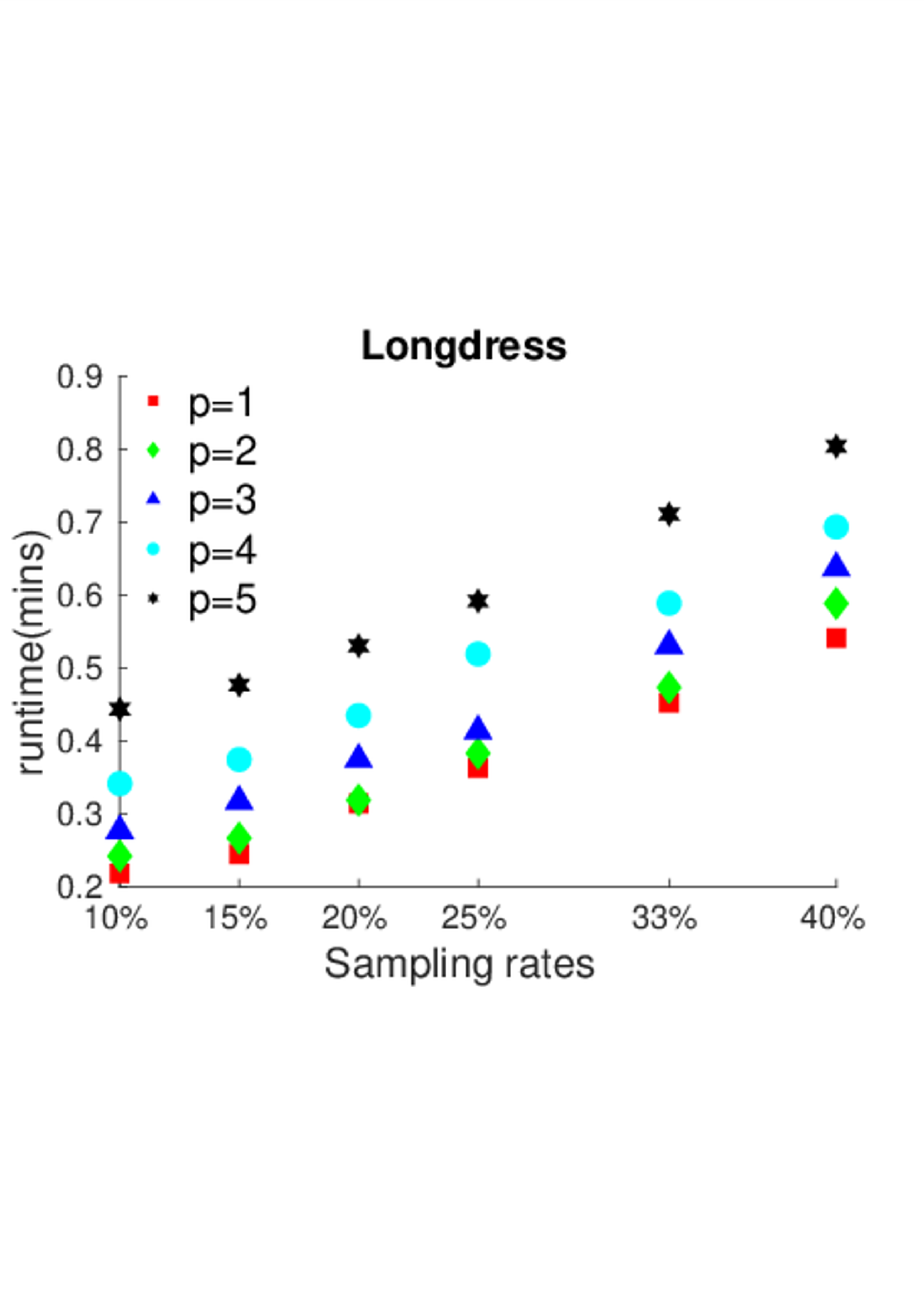}
     \end{subfigure}
     \caption{Effect of $p$ on PSNR and runtime (RABS, block-size: $2^{6} \times 2^{6} \times 2^{6}$)}
     \label{fig:p_vs_alpha_psnr}
 \begin{subfigure}[b]{0.24\textwidth}
         \centering
         \includegraphics[width=\textwidth]{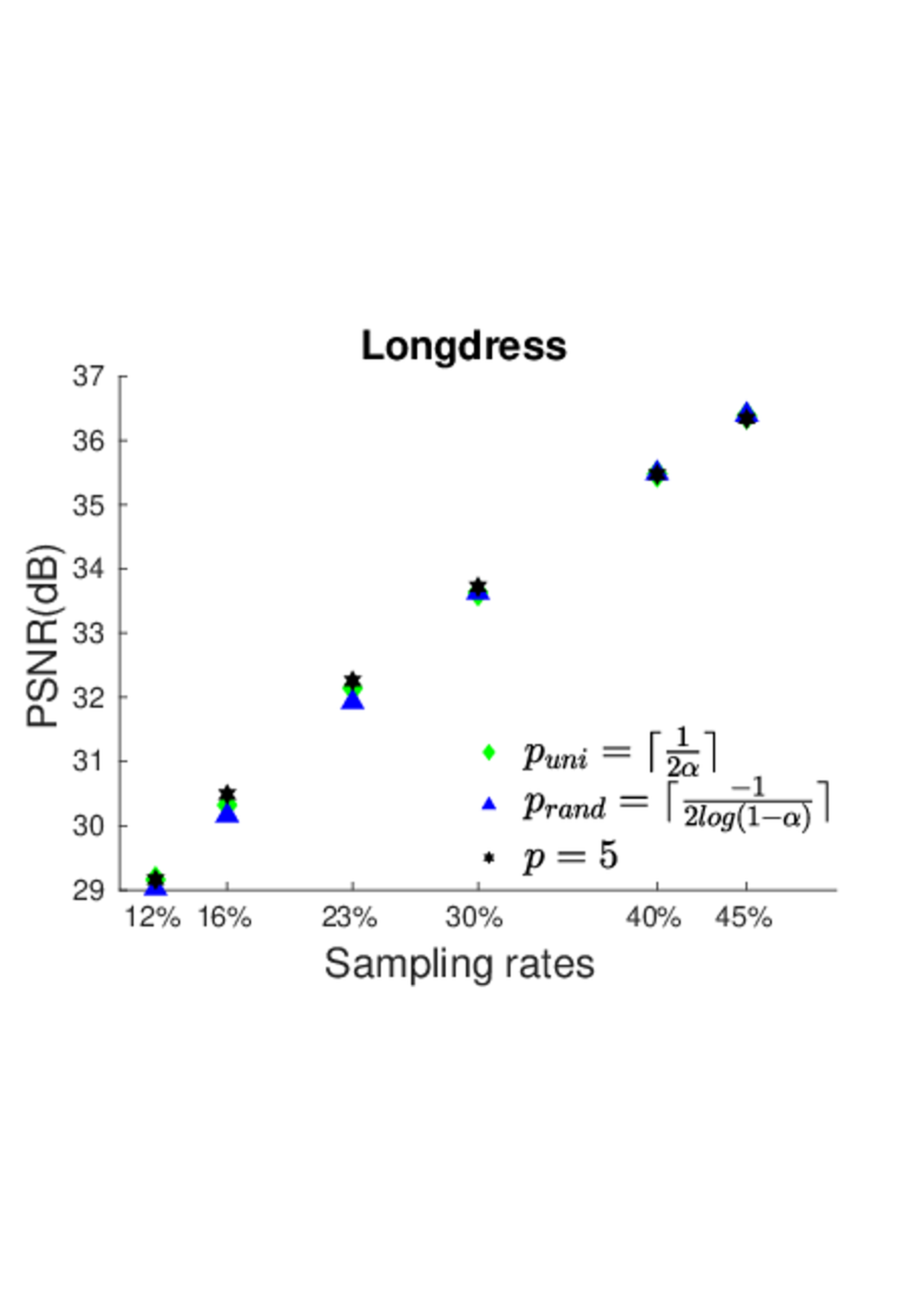}
     \end{subfigure}
\begin{subfigure}[b]{0.24\textwidth}
         \centering
         \includegraphics[width=\textwidth]{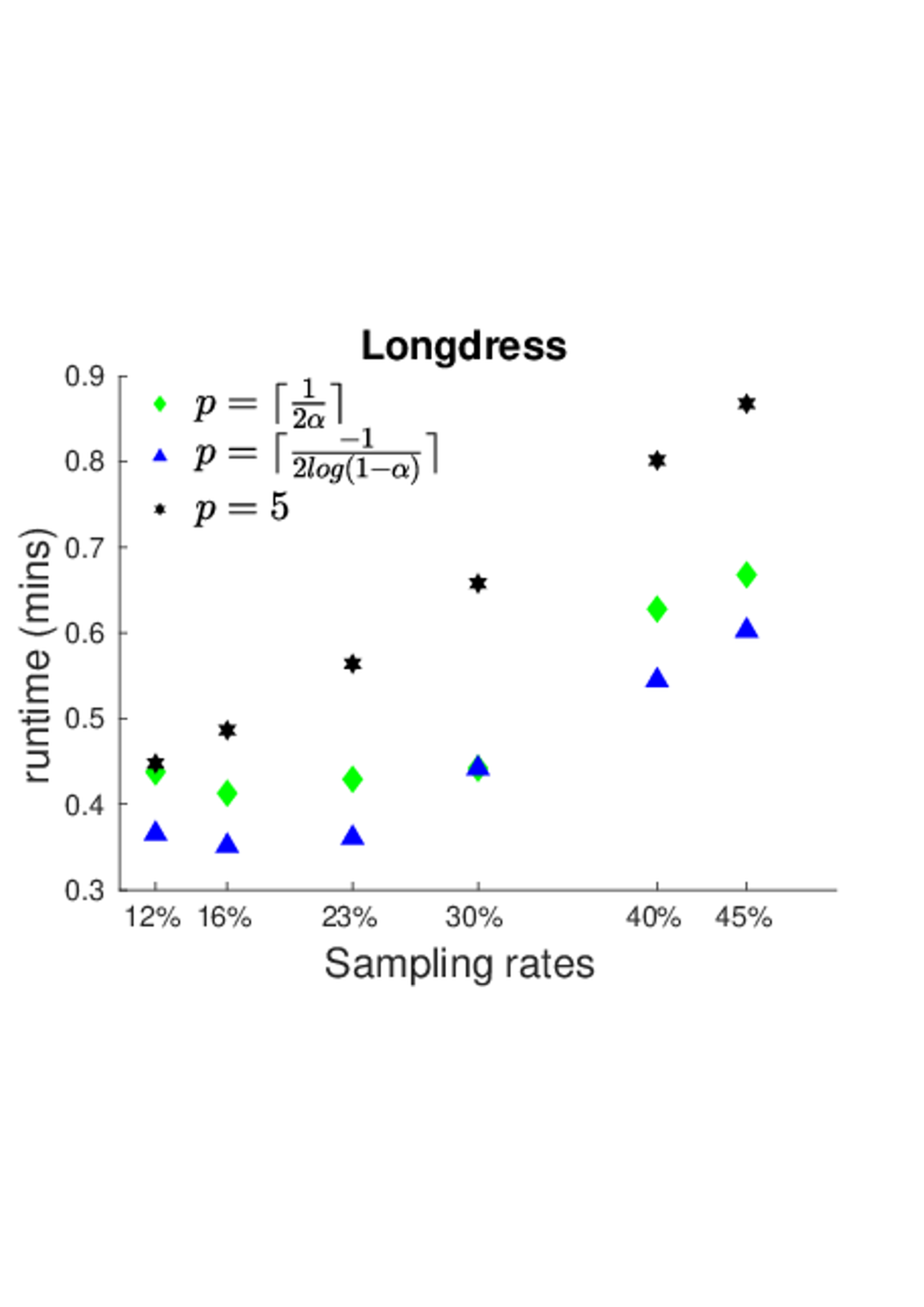}
     \end{subfigure}
\caption{Fixed $p$ vs proposed selection of $p$ based on sampling rate (RABS, block-size: $2^{6} \times 2^{6} \times 2^{6}$)}
\label{fig:fixed_p_vs_adap_p}
\end{figure}

\begin{figure}[ht]
\centering
    \begin{subfigure}[b]{0.24\textwidth}
         \centering
         \includegraphics[width=\textwidth]{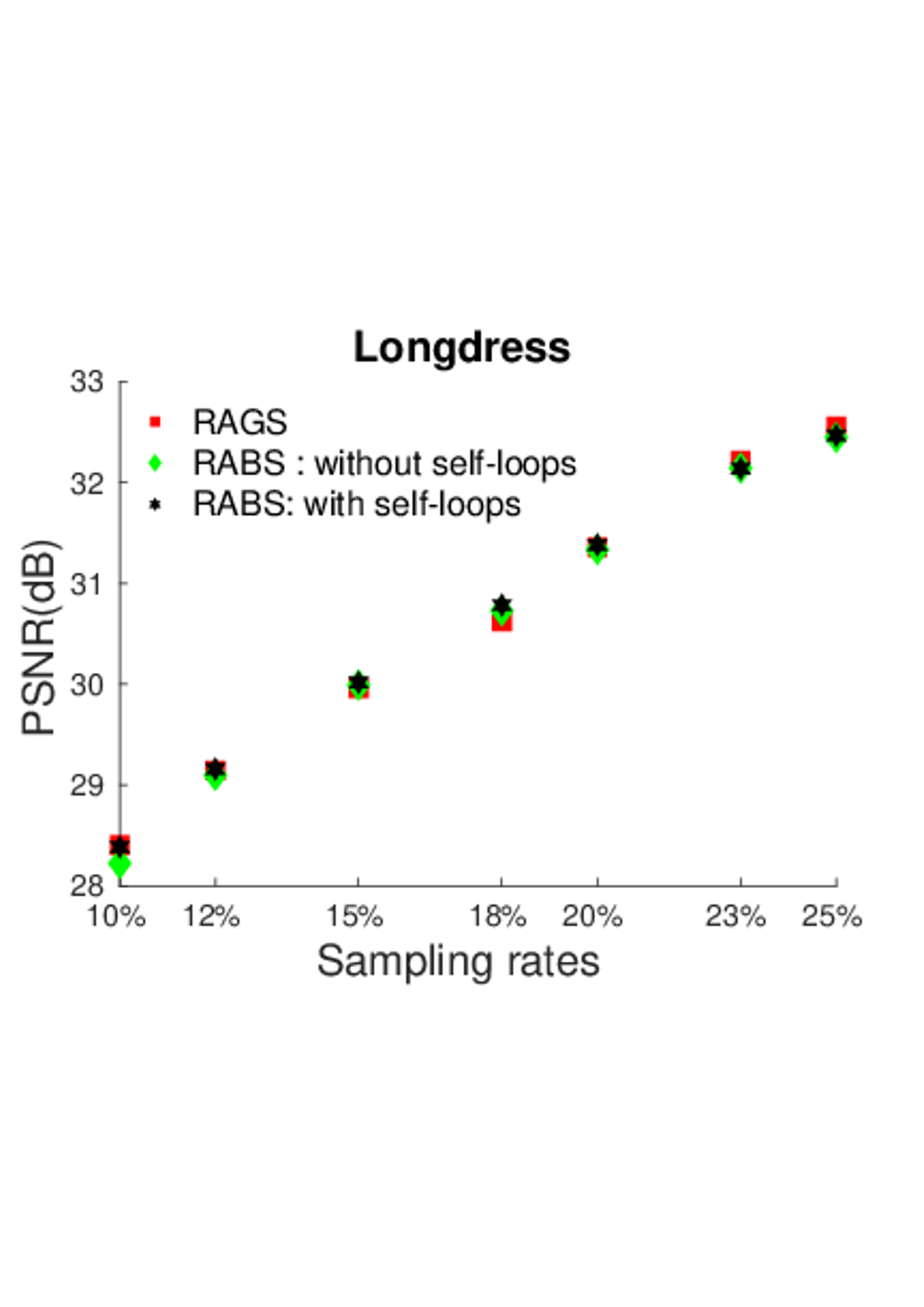}
     \end{subfigure}
     \begin{subfigure}[b]{0.24\textwidth}
         \centering
         \includegraphics[width=\textwidth]{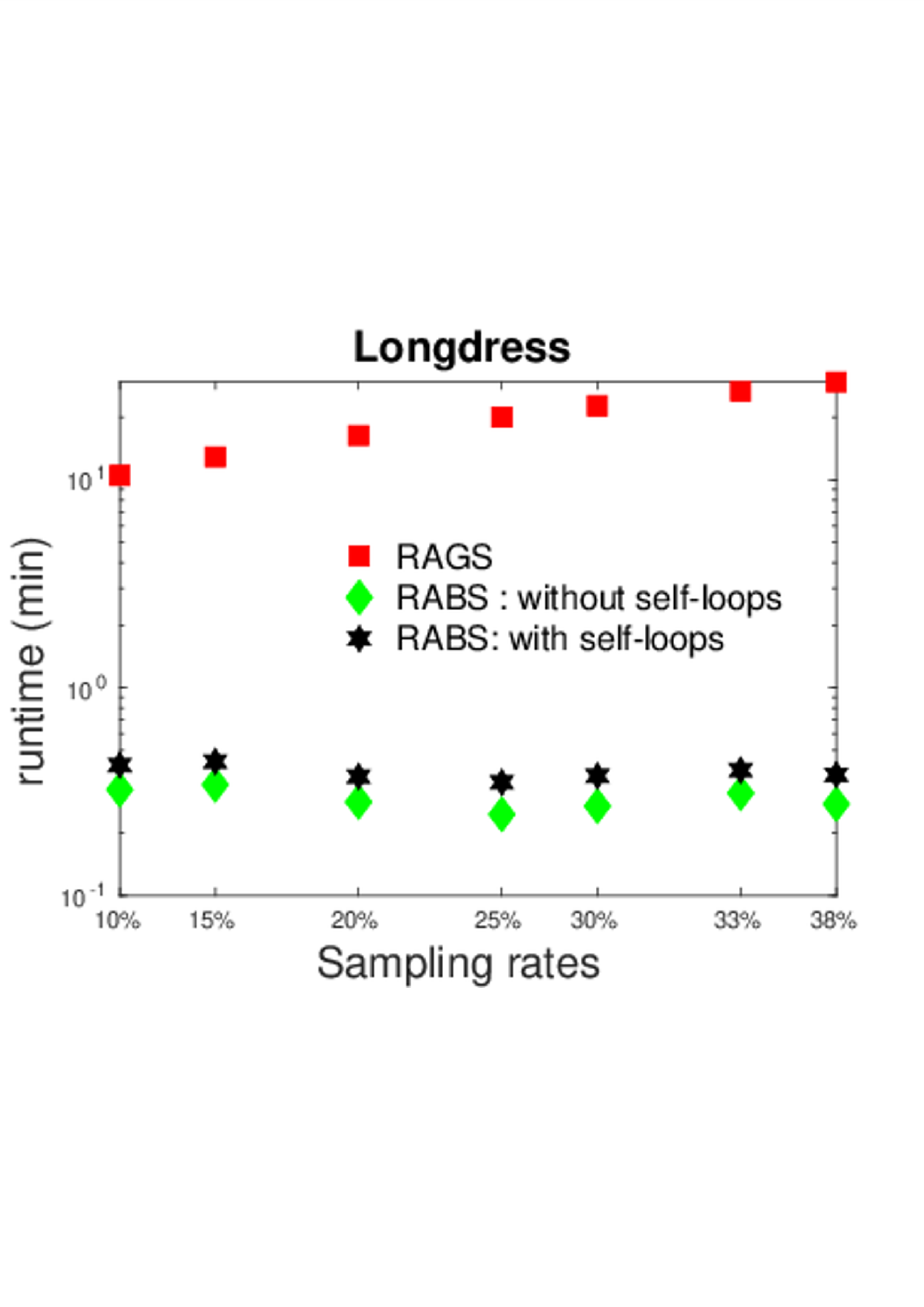}
     \end{subfigure}
     \caption{RAGS vs RABS with and without self-loops. PSNR and runtime comparison at different sampling rates for $p = \ceil{1/(2\alpha)}$, block size = $2^{6} \times 2^{6} \times 2^{6}$}
     \label{fig:global_vs_block_self}
\begin{subfigure}[b]{0.24\textwidth}
         \centering
          \includegraphics[width=\textwidth]{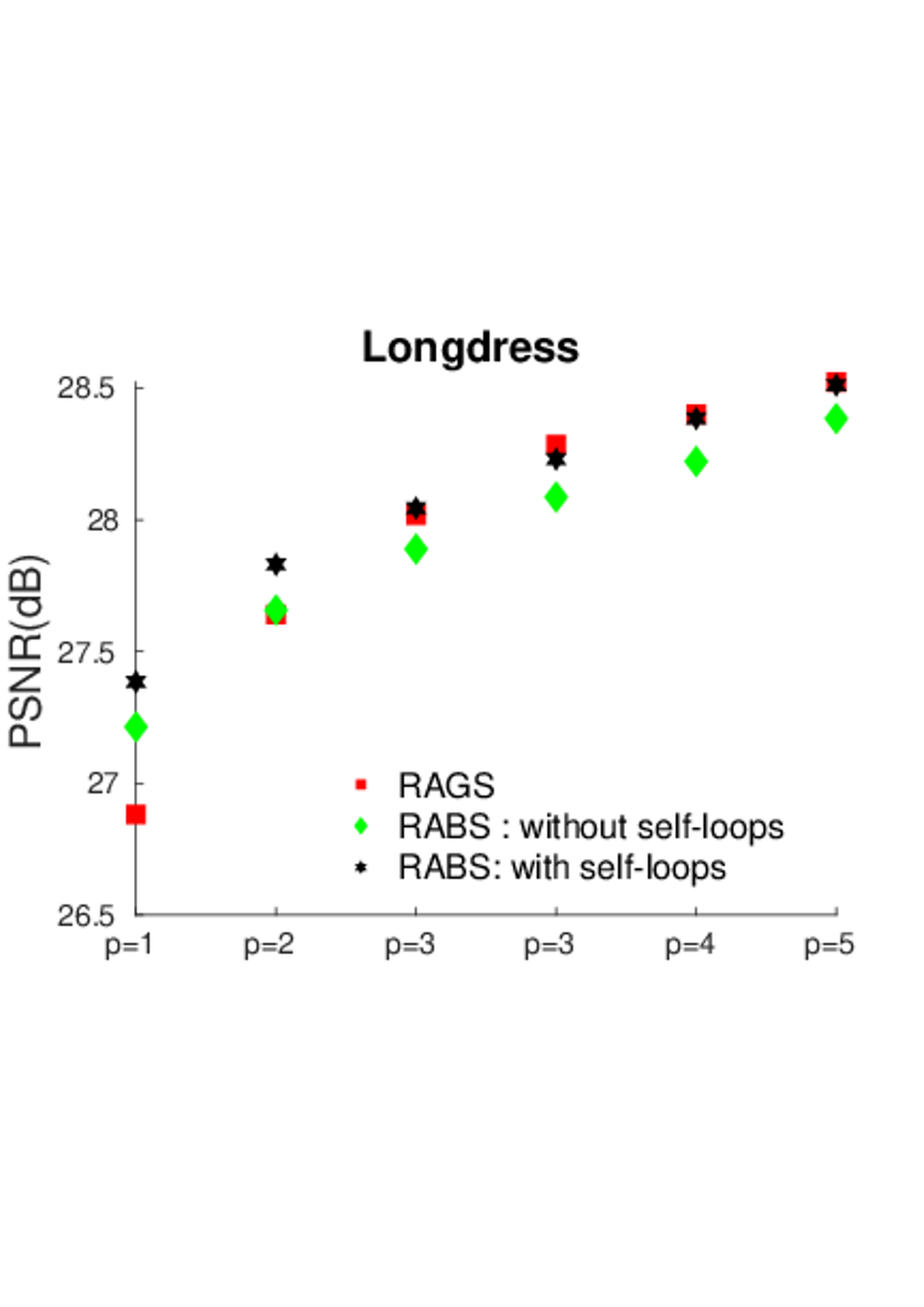}
          \caption{Sampling rate = $10\%$}
     \end{subfigure}
     \begin{subfigure}[b]{0.24\textwidth}
         \centering
         \includegraphics[width=\textwidth]{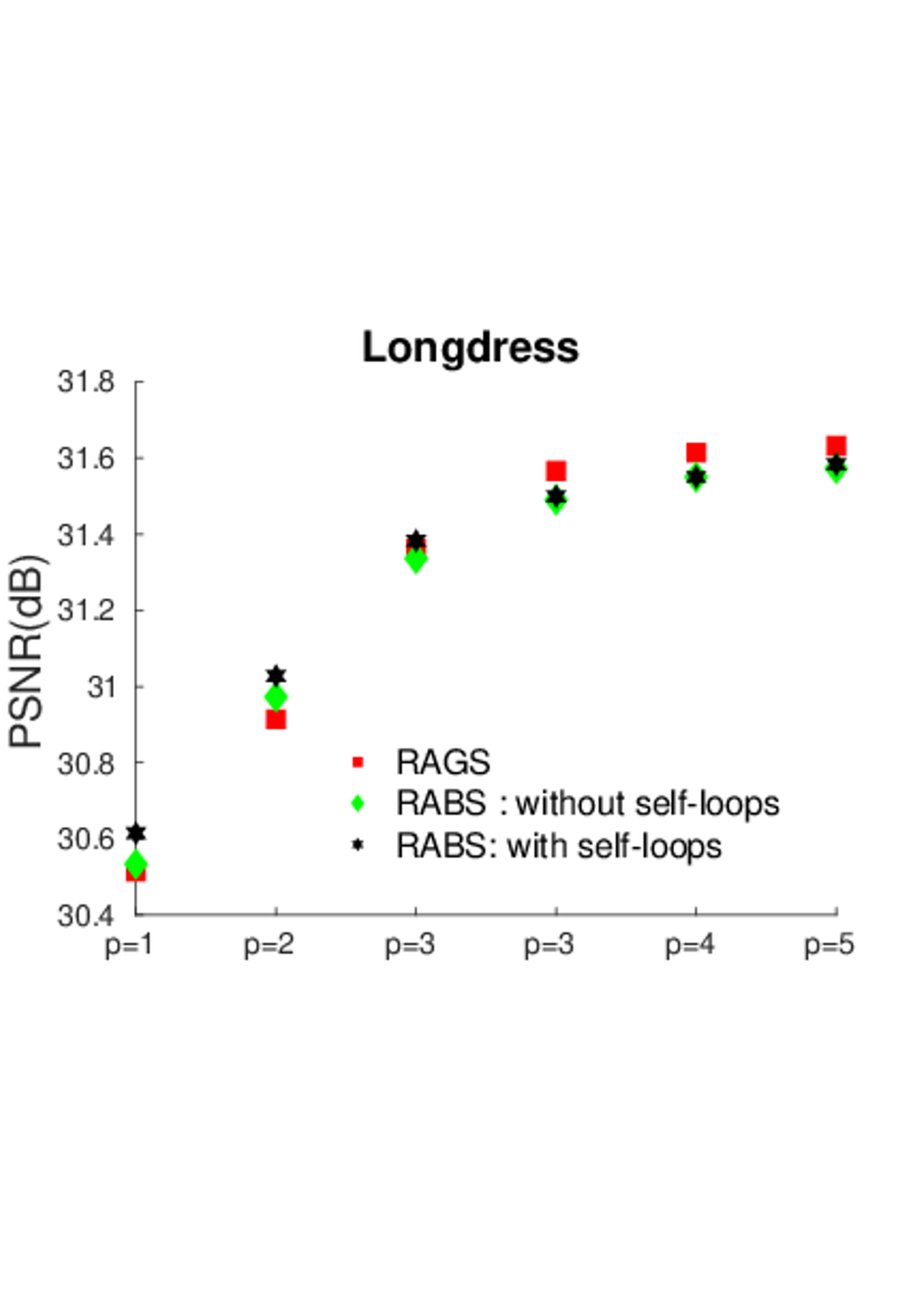}
         \caption{Sampling rate = $20\%$}
     \end{subfigure}
\caption{RAGS vs RABS with and without self-loops. PSNR comparison for different values of $p$, block size  $2^{6} \times 2^{6} \times 2^{6}$}
\label{fig:self_vs_noself}
\end{figure}

We consider two PC datasets with RGB color attributes used in the standardization activities by the motion picture experts group (MPEG)\cite{chou2017_8idataset, loop2016_mvub}. The 8i dataset \cite{chou2017_8idataset} consists of 4 PC sequences: l\textit{ongdress, redandblack, soldier, and loot}.  The Microsoft voxelized upper body (MVUB) dataset \cite{loop2016_mvub} consists of 5 PC sequences: \textit{Sarah, Andrew, David, Phil, Ricardo}. We used PCs that are voxelized to depth $10$ from both datasets. The number of points in 8i and MVUB datasets is between $700$K - $ 1.3$ million. For all our experiments, we construct a $K$-NN graph $\mathcal{G}$ as described in \autoref{sec:preliminaries} with $K = 5$ and $\sigma = (\sum_{(i, j) \in \epsilon} \norm{\mathbf{p}_{i} - \mathbf{p}_{j}}_{2}^{2})/3N$. 
Since the RGB color values are transformed to the YUV domain in most compression systems \cite{sridhara2022_chromapc, pavez2020_ragft}, we will evaluate the reconstruction accuracy on the luminance component (Y). We use the standard peak signal-to-noise ratio  (PSNR) definition for $8$-bit  color
\begin{equation}
\label{eqn:psnr_y}
         \text{PSNR}=-10\log_{10}\left( \Vert \fv-\hat{\fv}\Vert_{2}^{2}/(255^2 N)\right),
    \end{equation}
where $\fv$ and $\hat{\fv}$ are the original and reconstructed luminance signal, respectively, and $N$ is the total number of points.
We fix the block-size to $2^{6} \times 2^{6} \times 2^{6}$. We tried different block sizes and we observed larger block sizes take longer time to sample with no significant gain in reconstruction accuracy.
\begin{figure*}[t]
\centering
\begin{subfigure}[b]{0.30\textwidth}
         \centering
         \includegraphics[width=\textwidth]{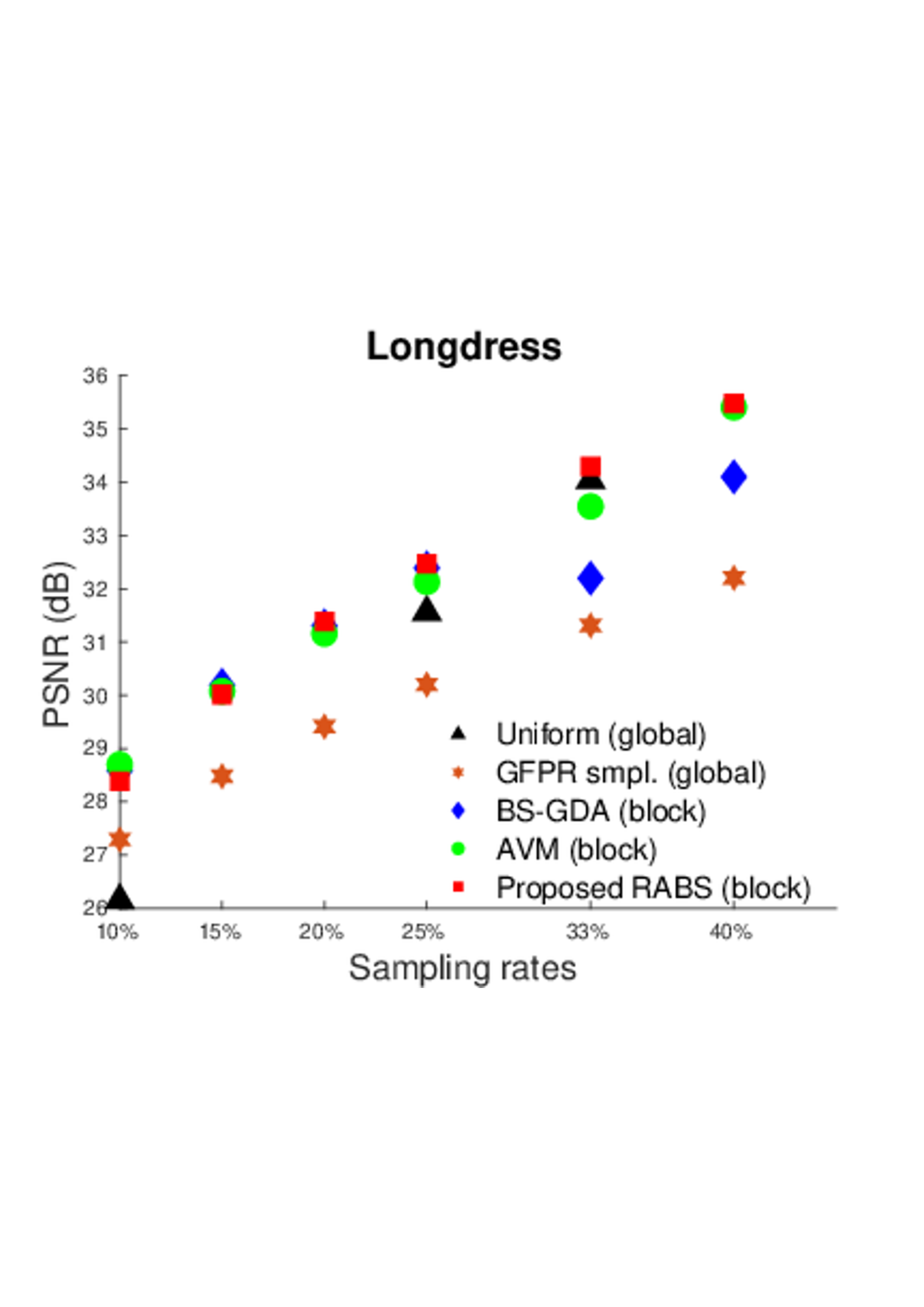}
     \end{subfigure}
     \begin{subfigure}[b]{0.30\textwidth}
         \centering
         \includegraphics[width=\textwidth]{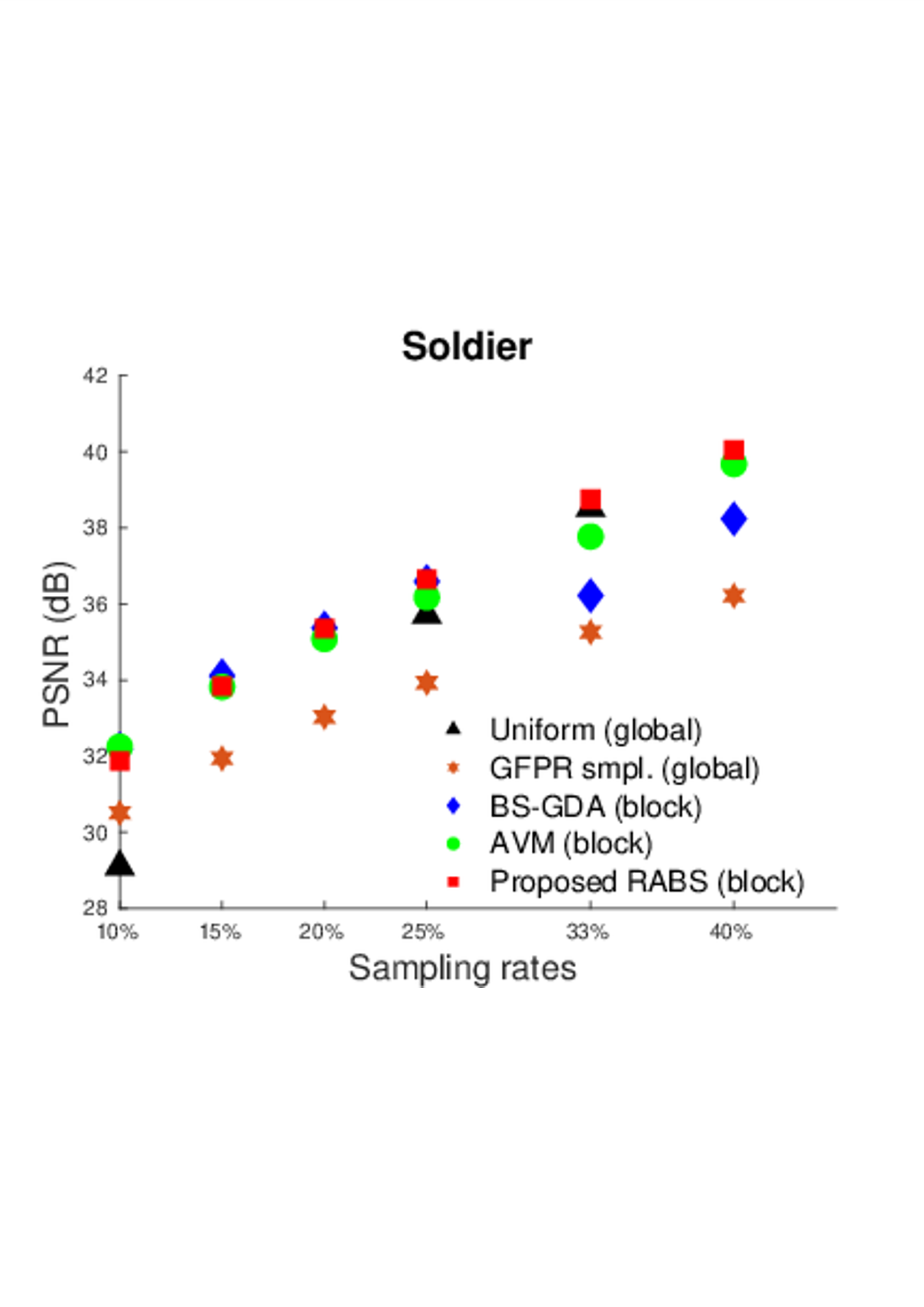}
     \end{subfigure}
     \begin{subfigure}[b]{0.30\textwidth}
         \centering
         \includegraphics[width=\textwidth]{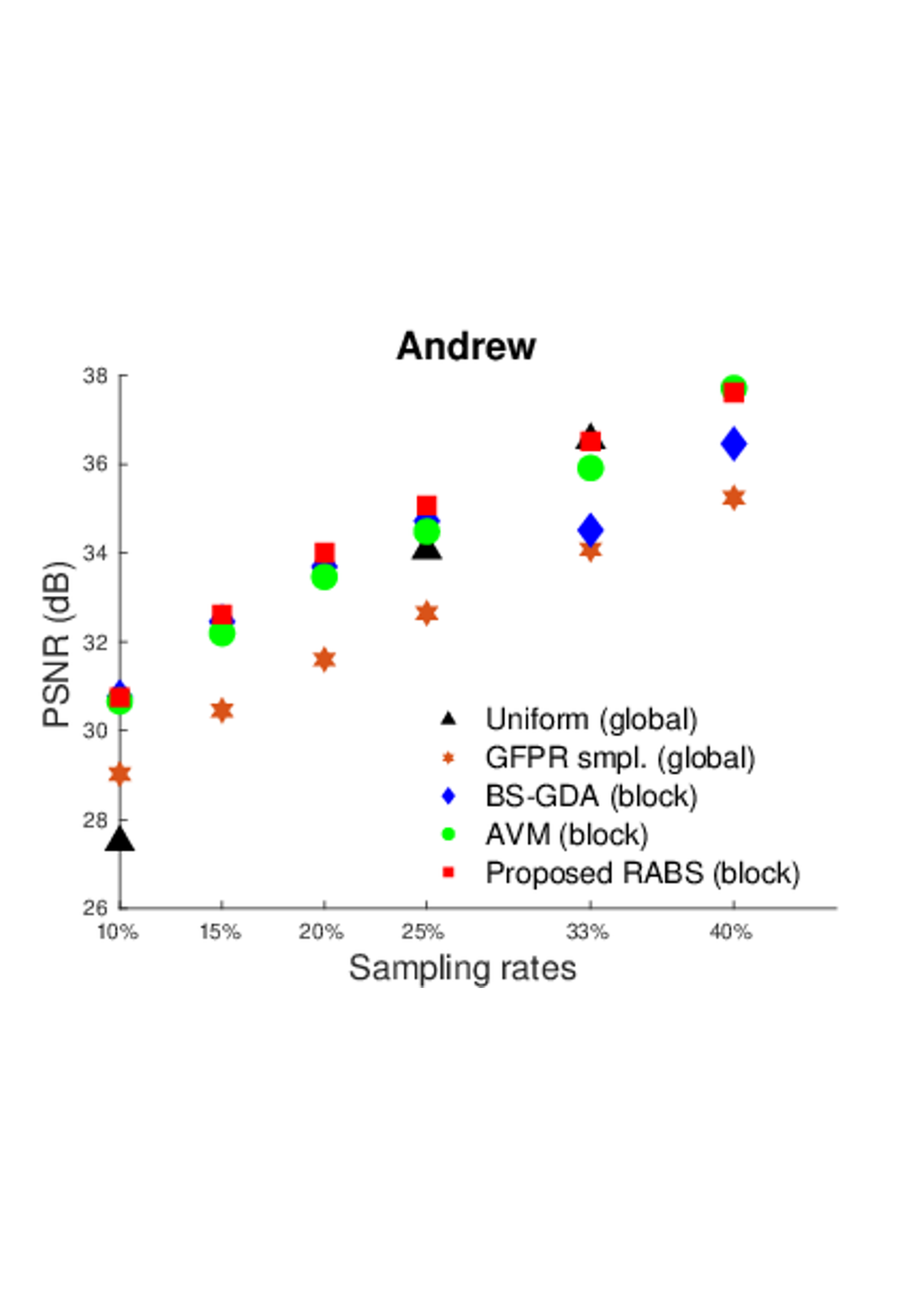}
     \end{subfigure}
\caption{PSNR comparison between different sampling algorithms. \hyperref[algo:block_sampling]{RABS} ($p = \ceil{\frac{1}{2\alpha}}$), AVM and BS-GDA are implemented in a block-based framework (block size: $2^{6} \times 2^6 \times 2^6$)}
\label{fig:psnr_comparison}
\begin{subfigure}[b]{0.30\textwidth}
         \centering
         \includegraphics[width=\textwidth]{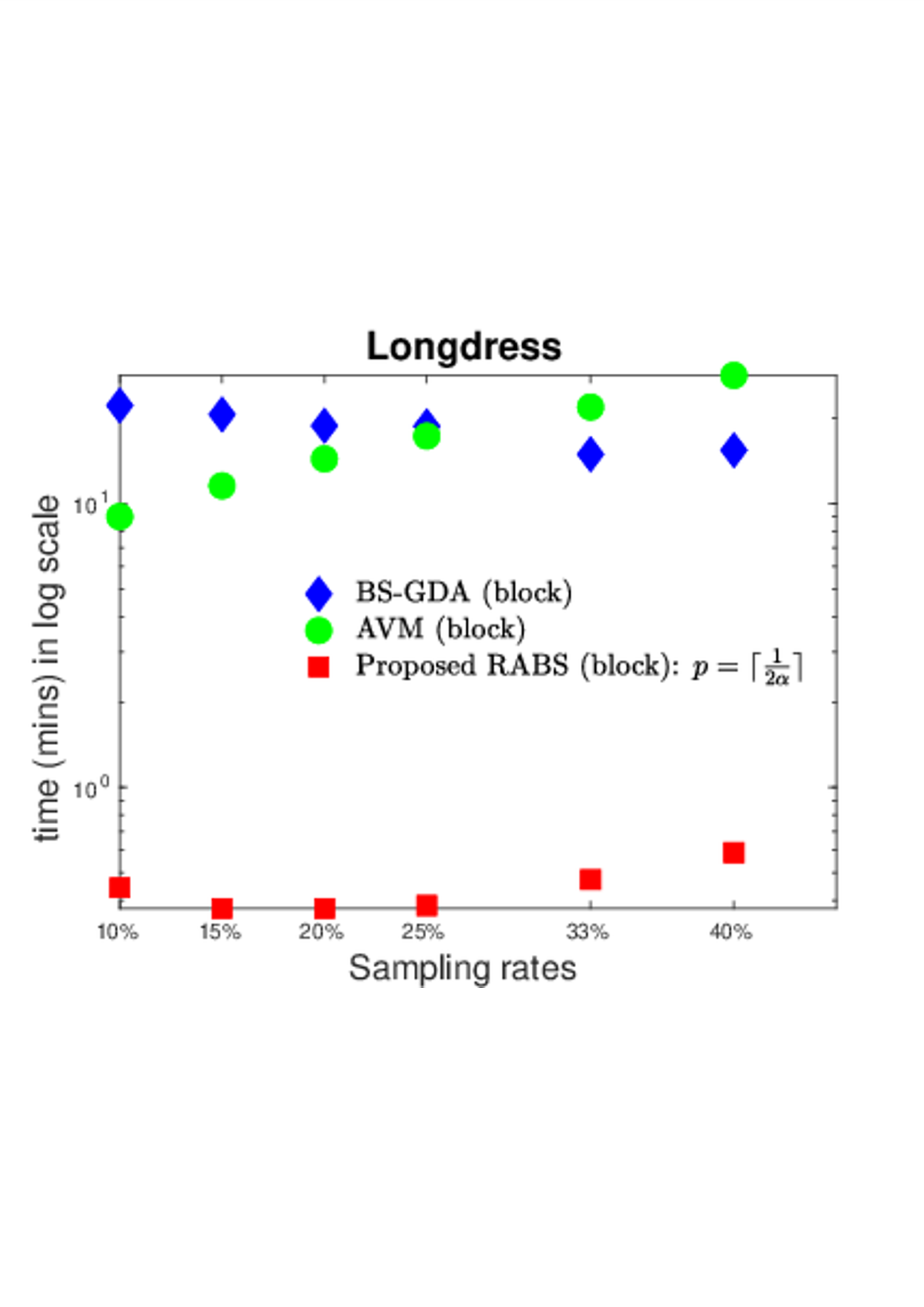}
     \end{subfigure}
     \begin{subfigure}[b]{0.30\textwidth}
         \centering
         \includegraphics[width=\textwidth]{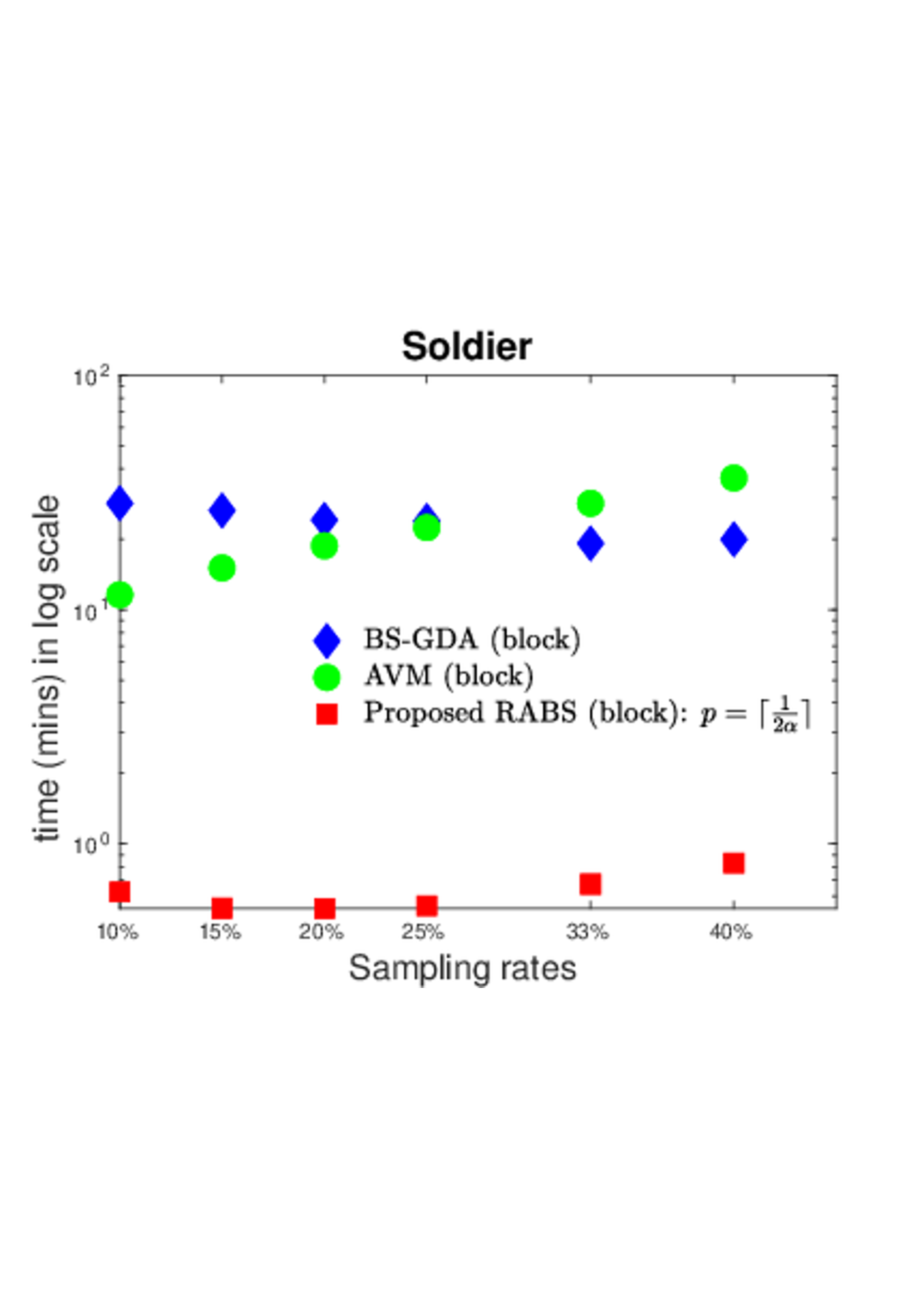}
     \end{subfigure}
     \begin{subfigure}[b]{0.30\textwidth}
         \centering
         \includegraphics[width=\textwidth]{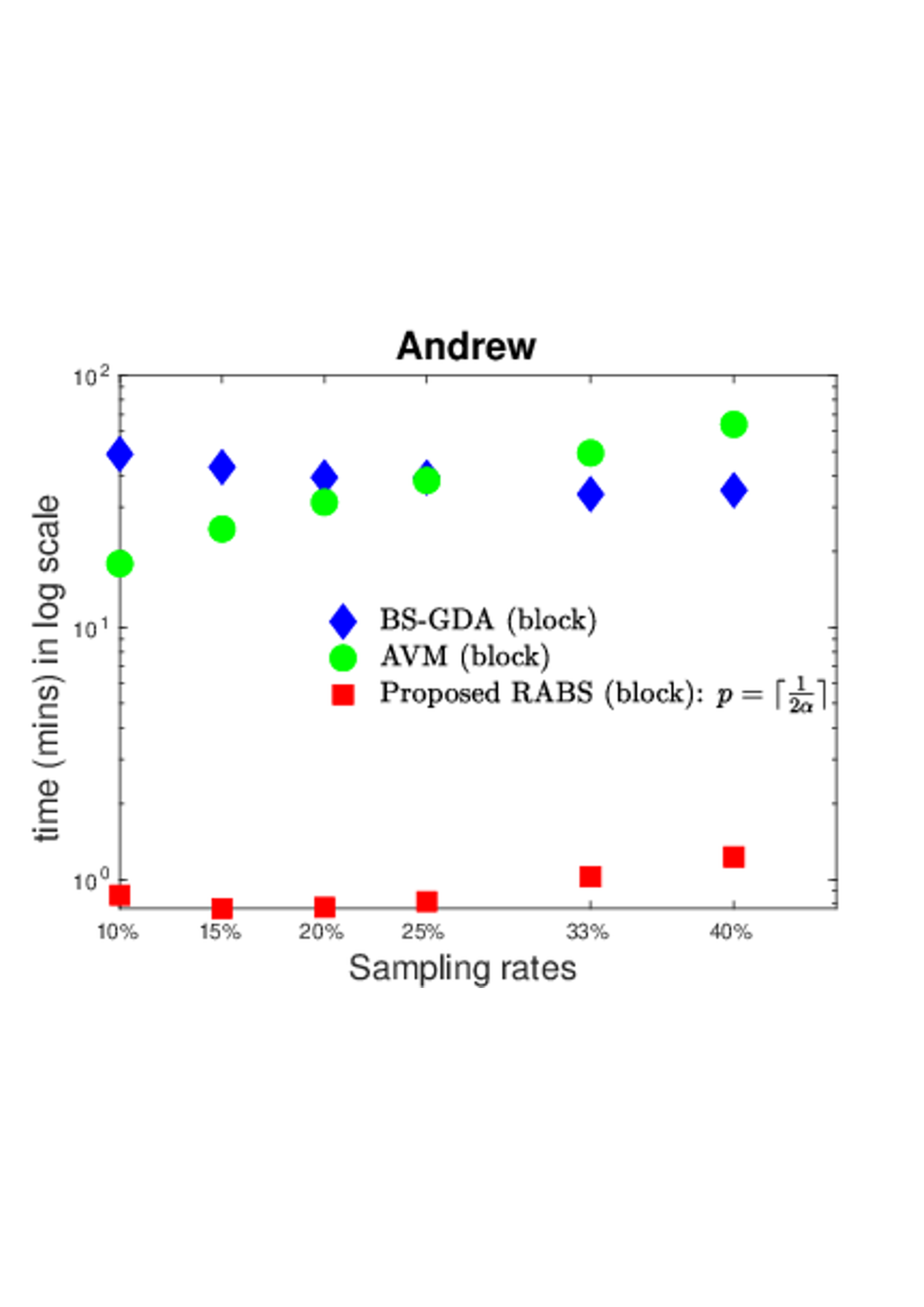}
     \end{subfigure}
\caption{Runtime comparison between graph-based greedy sampling algorithms - AVM \cite{jayawant2021_avm}, BS-GDA \cite{bai2020_bsgda} and \hyperref[algo:block_sampling]{RABS} ($p = \ceil{\frac{1}{2\alpha}}$) in a block-based framework (block size: $2^{6} \times 2^6 \times 2^6$)}
\label{fig:runtime_comparison} 
\end{figure*}

\subsection{Effect of selecting the parameter \texorpdfstring{$p$}{TEXT} on PSNR and runtime}
\label{subsec:p_effect_exps}
We compare the PSNR and runtime when PCs are sampled using \hyperref[algo:block_sampling]{RABS} for different values of $p$ in \autoref{fig:p_vs_alpha_psnr}.
We can observe the PSNR  values increase as we increase $p$ from $1$ to $5$ at a lower sampling rate (eg. 10\%-20\%). At higher sampling rates ($>30\%$), we see that increasing $p$ does not affect the PSNR, but the runtime still increases  (\autoref{fig:p_vs_alpha_psnr}). These results are consistent with our complexity analysis in \autoref{subsec:complexity_analysis} that complexity increases with  $p$. Our numerical results show that at higher sampling rates, a lower $p$ is sufficient to achieve a good PSNR in a computationally efficient way.

We also compare the proposed methods to select $p$ based on sampling rates \eqref{eqn:p_estimate_uni} and \eqref{eqn:p_estimate_rand}. \autoref{fig:fixed_p_vs_adap_p} shows that selecting $p$ using \eqref{eqn:p_estimate_uni} or \eqref{eqn:p_estimate_rand} performs equally well compared to fixing $p$ to an arbitrarily high value, but with a much lower runtime. Among the proposed methods, selecting $p$ based on a uniform sampling model \eqref{eqn:p_estimate_uni} results in a marginally better PSNR with a small overhead in runtime compared to selecting $p$ based on random sampling model \eqref{eqn:p_estimate_rand}, which is consistent with the observation $p_{\text{rand}} \leq p_{\text{uni}}$, from  \autoref{subsec:effect_of_p}. In  our sub-sequent experiments we will use \eqref{eqn:p_estimate_uni}, i.e.,  $p = \ceil{1/(2\alpha)}$.
%Since the PCs used in our experiments are uniformly dense, $p$ from \eqref{eqn:p_estimate_uni} is a suitable choice. Therefore, in our sub-sequent experiments, we fix $p_{\text{uni}} = \ceil{1/(2\alpha)}$. 

\subsection{Global sampling vs Block-based sampling}
\label{subsec:self_vs_noself}
We compare the performance of proposed \hyperref[algo:fast_sampling]{RAGS} with  \hyperref[algo:block_sampling]{RABS} algorithms. As shown in \autoref{fig:global_vs_block_self}, RAGS results in high PSNR at all sampling rates, and RABS is faster by more than an order of magnitude while achieving similar PSNR compared to global sampling. The results in \autoref{fig:global_vs_block_self} are consistent with our complexity analysis in \autoref{subsec:block_sampl_complexity}. Furthermore, we show the effect of self-loops in RABS at a fixed sampling rate for different values of $p$ in \autoref{fig:self_vs_noself}. We can observe that the introduction of self-loops with optimally derived weights in \eqref{eqn:self_loop_weight} yields a noticeable improvement in PSNR (around $0.1$dB - $0.2$dB) at all values of $p$ compared to RABS without self-loops. 
Although there is a small overhead to compute self-loop weights, the gain in reconstruction accuracy justifies this additional computational cost. Another interesting observation is that RAGS results in poor PSNR compared to RABS for lower values of $p$, especially when the sampling rate is low (\autoref{fig:self_vs_noself}). 
This is not surprising because when we sample a large graph with highly localized interpolators (lower value of $p$) at lower sampling rates, the relative importance i.e.,  $\langle \qv_{i}^{(p)}, \qv_{j}^{(p)} \rangle = 0, i \neq j $ for most of the points. 

\begin{table*}[ht]
\begin{center}
\caption{Comparison of reconstruction accuracy of the proposed \hyperref[algo:block_sampling]{RABS} algorithm (block size = $2^{6} \times 2^{6} \times 2^{6}$, $p = \ceil{1/(2\alpha)}$) with uniform sampling, GFPR sampling, BS-GDA, and AVM. Block-based sampling is adopted for AVM and BS-GDA. }
\begin{adjustbox}{width=\textwidth,center}
\label{tab:psnr_comparison_table}
\begin{tabular}{|c|lll|lll|ccc|ccc|ccc|ccc|}
\hline
PC                   & \multicolumn{3}{c|}{\begin{tabular}[c]{@{}c@{}}Uniform sampling \cite{sridhara2022_chromapc}\\ (global)\\ -PSNR (dB)\end{tabular}} & \multicolumn{3}{c|}{\begin{tabular}[c]{@{}c@{}}GFPR sampling \cite{chen2017_pcsamplgraph}\\ (global)\\ -PSNR (dB)\end{tabular}} & \multicolumn{3}{c|}{\begin{tabular}[c]{@{}c@{}}BS-GDA \cite{bai2020_bsgda} (block)\\ - PSNR (dB)\end{tabular}} & \multicolumn{3}{c|}{\begin{tabular}[c]{@{}c@{}}AVM \cite{jayawant2021_avm} (block)\\ - PSNR (dB)\end{tabular}} & \multicolumn{3}{c|}{\begin{tabular}[c]{@{}c@{}}Proposed \hyperref[algo:block_sampling]{RABS} \\ (block)\\ -PSNR (dB)\end{tabular}} & \multicolumn{3}{c|}{\begin{tabular}[c]{@{}c@{}}Proposed \hyperref[algo:fast_sampling]{RAGS} \\ (global)\\ -PSNR (dB)\end{tabular}} \\
\multicolumn{1}{|l|}{} & 10\%                          & 25\%                         & 33\%                                  & 10\%                              & 25\%                             & 33\%                             & \multicolumn{1}{l}{10\%}     & \multicolumn{1}{l}{25\%}    & \multicolumn{1}{l|}{33\%}    & \multicolumn{1}{l}{10\%}    & \multicolumn{1}{l}{25\%}   & \multicolumn{1}{l|}{33\%}   & \multicolumn{1}{l}{10\%}     & \multicolumn{1}{l}{25\%}     & \multicolumn{1}{l|}{33\%}    & \multicolumn{1}{l}{10\%}     & \multicolumn{1}{l}{25\%}     & \multicolumn{1}{l|}{33\%}     \\  \hline
\textit{longdress}   & 26.16                         & 31.55                        & 34.03                                  & 27.29                             & 30.20                            & 31.30                            & 28.59                        & 32.38                       & 32.19                       & \textbf{28.70}              & 32.12                      & 33.54                      & 28.38                        & 32.46                        & 34.29                       & 28.40                        & \textbf{32.55}               & \textbf{34.31}               \\
\textit{loot}        & 33.73                         & 40.20                        & 42.88                                 & 34.75                             & 37.97                            & 39.13                            & 36.39                        & 40.84                       & 40.49                       & \textbf{36.44}              & 40.47                      & 42.03                      & 36.12                        & 40.92                        & 43.00                       & 36.18                        & \textbf{41.05}               & \textbf{43.04}               \\
\textit{redandblack} & 32.16                         & 38.13                        & 40.52                                 & 33.19                             & 36.54                            & 37.78                            & 34.82                        & 38.73                       & 38.63                       & \textbf{34.90}              & 38.52                      & 39.92                      & 34.63                        & 38.92                        & 40.61                       & 34.73                        & \textbf{39.00}               & \textbf{40.65}               \\
\textit{soldier}     & 29.40                         & 35.72                        & 38.54                                 & 30.60                             & 33.96                            & 35.27                            & 32.18                        & 36.58                       & 36.21                       & \textbf{32.24}              & 36.17                      & 37.76                      & 31.87                        & 36.64                        & 38.74                       & 32.02                        & \textbf{36.76}               & \textbf{38.78}               \\ \hline
\textit{Sarah}       & 39.71                         & 46.66                        & \textbf{48.47}                        & 41.82                             & 45.72                            & 47.18                            & 43.79                        & 47.09                       & 46.97                       & \textbf{43.82}              & 47.15                      & 48.28                      & 43.77                        & 47.12                        & 48.29                       & 43.72                        & \textbf{47.15}               & 48.31                        \\
\textit{David}       & 40.24                         & 48.12                        & 50.18                                 & 42.81                             & 47.00                            & 48.39                            & \textbf{45.16}               & 48.78                       & 48.74                       & 45.10                       & 48.72                      & 49.89                      & 44.97                        & 48.95                        & 50.19                       & 45.03                        & \textbf{49.00}               & \textbf{50.21}               \\
\textit{Andrew}      & 27.51                         & 34.08                        & \textbf{36.53}                        & 29.02                             & 32.64                            & 34.08                            & 30.76                        & 34.72                       & 34.51                       & 30.66                       & 34.48                      & 35.91                      & \textbf{30.74}               & 35.06                        & 36.49                       & 30.74                        & \textbf{35.07}               & 36.51                        \\
\textit{Phil}        & 30.04                         & 37.42                        & 39.42                                 & 32.47                             & 36.19                            & 37.52                            & \textbf{34.65}               & 38.09                       & 37.89                       & 34.55                       & 38.03                      & 39.20                      & 34.41                        &  38.25              & 39.51              & 34.46                        & \textbf{38.27 }                       & \textbf{39.54}                        \\
\textit{Ricardo}     & 38.76                         & 45.63                        & 47.39                                 & 40.77                             & 44.12                            & 45.54                            & \textbf{42.62}               & 45.86                       & 45.93                       & 42.34                       & 45.88                      & 47.03                      & 42.55                        & 46.16                        & 47.40                       & 42.27                        & \textbf{46.17}               & \textbf{47.42}    \\ \hline          
\end{tabular}
\end{adjustbox}
\end{center}
\end{table*}

\begin{table*}[ht]
\begin{center}
\caption{Runtime comparison (reported in minutes) of the \hyperref[algo:fast_sampling]{RAGS}($p = \ceil{1/(2\alpha)}$), \hyperref[algo:block_sampling]{RABS} (block size = $2^{6} \times 2^{6} \times 2^{6}$) with BS-GDA, and AVM in the same block-based framework.}
\label{tab:runtime_comparison_table}
\captionsetup{justification=centering}
\begin{tabular}{|c|c|ccl|ccl|ccl|ccc|}
\hline
PC                   & \begin{tabular}[c]{@{}c@{}}number\\ of points\end{tabular} & \multicolumn{3}{c|}{\begin{tabular}[c]{@{}c@{}}BS-GDA \cite{bai2020_bsgda} (block)\\   - runtime (min)\end{tabular}} & \multicolumn{3}{c|}{\begin{tabular}[c]{@{}c@{}}AVM \cite{jayawant2021_avm} (block)\\ - runtime (min)\end{tabular}} & \multicolumn{3}{c|}{\begin{tabular}[c]{@{}c@{}}Proposed \hyperref[algo:fast_sampling]{RAGS} \\ (global)\\ - runtime (min)\end{tabular}} & \multicolumn{3}{c|}{\begin{tabular}[c]{@{}c@{}}Proposed \hyperref[algo:block_sampling]{RABS} \\ (block)\\ - runtime (min)\end{tabular}} \\
                     &                                                            & 10\%                           & 25\%                           & 33\%                          & 10\%                         & 25\%                         & 33\%                         & 10\%                                & 25\%                                & 33\%                                & 10\%                                & 25\%                                & 33\%                                 \\ \hline
\textit{longdress}   & 806K                                                       & 22.19                          & 18.72                          & 14.93                         & 9.00                         & 17.30                        & 21.91                        & 10.41                               & 29.01                               & 40.46                       & \textbf{0.44}                       & \textbf{0.38}                       & \textbf{0.47}                               \\
\textit{loot}        & 792K                                                       & 19.90                          & 16.84                          & 13.98                         & 8.32                         & 15.80                        & 19.78                        & 9.90                                & 28.23                               & 39.68                   &  \textbf{0.40}                       & \textbf{0.35}                       & \textbf{0.43}                                   \\
\textit{redandblack} & 654K                                                       & 16.87                          & 14.13                          & 11.33                         & 6.96                         & 13.42                        & 16.96                        & 6.61                                & 18.68                               & 26.30                      & \textbf{0.32}                       & \textbf{0.28}                       & \textbf{0.36}                              \\
\textit{soldier}     & 1063K                                                      & 28.47                          & 23.90                          & 19.18                         & 11.58                        & 22.40                        & 28.44                        &  18.55                               & 51.88                               & 72.96                        &  \textbf{0.62}                       & \textbf{0.54}                       & \textbf{0.67}                              \\\hline
\textit{Sarah}       & 1088K                                                      & 35.98                          & 28.85                          & 24.96                         & 13.60                        & 28.07                        & 36.63                        & 19.49                               & 54.57                               & 76.52                        & \textbf{0.67}                       & \textbf{0.60}                       & \textbf{0.79}                               \\ 
\textit{David}       & 1241K                                                      & 40.32                          & 32.39                          & 27.97                         & 15.25                        & 31.50                        & 40.21                        & 24.58                               & 69.78                               & 97.94                      &  \textbf{0.78}                       & \textbf{0.71}                       & \textbf{0.87}                                 \\
\textit{Andrew}      & 1299K                                                      & 48.74                          & 39.32                          & 33.78                         & 17.90                        & 38.25                        & 49.19                        & 28.34                               & 79.87                               & 112.03                       & \textbf{0.86}                       & \textbf{0.81}                       & \textbf{1.02}                               \\
\textit{Phil}        & 1344K                                                      & 46.53                          & 37.02                          & 33.58                         & 17.51                        & 36.60                        & 46.95                        & 30.58                               & 85.28                               & 119.83                        & \textbf{0.91}                       & \textbf{0.82}                       & \textbf{1.03}                              \\
\textit{Ricardo}     & 919K                                                       & 32.10                          & 25.96                          & 22.47                         & 12.08                        & 25.23                        & 32.47                        & 13.58                               & 38.42                               & 54.05                      & \textbf{0.54}                       & \textbf{0.50}                       & \textbf{0.64}    \\ \hline                          
\end{tabular}
\end{center}
\end{table*}

\begin{table}[ht]
\begin{center}
\caption{Runtime comparison with fast, non-greedy sampling algorithms - uniform \cite{sridhara2022_chromapc} and geometry feature preserving random (GFPR) PC sampling \cite{chen2017_pcsamplgraph}. }
\begin{adjustbox}{width=0.5\textwidth,center}
\label{tab:runtime_uni_rand}
\begin{tabular}{|c|ccl|ccc|ccc|}
\hline
dataset              & \multicolumn{3}{c|}{\begin{tabular}[c]{@{}c@{}}Uniform sampling \cite{sridhara2022_chromapc}\\ (global)\\ -avg. runtime (s)\end{tabular}} & \multicolumn{3}{c|}{\begin{tabular}[c]{@{}c@{}}GFPR sampling \cite{chen2017_pcsamplgraph}\\ (global)\\ -avg. runtime (s)\end{tabular}} & \multicolumn{3}{c|}{\begin{tabular}[c]{@{}c@{}}Proposed \hyperref[algo:block_sampling]{RABS} \\ (block)\\ -avg. runtime (s)\end{tabular}} \\
\multicolumn{1}{|l|}{} & 10\%                               & 25\%                               & 33\%                              & 10\%                                & 25\%                                & 33\%                               & \multicolumn{1}{l}{10\%}       & \multicolumn{1}{l}{25\%}       & \multicolumn{1}{l|}{33\%}       \\ \hline
\textit{8i}          & \textbf{0.008}                     & \textbf{0.009}                     & \textbf{0.009}                    & 0.10                                & 0.14                                & 0.16                               & 26.97                          & 23.48                          & 29.28                          \\
\textit{MVUB}        & \textbf{0.01}                      & \textbf{0.01}                      & \textbf{0.01}                     & 0.18                                & 0.24                                & 0.28                               & 45.52                          & 41.44                          & 52.51                         \\ \hline
\end{tabular}
\end{adjustbox}
\end{center}
\end{table}

\subsection{Comparison with uniform and graph sampling algorithms}

\label{subsec:comparison}
We conducted a comprehensive comparison of the proposed \hyperref[algo:fast_sampling]{RAGS} and \hyperref[algo:block_sampling]{RABS} algorithms with other existing techniques uniform sampling \cite{sridhara2022_chromapc} (our preliminary work), graph-based geometry feature preserving random (GFPR) sampling for PCs\cite{chen2017_pcsamplgraph}, and existing ED-free graph signal sampling algorithms. 
We consider two deterministic, greedy sampling algorithms - AVM \cite{jayawant2021_avm} and BS-GDA \cite{bai2020_bsgda} for comparison. 
We compare the reconstruction accuracy of the above sampling algorithms using the same reconstruction method in \eqref{eqn:recon_minimization} for all sampling methods
%\footnotemark\footnotetext{AVM is optimized for bandlimited reconstruction, which requires explicit eigen-decomposition. Since using bandlimited reconstruction on PCs globally is impractical, we use the same reconstruction for AVM.}
We will evaluate uniform sampling only at $10\%, 25\%, \text{and } 33\%$ because uniform sampling does not allow other sampling rates, which is one of its major drawbacks. Uniform and GFPR sampling methods scale well to PC graphs, and they can be implemented without partitioning the PC into blocks. While AVM and BS-GDA provide good reconstruction accuracy and are considered fast among existing graph signal sampling algorithms,  they still cannot scale to the graph sizes considered in this work. 
For that reason, we apply them in a block-based manner, using the same block sizes as in our RABS.  
While evaluating RABS, we fix the block size to $2^{6} \times 2^{6} \times 2^{6}$ and we include self-loops with weights computed from \eqref{eqn:self_loop_weight}. We use  $p = \ceil*{\frac{1}{2 \alpha}}$ as in \eqref{eqn:p_estimate_uni}. %Since AVM and BS-GDA cannot scale to sample large PCs without block partitioning, we adopt the block-based framework with the same fixed block size for AVM and BSDGA to compare the reconstruction accuracy and runtime on an equal footing. 

 \autoref{tab:psnr_comparison_table} and \autoref{fig:psnr_comparison} compare these sampling algorithms on PC attribute reconstruction for different sampling rates. Evidently, GFPR sampling, geared towards preserving geometry features, leads to a subpar reconstruction of attributes. While uniform sampling results in competitive reconstruction at higher sampling rates, reconstruction accuracy at lower rates is poor. Not surprisingly, the sampling set obtained from graph signal sampling algorithms - AVM and BS-GDA, resulted in a fairly faithful reconstruction of attributes at lower rates. However, the performance of BS-GDA degrades at higher sampling rates. 
RAGS marginally outperforms RABS at all sampling rates. Overall, the proposed sampling algorithms consistently outperformed BS-GDA and AVM, specifically at higher sampling rates. The drop in PSNR values of the RABS compared to AVM at lower sampling rates could be because of underestimating the value of $p$ from \eqref{eqn:p_estimate_uni}, when $\alpha$ is small. 

\autoref{fig:runtime_comparison} compares the runtime of RAGS and RABS with AVM and BS-GDA. RABS is up to 50 times faster on both datasets compared to AVM and BS-GDA in a similar block-based sampling framework. We show the detailed runtime comparison of graph signal sampling algorithms in \autoref{tab:runtime_comparison_table}. Notably, even RAGS, which is a global algorithm, has a runtime similar to that of a block-based implementation of AVM and BS-GDA, especially at lower rates.
For completeness, we include the runtime comparison between RABS and existing fast PC sampling approaches - uniform and GFPR sampling in \autoref{tab:runtime_uni_rand}. Unsurprisingly, uniform and GFPR sampling algorithms are much faster than RABS but result in poor reconstruction accuracy, as shown in  \autoref{tab:psnr_comparison_table}. 

In summary, the proposed \hyperref[algo:fast_sampling]{RAGS} and  \hyperref[algo:block_sampling]{RABS} algorithms  have comparable reconstruction accuracy with the fastest graph-based sampling algorithms. While RAGS has comparable runtime, RABS is 20-50 times faster than the existing graph signal sampling algorithms. The efficient and adaptive computation of interpolators based on the sampling rate allows us to scale the proposed algorithm to larger point clouds and higher sampling rates, while the block-based approach allows us to exploit sparsity and locality efficiently. Also, the proposed RABS outperforms uniform and GFPR sampling methods by a large margin in reconstruction accuracy.
\subsection{PC attribute compression via sampling}
\label{subsec_exp_chroma}
We provide an end-to-end evaluation of PC attribute sampling in a compression system. Similar to image and video encoding, we sample the chrominance channel before compression to exploit the fact that errors in chrominance are perceptually less significant than errors in luminance and reconstruct the sampled signal at the decoder end. In our previous work, we proposed a chroma sub-sampling framework for PC attribute compression through uniform sampling \cite{sridhara2022_chromapc}. As discussed, the uniform sampling approach will result in sub-optimal signal recovery for a given sampling rate. In this section, we evaluate chroma subsampling using the proposed \hyperref[algo:block_sampling]{RABS} algorithm. In all our experiments, we sample the chrominance signal at a sampling rate of 25\%.  We evaluate chroma subsampling on the 8i PC dataset. We use a transform coding system comprised of the Region adaptive hierarchical transform (RAHT) \cite{Ricardo2016_raht}, and run-length Golomb-Rice entropy coding \cite{malvar2006adaptive}. We compare the rate-distortion (RD) curves, and report bitrate savings and average psnr gain using Bjontegaard metric \cite{bjontegaard2001}. We employ uniform sampling and \hyperref[algo:block_sampling]{RABS} to sample the chroma channels and compare the coding gains. The point-wise distortion in RGB space is calculated using,
\begin{equation}
    \resizebox{.85\hsize}{!}{$\text{PSNR}_{rgb}=-10\log_{10}\left(\frac{\Vert \rv -\hat{\rv}\Vert_{2}^{2}+\Vert \gv-\hat{\gv} \Vert_{2}^{2}+\Vert \bv-\hat{\bv}\Vert_{2}^{2}}{(3N255^2)}\right)$},
\end{equation}
where $N$ is the number of points in the full-resolution PC.

\begin{table}[htb]
\centering
\captionsetup{justification=centering}
\caption{Compression results when chroma channels are sampled at $25\%$ using \hyperref[algo:block_sampling]{RABS} and uniform sampling}
\label{tab:chroma_table}
\begin{tabular}{|c|cc|cc|}

\hline
sequence    & \multicolumn{2}{c|}{Proposed RABS}                                                                                                               & \multicolumn{2}{c|}{Uniform sampling}                                                                                                                \\ \hline
            & \multicolumn{1}{c|}{\begin{tabular}[c]{@{}c@{}}avg. PSNR \\ gain (dB)\end{tabular}} & \begin{tabular}[c]{@{}c@{}}bitrate \\ saving (\%)\end{tabular} & \multicolumn{1}{c|}{\begin{tabular}[c]{@{}c@{}}avg. PSNR \\ gain (dB)\end{tabular}} & \begin{tabular}[c]{@{}c@{}}bitrate \\ saving (\%)\end{tabular} \\ \hline
longdress   & \multicolumn{1}{c|}{\textbf{0.64}}                                                  & \textbf{10.55}                                                 & \multicolumn{1}{c|}{0.39}                                                           & 4.04                                                           \\ \hline
redandblack & \multicolumn{1}{c|}{\textbf{0.48}}                                                  & \textbf{6.49}                                                  & \multicolumn{1}{c|}{0.27}                                                           & 5.22                                                           \\ \hline
loot        & \multicolumn{1}{c|}{0.11}                                                           & 2.81                                                           & \multicolumn{1}{c|}{\textbf{0.12}}                                                  & \textbf{2.98}                                                  \\ \hline
soldier     & \multicolumn{1}{c|}{\textbf{0.11}}                                                  & \textbf{2.19}                                                  & \multicolumn{1}{c|}{0.09}                                                           & 1.99                                                           \\ \hline
\end{tabular}
\end{table}

The RD curves for two PC sequences in the RGB domain are shown in \autoref{fig:chroma_exps} and show a considerable gain in attribute coding. 
As shown in the \autoref{tab:chroma_table}, the bitrate savings (bpp) for \textit{redandblack} and \textit{longdress} sequences are around $7\%-11\%$, and for \textit{loot} and \textit{soldier} sequences it around $2\%-3\%$ (not shown due to space). The coding gains of the proposed sampling over uniform sampling are larger for \textit{longdress} and \textit{redandblack} sequence than for  \textit{loot} and \textit{soldier}, since the latter sequences are smoother.

\begin{figure}[t]
     \centering
        \begin{subfigure}[b]{0.24\textwidth}
            \centering
            \includegraphics[width=\linewidth]{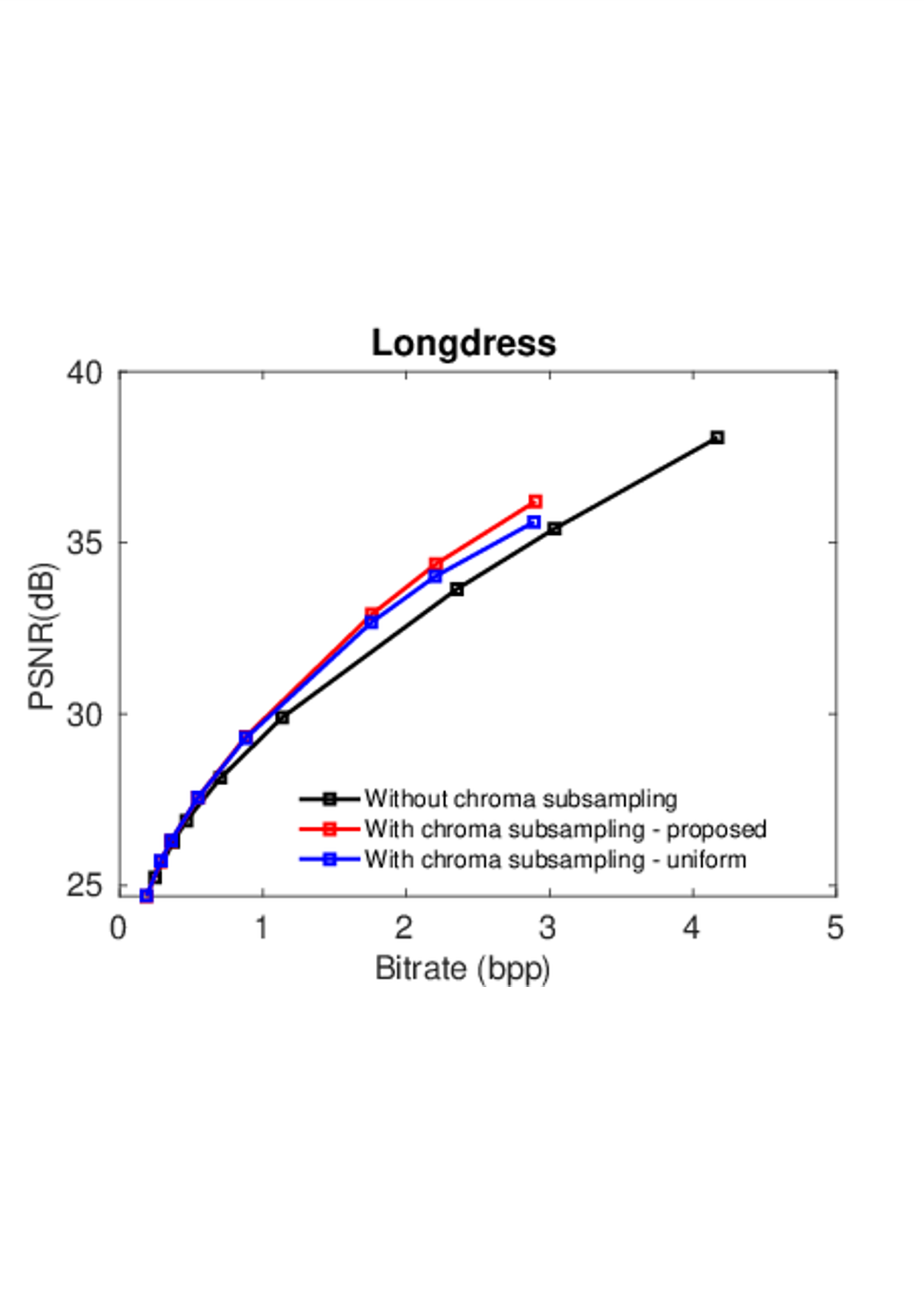}
        \end{subfigure}
     %\begin{subfigure}[b]{0.24\textwidth}
     %    \centering
     %    \includegraphics[width=\linewidth]{Plots_eps/loot_chroma.pdf}
     %\end{subfigure}
      \begin{subfigure}[b]{0.24\textwidth}
         \centering
         \includegraphics[width=\linewidth]{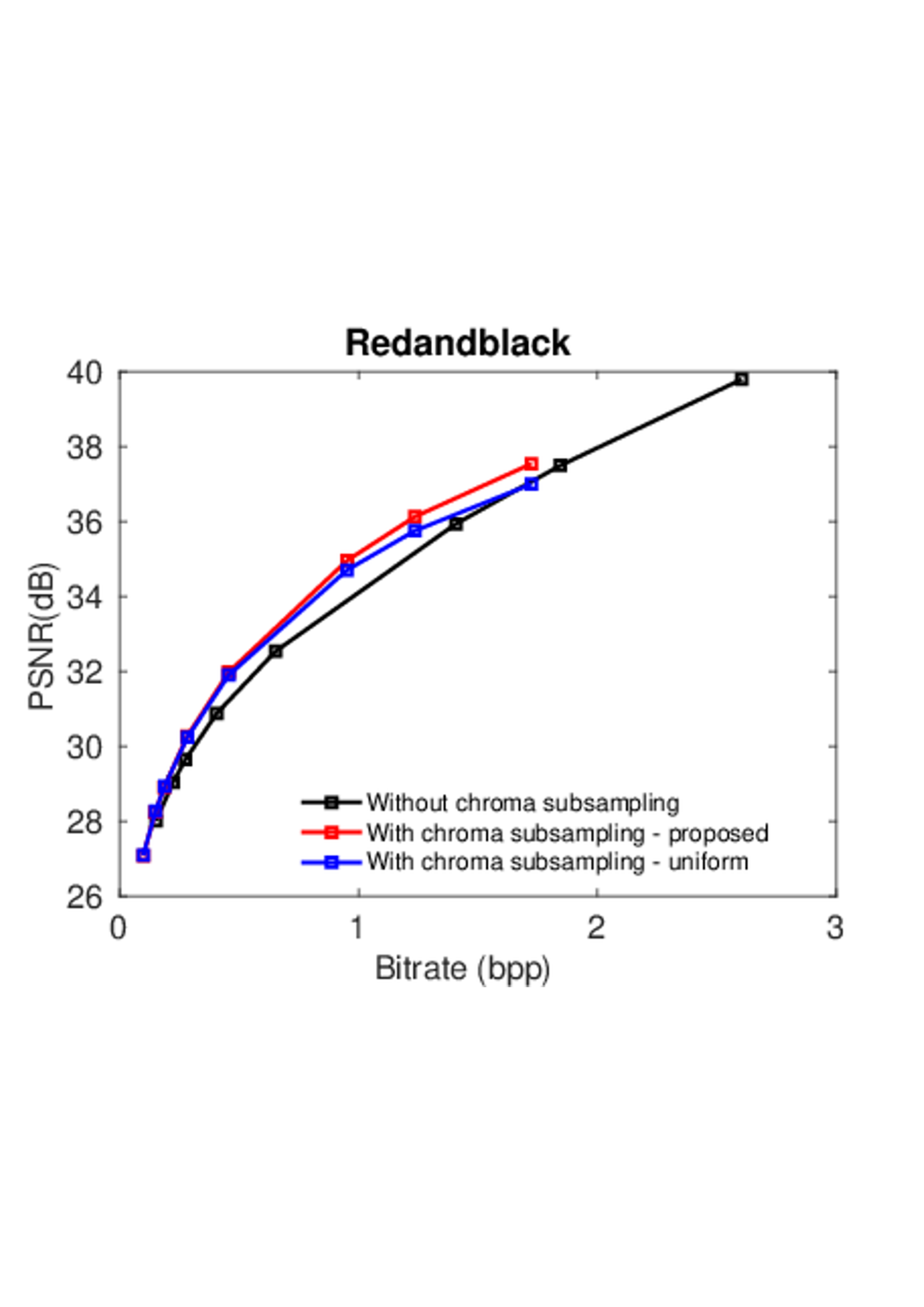}
     \end{subfigure}
    %\begin{subfigure}[b]{0.24\textwidth}
    %     \centering
    %     \includegraphics[width=\linewidth]{Plots_eps/soldier_chroma.pdf}

     %\end{subfigure}
\caption{Rate-distortion curves for PC color attribute compression when chrominance signal is sampled at a rate of 25\%.}
\label{fig:chroma_exps}
\end{figure}

\section{Conclusion}
\label{sec:conclusion}
\noindent We have proposed scalable graph-based sampling algorithms for PC attributes.
We first proposed a reconstruction-aware global sampling set selection algorithm based on a fast reconstruction method.  We showed how to construct interpolators that can be used for sampling. The proposed interpolators can be efficiently implemented without sacrificing performance by exploiting the interplay between the localization (sparsity) of the interpolators and the sampling rate. We proposed a method to select the suitable localization parameter for sampling.
To further reduce the sampling complexity, we proposed a block-based algorithm, where we sample blocks independently. We proposed a method to locally compute interpolators by adding self-loops to the sub-graphs, leading to an order of magnitude reduction in complexity and runtime,  while maintaining the good performance of global sampling. 
We evaluated the proposed global and block sampling algorithms on large PCs. We showed that our proposed sampling algorithms are up to $50$ times faster than existing graph-based sampling algorithms by ensuring similar or better reconstruction. Our sampling algorithm outperforms commonly used PC sampling techniques - uniform and geometry feature preserving sampling with respect to reconstruction accuracy. Additionally, we showcased a practical application in PC attribute compression through chroma subsampling, which resulted in bitrate savings of up to $11\%$.
\bibliographystyle{IEEEbib}
\bibliography{strings}
%\newpage
\appendix
%\section{Supplem}

\subsection{Complexity of RAGS}
\label{app:complexity}
Since $\Qm(1) = \Zm$, we can compute $\Qm(p)$ using the recursive formula $\Qm(t) = \Zm + \Qm(t-1)\Zm $, and thus the complexity of computing $\Qm(p)$ depends on the cost of sparse matrix product and sparse matrix addition. For $p=2$, note that the matrix $\Zm^2$ has at most $\Oc(\Bar{d}^2N)$ entries, and each of its entries is computed by the inner product of two $\Bar{d}$-sparse vectors, thus the total complexity of computing $\Zm^2$ is $\Oc(\Bar{d}^3 N)$. The matrix addition has complexity proportional to the number of entries in $\Zm^2$, which is $\Oc(\Bar{d}^2N)$. In general for $t \geq 2$, the product $\Qm(t-1)\Zm$ has $\Oc(\Bar{d}^t N)$ non-zero entries and each entry is the inner product of a $\Bar{d}$-sparse vector with a $\Bar{d}^{t-1}$-sparse vector, which has cost $\Oc(\Bar{d}^{t-1})$ per inner product, thus the computation of the product $\Qm(t-1)\Zm$ has cost $\Oc(\Bar{d}^{2t}N)$. The complexity of the matrix addition is dominated by the number of entries of the densest matrix, which is $\Oc(\Bar{d}^t N)$. Because the recursion must be applied $p-1$ times, the total complexity of computing $\Qm(p)$ is dominated by the cost matrix multiplication applied $p$ times, which is given by $\Oc(p\Bar{d}^{2p} N)$.
To compute the inner products between the columns of $\Qm(p)$, we can compute the product $(\Qm(p))^{\top}\Qm(p)$. This matrix has $\Bar{d}^{2p}N$ non-zero entries, and each of them is computed by an inner product between $\Bar{d}^p$-sparse vectors, resulting in total complexity of $\Oc(\Bar{d}^{3p}N)$.
 The complexity of the preparation step is thus dominated by the computation of $(\Qm(p))^{\top}\Qm(p)$ over the computation of  $\Qm(p)$, which results in a total complexity of $\Oc(\Bar{d}^{3p} N)$.
 \begin{remark}\label{remark_app_complexity_densegraph}
   Note that if $\Bar{d}$ is large (e.g., for irregular graphs) or if $p$ is large,  $\Oc(\Bar{d}^{2t}N)$ may significantly overestimate the number of entries in  $\Qm(t)$, in which case we say that the number of entries is $\Oc(N^2)$, resulting in a worst-case complexity $\Oc(\Bar{d}^p N^2)$. In this work, we can assume that  $\Bar{d}$ is relatively small with respect to $N$ since the point clouds are dense and points belong to the 3D integer grid, and $p$ can be kept small by using techniques developed in \autoref{sec:sampling_algo_development}.
 \end{remark}
Regarding sampling, since the inner products between columns of $\Qm(p)$ are already computed, the greedy sampling set update only involves adding  $\Oc(N)$ inner products and solving the maximization problem in \eqref{eqn:greedy_update}. This has  $\mathcal{O}(N)$ cost per iteration. Since we have to repeat this step $s$ times, the overall complexity of the sampling set update is $\mathcal{O}(sN)$.

\subsection{Proof of \autoref{lemma_equivalent_interpolators} }
\label{app:reconstruction_proof}
%
% The downsampling followed by upsampling operators  onto the sets $\Sc$ and $\Rc$ as $\Pm_{\Sc}$ and $\Pm_{\Rc}$ are
% \begin{align}
%     \label{eqn:sampl_operator}
%         \Pm_{\Sc} \fv = \begin{bmatrix}
%         \fv_{\Sc}^{\top} & \zerov^{\top}
%     \end{bmatrix}^{\top}, \quad
%     \Pm_{\Rc} \fv = \begin{bmatrix}
%         \zerov^{\top} & \fv_{\Rc}^{\top} 
%     \end{bmatrix}^{\top}.
% \end{align}
% Each iteration in label propagation can be written as,
% \begin{equation}
%     \label{eqn:linear_stat_itr}
%     \hat{\fv}^{(i)} = \Pm_{\Rc} \Zm \hat{\fv}^{(i-1)} + \Pm_{\Sc} \fv.
% \end{equation}
We divide the proof in two Lemmas, one for each equality.
\begin{lemma}
   The closed form solution \eqref{eqn:recon_soln} for reconstruction can be expressed equivalently in terms of   $\Zm$ defined in \eqref{eqn:1_hop_operator}
    \begin{equation}
        \hat{\fv} = (\Id - \Pm_{\Rc} \Zm)^{-1} \Pm_{\Sc} \fv=  \begin{bmatrix}
                        \fv_{\Sc}\\
                        - \Lm_{\Rc\Rc}^{-1} \Lm_{\Rc\Sc} \fv_{\Sc}
                    \end{bmatrix}.
    \end{equation}
\end{lemma}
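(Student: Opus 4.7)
The plan is to prove the equality $\hat{\fv} = (\Id - \Pm_{\Rc}\Zm)^{-1}\Pm_{\Sc}\fv$ by verifying that the closed-form expression on the right-hand side of \eqref{eqn:recon_soln} satisfies the linear system $(\Id - \Pm_{\Rc}\Zm)\hat{\fv} = \Pm_{\Sc}\fv$, after first checking that $\Id - \Pm_{\Rc}\Zm$ is invertible.

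Without loss of generality, I would reorder the vertices so that the indices in $\Sc$ come first, followed by those in $\Rc$. Under this ordering, $\Pm_{\Sc}\fv = [\fv_{\Sc}^{\transp},\zerov^{\transp}]^{\transp}$, and $\Pm_{\Rc}\Zm$ has a block-row form whose only nonzero block-row is its second. Using the identity $\Zm = \Id - \tfrac{1}{2}\Dm^{-1}\Lm$ together with the fact that $\Dm$ is diagonal, so that row-subselection commutes with inversion, I obtain the two key block identities
\begin{equation*}
\Zm_{\Rc\Rc} = \Id - \tfrac{1}{2}\Dm_{\Rc\Rc}^{-1}\Lm_{\Rc\Rc}, \qquad \Zm_{\Rc\Sc} = -\tfrac{1}{2}\Dm_{\Rc\Rc}^{-1}\Lm_{\Rc\Sc}.
\end{equation*}
Consequently, $\Id - \Pm_{\Rc}\Zm$ is block lower triangular with diagonal blocks $\Id$ and $\tfrac{1}{2}\Dm_{\Rc\Rc}^{-1}\Lm_{\Rc\Rc}$, hence invertible whenever $\Lm_{\Rc\Rc}$ is, which is precisely the standard condition under which \eqref{eqn:recon_soln} itself is well posed.

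With invertibility secured, the verification reduces to a block-wise computation. The top block of $(\Id - \Pm_{\Rc}\Zm)\hat{\fv}$ returns $\fv_{\Sc}$, matching the top block of $\Pm_{\Sc}\fv$. The bottom block yields
\begin{equation*}
-\Zm_{\Rc\Sc}\fv_{\Sc} + (\Id - \Zm_{\Rc\Rc})\bigl(-\Lm_{\Rc\Rc}^{-1}\Lm_{\Rc\Sc}\fv_{\Sc}\bigr),
\end{equation*}
which, after substituting the two identities above, collapses to $\tfrac{1}{2}\Dm_{\Rc\Rc}^{-1}\Lm_{\Rc\Sc}\fv_{\Sc} - \tfrac{1}{2}\Dm_{\Rc\Rc}^{-1}\Lm_{\Rc\Sc}\fv_{\Sc} = \zerov$, matching the bottom block of $\Pm_{\Sc}\fv$.

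There is no substantive obstacle here; the only point that demands care is the passage from $\Dm^{-1}\Lm$ to its restriction to rows in $\Rc$, which is immediate because $\Dm$ is diagonal and therefore $(\Dm^{-1}\Lm)_{\Rc\cdot} = \Dm_{\Rc\Rc}^{-1}\Lm_{\Rc\cdot}$. Once this is stated, the entire proof reduces to the two algebraic identities for $\Zm_{\Rc\Rc}$ and $\Zm_{\Rc\Sc}$ together with a one-line cancellation, so I would present it compactly in that form.
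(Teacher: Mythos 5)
Your proof is correct and follows essentially the same route as the paper: both exploit the block lower-triangular structure of $\Id - \Pm_{\Rc}\Zm$ under the $\Sc$-first ordering and reduce to the same identity $(\Id_{\Rc\Rc}-\Zm_{\Rc\Rc})^{-1}\Zm_{\Rc\Sc} = -\Lm_{\Rc\Rc}^{-1}\Lm_{\Rc\Sc}$, with the only difference being that you verify by forward multiplication (and explicitly check invertibility) whereas the paper applies the block matrix inverse formula directly.
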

\begin{proof}
Using  $\Pm_{\Sc}\fv=\begin{bmatrix}
                        \fv_{\Sc}^{\top} &
                        \zerov^{\top}
                    \end{bmatrix}^{\top}$, and the block matrix inverse formula \cite{horn2012_matrix} we can show that
% \begin{equation}
%     (\Id - \Pm_{\Rc} \Zm)^{-1} \begin{bmatrix}
%                         \fv_{\Sc} \\
%                         \zerov 
%                     \end{bmatrix}= \begin{bmatrix}
%                         \Id_{\Sc \Sc}  & 0\\
%                         - \Zm_{\Rc \Sc}  & (\Id_{\Rc \Rc} - \Zm_{\Rc \Rc})
%                     \end{bmatrix}^{-1} \begin{bmatrix}
%                         \fv_{\Sc} \\
%                         \zerov 
%                     \end{bmatrix}.
% \end{equation}
% From block matrix inverse formula \cite{horn2012_matrix}, this is equal to
% \begin{equation}\label{eq_lemmaIII_blockinverse}
%      \begin{bmatrix}
%                         \Id_{\Sc \Sc}  & 0\\
%                         (\Id_{\Rc \Rc} - \Zm_{\Rc \Rc})^{-1} \Zm_{\Rc \Sc}  & (\Id_{\Rc \Rc} - \Zm_{\Rc \Rc})^{-1}
%                     \end{bmatrix} \begin{bmatrix}
%                         \fv_{\Sc} \\
%                         \zerov 
%                     \end{bmatrix}.
% \end{equation}
\begin{equation}
    (\Id - \Pm_{\Rc} \Zm)^{-1} \Pm_{\Sc} \fv= \begin{bmatrix}
                        \Id_{\Sc \Sc} \\
                        (\Id_{\Rc \Rc} - \Zm_{\Rc \Rc})^{-1} \Zm_{\Rc \Sc} 
                    \end{bmatrix} \fv_{\Sc}.
\end{equation}
It is easy to check that  $(\Id_{\Rc \Rc} - \Zm_{\Rc \Rc})^{-1} \Zm_{\Rc \Sc}=- \Lm_{\Rc\Rc}^{-1} \Lm_{\Rc\Sc}$, which combined with \eqref{eqn:recon_soln} produces the desired result.
\end{proof}
% \begin{equation}
%     \hat{\fv}_{\Rc} =
%                         (\Id_{\Rc\Rc}-  \frac{1}{2} (\Id_{\Rc\Rc} + \Dm_{\Rc\Rc}^{-1} \Am_{\Rc\Rc} ))^{-1} 
%                         \frac{1}{2} \Dm_{\Rc\Rc}^{-1}\Am_{\Rc\Sc}  
%                         \fv_{\Sc}.
% \end{equation}

% \begin{equation}
%     \hat{\fv}_{\Rc} =
%                         2 ( \Id_{\Rc\Rc} -\Dm_{\Rc\Rc}^{-1} \Am_{\Rc\Rc})^{-1} 
%                         \frac{1}{2} \Dm_{\Rc\Rc}^{-1}\Am_{\Rc\Sc} 
%                         \fv_{\Sc}. 
% \end{equation}

% \begin{equation}
%    ( \Id_{\Rc\Rc} -\Dm_{\Rc\Rc}^{-1} \Am_{\Rc\Rc})  \hat{\fv}_{\Rc} =
%                          \Dm_{\Rc\Rc}^{-1}\Am_{\Rc\Sc} 
%                         \fv_{\Sc}.
% \end{equation}
% Multiplying $\Dm_{\Rc\Rc}$ on both sides, we get

% \begin{equation}
%    ( \Dm_{\Rc\Rc} - \Am_{\Rc\Rc}) \hat{\fv}_{\Rc} =
%                         \Am_{\Rc\Sc} 
%                         \fv_{\Sc}. 
% \end{equation}

% We can write $\Am_{\Rc\Sc} = - \Lm_{\Rc\Sc}$, and obtain the desired result

% \begin{equation}
%     \hat{\fv}_{\Rc} = - \Lm_{\Rc\Rc}^{-1}
%                         \Lm_{\Rc\Sc} 
%                         \fv_{\Sc}. 
% \end{equation}
% and the desired result
% \begin{equation}
%   \hat{\fv} = (\Id - \Pm_{\Rc} \Zm)^{-1} \Pm_{\Sc} \fv =  \begin{bmatrix}
%                         \fv_{\Sc}\\
%                         - \Lm_{\Rc\Rc}^{-1} \Lm_{\Rc\Sc} \fv_{\Sc}
%             \end{bmatrix}.
% \end{equation}

\begin{lemma}
\label{lemma:convergence}
    We have the  identity
    \begin{equation}
        (\Id - \Pm_{\Rc} \Zm)^{-1} \Pm_{\Sc} = \sum_{l=0}^{\infty} (\Pm_{\Rc} \Zm)^{l} \Pm_{\Sc}.
    \end{equation}
\end{lemma}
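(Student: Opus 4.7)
The plan is to reduce the claimed identity to the classical Neumann series fact: if $M \in \mathbb{R}^{N \times N}$ satisfies $\rho(M) < 1$, then $\Id - M$ is invertible and $(\Id - M)^{-1} = \sum_{l=0}^{\infty} M^{l}$, from which right-multiplication by $\Pm_{\Sc}$ gives the lemma. So the entire task is to show that $M := \Pm_{\Rc}\Zm$ has spectral radius strictly less than one.

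First I would reorder the vertices so that $\Sc$ comes before $\Rc$ and write $M$ in the block form
\begin{equation*}
    \Pm_{\Rc}\Zm \;=\; \begin{bmatrix} \zerov & \zerov \\ \Zm_{\Rc\Sc} & \Zm_{\Rc\Rc} \end{bmatrix},
\end{equation*}
which is block lower-triangular. Consequently $\mathrm{spec}(M) = \{0\} \cup \mathrm{spec}(\Zm_{\Rc\Rc})$ and it suffices to bound the eigenvalues of $\Zm_{\Rc\Rc}$. Using the definition $\Zm = \Id - \tfrac{1}{2}\Dm^{-1}\Lm$ restricted to the $\Rc$-principal block gives the key identity
\begin{equation*}
    \Id_{\Rc\Rc} - \Zm_{\Rc\Rc} \;=\; \tfrac{1}{2}\,\Dm_{\Rc\Rc}^{-1}\Lm_{\Rc\Rc},
\end{equation*}
which is similar to the symmetric matrix $\tfrac{1}{2}\Dm_{\Rc\Rc}^{-1/2}\Lm_{\Rc\Rc}\Dm_{\Rc\Rc}^{-1/2}$; this matrix is strictly positive definite because $\Lm_{\Rc\Rc}$ is the Dirichlet (reduced) Laplacian of a connected graph with nonempty boundary $\Sc$, a standard fact. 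Hence $\Id_{\Rc\Rc} - \Zm_{\Rc\Rc}$ has strictly positive real eigenvalues, so every eigenvalue of $\Zm_{\Rc\Rc}$ is real and $< 1$. To get $\rho(\Zm_{\Rc\Rc}) < 1$ I would also argue non-negativity by noting that $\Zm$ is similar to the symmetric PSD matrix $\tfrac{1}{2}(\Id + \Dm^{-1/2}\Am\Dm^{-1/2})$ whose eigenvalues lie in $[0,1]$; Cauchy interlacing applied to the principal $\Rc$-submatrix keeps them in $[0,1]$. Combining the two bounds yields $\mathrm{spec}(\Zm_{\Rc\Rc}) \subset [0,1)$, hence $\rho(M) < 1$ and the Neumann series converges.

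The main obstacle is the positive-definiteness of the reduced Laplacian $\Lm_{\Rc\Rc}$: if the graph $\Gc$ has disconnected components that contain no sample, then $\Lm_{\Rc\Rc}$ is only positive semidefinite and the Neumann series diverges (which is consistent with the reconstruction problem \eqref{eqn:recon_minimization} itself being ill-posed on such components). Under the usual working assumption that every connected component of $\Gc$ has at least one sample, this step is standard, and the remainder of the argument is a direct invocation of the geometric series identity for matrices.
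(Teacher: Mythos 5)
Your proof is correct, and it takes a genuinely different route from the paper's. The paper reduces the identity to a strict non-expansiveness claim for $\Pm_{\Rc}\Zm$ in the Euclidean norm (\autoref{lemma:non_expansive}), arguing $\Vert\Pm_{\Rc}\Zm\xv\Vert\le\Vert\Zm\xv\Vert\le\Vert\xv\Vert$ and using irreducibility of $\Zm$ to rule out equality. You instead work with the spectral radius: after reordering, $\Pm_{\Rc}\Zm$ is block lower-triangular with diagonal blocks $\zerov$ and $\Zm_{\Rc\Rc}$, so everything reduces to $\mathrm{spec}(\Zm_{\Rc\Rc})\subset[0,1)$, which you obtain from positive definiteness of the Dirichlet Laplacian $\Lm_{\Rc\Rc}$ (eigenvalues real and $<1$) and interlacing for the symmetric matrix $\tfrac12(\Id+\Dm^{-1/2}\Am\Dm^{-1/2})$ to which $\Zm$ is similar (eigenvalues $\ge 0$). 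This buys two things. First, the spectral radius is the exact criterion for Neumann convergence, and your similarity arguments are insensitive to the fact that $\Zm=\tfrac12(\Id+\Dm^{-1}\Am)$ is not symmetric; the Euclidean bound $\Vert\Zm\xv\Vert\le\Vert\xv\Vert$ used in the paper actually fails for irregular graphs (on a three-node path with $\xv=(1,2,1)^{\top}$ one gets $\Vert\Zm\xv\Vert^{2}=27/4>6=\Vert\xv\Vert^{2}$) and would need the degree-weighted inner product to go through, so your route is the more robust one. Second, you make explicit the hypothesis both arguments need: every connected component of $\Gc$ must contain at least one sample (the paper assumes connectedness and, implicitly, $\Sc\neq\emptyset$ in its ``for connected graphs'' step); without it $\Lm_{\Rc\Rc}$ is singular, $1\in\mathrm{spec}(\Zm_{\Rc\Rc})$, and the identity genuinely fails.
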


\begin{proof}
 The truncated interpolator matrix is given by
\begin{equation}
    \sum_{l=0}^{k} (\Pm_{\Rc} \Zm)^{l} \Pm_{\Sc}.
\end{equation}
This sum will converge  if and only if  $\Vert \Pm_{\Rc}\Zm \Vert<1$, which is proven in \autoref{lemma:non_expansive}. Therefore, from Neumann series for matrices (Th. 5.6.12 of \cite{horn2012_matrix}) we have that $\lim_{i \rightarrow \infty} \sum_{l=0}^{i} (\Pm_{\Rc} \Zm)^{l} = (\Id - \Pm_{\Rc} \Zm)^{-1}$.
\end{proof}
\begin{lemma}
\label{lemma:non_expansive}
    The operator  $\Pm_{\Rc}\Zm$ is strictly non-expansive, that is, $\Vert \Pm_{\Rc}\Zm \xv\Vert <1$ for all $\xv$.
    % \begin{equation}
    %     \norm{\Pm_{\Rc}\Zm \xv } < \norm{\xv} ,  \forall \xv.
    % \end{equation}
\end{lemma}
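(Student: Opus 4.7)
The plan is to prove that $\Pm_{\Rc}\Zm$ is strictly contractive in the $\Dm$-weighted inner product $\langle \xv, \yv\rangle_{\Dm} := \xv^{\top}\Dm\yv$. This is the natural norm for the problem because $\Zm = \Id - \tfrac{1}{2}\Dm^{-1}\Lm$ becomes self-adjoint with respect to it, and the conclusion $\|\Pm_{\Rc}\Zm\|_{\Dm} < 1$ immediately gives $\rho(\Pm_{\Rc}\Zm)<1$ via norm equivalence on $\mathbb{R}^{N}$, which is what Lemma \ref{lemma:convergence} needs. The first step is to symmetrize by conjugation: define
\begin{equation*}
\tilde{\Zm} := \Dm^{1/2}\Zm\Dm^{-1/2} = \Id - \tfrac{1}{2}\Dm^{-1/2}\Lm\Dm^{-1/2} = \Id - \tfrac{1}{2}\Lm_{\mathrm{sym}}.
\end{equation*}
Since $\Lm_{\mathrm{sym}}$ is PSD with spectrum in $[0,2]$, $\tilde{\Zm}$ is symmetric with spectrum in $[0,1]$, so $\|\tilde{\Zm}\|_{op} \leq 1$. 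Because $\Pm_{\Rc}$ and $\Dm^{1/2}$ are both diagonal they commute, and a short calculation with $\yv := \Dm^{1/2}\xv$ gives
\begin{equation*}
\|\Pm_{\Rc}\Zm\xv\|_{\Dm}^{2} = \xv^{\top}\Dm^{1/2}\tilde{\Zm}^{\top}\Pm_{\Rc}\tilde{\Zm}\Dm^{1/2}\xv = \|\Pm_{\Rc}\tilde{\Zm}\yv\|_{2}^{2}.
\end{equation*}
Hence the $\Dm$-operator norm of $\Pm_{\Rc}\Zm$ equals the standard spectral norm of $\Pm_{\Rc}\tilde{\Zm}$, and it suffices to prove $\|\Pm_{\Rc}\tilde{\Zm}\|_{op} < 1$.

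For the strict bound I would argue by contradiction. Suppose a unit vector $\yv$ attains $\|\Pm_{\Rc}\tilde{\Zm}\yv\|_{2} = 1$. Then the chain
\begin{equation*}
1 = \|\Pm_{\Rc}\tilde{\Zm}\yv\|_{2} \leq \|\tilde{\Zm}\yv\|_{2} \leq \|\tilde{\Zm}\|_{op} \leq 1
\end{equation*}
is tight throughout. Tightness of the right inequality forces $\yv$ to lie in the $+1$-eigenspace of $\tilde{\Zm}$, equivalently in $\ker(\Lm_{\mathrm{sym}})$; on a graph with connected components $\mc{C}_{1},\dots,\mc{C}_{c}$ this kernel is spanned by $\{\Dm^{1/2}\mathbf{1}_{\mc{C}_{j}}\}_{j=1}^{c}$, each of which is strictly positive on its own component. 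Tightness of the left inequality forces $(\tilde{\Zm}\yv)_{i} = y_{i} = 0$ for every $i \in \Sc$. Writing $\yv = \sum_{j}\alpha_{j}\Dm^{1/2}\mathbf{1}_{\mc{C}_{j}}$, any component $\mc{C}_{j}$ intersecting $\Sc$ must have $\alpha_{j}=0$; under the standard recovery assumption that $\Sc$ meets every connected component this forces $\yv = 0$, contradicting $\|\yv\| = 1$. Therefore $\|\Pm_{\Rc}\tilde{\Zm}\|_{op} < 1$.

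The main obstacle is precisely the strict inequality: since $\|\tilde{\Zm}\|_{op} \leq 1$ is attained at the Perron eigenvectors, a purely spectral estimate only delivers $\leq 1$, and one has to rule out all extremal vectors surviving the diagonal masking $\Pm_{\Rc}$. Isolating $\ker(\Lm_{\mathrm{sym}})$ and exploiting the strict positivity of its natural basis on each component is what makes the non-vacuity of $\Sc$ on every component bite. Once the strict bound is obtained, the similarity $\Pm_{\Rc}\tilde{\Zm} = \Dm^{1/2}(\Pm_{\Rc}\Zm)\Dm^{-1/2}$ transfers it to $\rho(\Pm_{\Rc}\Zm) < 1$, which is exactly the condition needed in Lemma \ref{lemma:convergence} for the Neumann expansion $\sum_{l \geq 0}(\Pm_{\Rc}\Zm)^{l}$ to converge to $(\Id - \Pm_{\Rc}\Zm)^{-1}$.
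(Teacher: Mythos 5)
Your proof is correct, but it takes a genuinely different route from the paper's. The paper stays in the Euclidean norm and writes $\Vert \Pm_{\Rc}\Zm\xv\Vert_2^2 \le \Vert\Zm\xv\Vert_2^2 \le \Vert\xv\Vert_2^2$, then analyzes the two equality cases via Perron--Frobenius: for a connected graph, $\Vert\Zm\xv\Vert_2 = \Vert\xv\Vert_2$ is claimed to force $\xv = \alpha\mathbf{1}$, and for such $\xv$ the masking by $\Pm_{\Rc}$ is strictly contractive because $\Zm\mathbf{1} = \mathbf{1}$ is nonzero on the nonempty set $\Sc$. You instead pass to the degree-weighted inner product, conjugate $\Zm$ to the symmetric $\Id - \tfrac{1}{2}\Lm_{\mathrm{sym}}$ with spectrum in $[0,1]$, and kill the extremal vectors by identifying the $+1$-eigenspace with $\ker(\Lm_{\mathrm{sym}})$, whose natural basis is strictly positive on each component. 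What your route buys is robustness: the paper's middle step $\Vert\Zm\xv\Vert_2 \le \Vert\xv\Vert_2$ is delicate, because the asymmetric random-walk operator $\Zm$ has eigenvalues in $[0,1]$ but its largest \emph{singular} value can exceed $1$ on irregular graphs (on the three-node path with $\xv=(1,2,1)^{\top}$ one gets $\Vert\Zm\xv\Vert_2^2 = 6.75 > 6 = \Vert\xv\Vert_2^2$), so the Euclidean chain does not hold verbatim; in the $\Dm$-norm, self-adjointness makes the analogous bound automatic. Your argument also covers disconnected graphs, at the price of the explicit assumption that $\Sc$ meets every component (the paper's connectedness plus $\Sc\neq\emptyset$ is the special case). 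The one caveat is that you establish strict contraction in $\Vert\cdot\Vert_{\Dm}$ rather than $\Vert\cdot\Vert_2$; via the similarity $\Pm_{\Rc}\tilde{\Zm}=\Dm^{1/2}\Pm_{\Rc}\Zm\Dm^{-1/2}$ this still yields $\rho(\Pm_{\Rc}\Zm)<1$, which is all that \autoref{lemma:convergence} requires, but since $\Vert\Pm_{\Rc}\Zm\Vert$ is later used in the Euclidean sense (e.g., in \autoref{thrm:selectin_p}), you should state explicitly which norm your bound controls.
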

\begin{proof}
Note that $\Vert \Pm_{\Rc}\Zm\xv\Vert^2 = \sum_{i \in \Rc}(\langle \bv_i,\xv\rangle)^2 \leq \sum_{i =1}^N(\langle \bv_i,\xv\rangle)^2 = \Vert \Zm \xv\Vert^2 \leq \Vert \xv \Vert^2$, where $\bv_i$ is defined in Appendix \ref{app_matrix_inequality}.
The matrices $\Zm$ and $\Pm_{\Rc}$ are both non expansive. In fact,  for connected graphs  $\Vert \Zm \xv \Vert = \Vert \xv \Vert$ if and  only if  $\xv = \alpha \cdot \mathbf{1}$ \cite{horn2012_matrix}. 
If $\xv = \alpha \mathbf{1}$ for some $\alpha \neq 0$, the first inequality becomes strict because $\Zm$ is a non negative irreducible matrix. If $\xv \neq \alpha \mathbf{1}$
the second inequality becomes strict.
% Thus label propagation \eqref{eqn:linear_stat_itr} converges to,
% \begin{equation}
%     \hat{\fv} = (\Id - \Pm_{\Rc} \Zm)^{-1} \Pm_{\Sc} \fv.
% \end{equation}
% \end{proof}
%
%
%
%\subsection{Choosing \texorpdfstring{$p$}{TEXT}  based on sampling rate \texorpdfstring{$\alpha$}{TEXT} }
%\label{app:proof_approx_p}
\subsection{Proof of \autoref{lemma:error_p}}
We consider the error vector between the $p^{th}$ order approximation of the reconstructed signal and the original signal $\ev^{(p)} = \hat{\fv}^{(p)} - \fv$.
Using the triangular inequality, we arrive at
\begin{equation}
\label{eqn:error_bound}
    \norm{\ev^{(p)}} \leq \norm{\hat{\fv}^{(p)} - \hat{\fv}} + \norm{\hat{\fv} - \fv}.
\end{equation}
We will focus only on the first term \eqref{eqn:error_bound}, which depends on $p$.  From  the definition of $\hat{\fv}^{(p)}$ we can show that
\begin{equation}
   \hat{\fv}^{(p)}  - \hat{\fv}  = \Pm_{\Rc}\Zm   (\hat{\fv}^{(p-1)} - \hat{\fv})  = (\Pm_{\Rc}\Zm)^{p}   (\hat{\fv}^{(0)} - \hat{\fv}),
\end{equation}
which implies the error bound
\begin{equation}
\label{eqn:intermediate_norm1}
    \norm{ \hat{\fv}^{(p)}  - \hat{\fv}} \leq \norm{\Pm_{\Rc}\Zm}^{p} \norm{\hat{\fv}^{(0)} - \hat{\fv}}.
\end{equation}
We can bound the error of the initialization
by 
\begin{equation}
    \norm{\hat{\fv}^{(0)} - \hat{\fv}} \leq \Vert \hat{\fv}^{(0)} -\fv\Vert  + \Vert \fv- \hat{\fv} \Vert \leq \Vert \fv \Vert +\Vert \fv- \hat{\fv}\Vert, 
\end{equation}
where the second inequality is due to $\Vert \hat{\fv}^{(0)} -\fv\Vert = \Vert \fv_{\Rc}\Vert \leq \Vert \fv\Vert$. Replacing back into \eqref{eqn:error_bound} we have
\begin{align}
 \norm{\ev^{(p)}} &\leq   \norm{\Pm_{\Rc}\Zm}^{p}\Vert \fv \Vert +  (1+\norm{\Pm_{\Rc}\Zm}^{p})\Vert \fv- \hat{\fv} \Vert.
\end{align}
We conclude by dividing both sides by $\norm{\fv}$,  taking expectation with respect to $\Sc$, and applying Cauchy–Schwarz.
% \begin{align}
%         &\EE\left[\frac{\Vert \hat{\fv}^{(p)} - \fv\Vert}{\Vert \fv \Vert} \right] \leq \EE\Vert \Pm_{\Rc}\Zm\Vert^p + \\
%         &(1+\left(\EE\Vert \Pm_{\Rc}\Zm\Vert^{2p}\right)^{1/2})\left(\EE\left[\frac{\Vert \hat{\fv} - \fv\Vert^2}{\Vert \fv \Vert^2} \right]\right)^{1/2},
%     \end{align}

\subsection{Proof of \autoref{thrm:selectin_p}}
\label{app_matrix_inequality}
Following \cite{tropp2015introduction} we use $\Vert \Pm_{\Rc} \Zm \Vert^2 = \Vert \Zm^{\top} \Pm_{\Rc} \Zm \Vert$ and 
\begin{equation}
    \Zm^{\top} \Pm_{\Rc} \Zm = \sum_{i=1}^N \delta_i \bv_i \bv_i^{\top},
\end{equation}
where $\lbrace \delta_i \rbrace_{i=1}^N$ are independent identically distributed  Bernoulli random variables with $\Prob(\delta_i=1) = 1-\alpha$, and $\bv_i^{\top}$ is the $i$th row of $\Zm$. 
Let $q = p/2$, and since $p \geq 2$, Jensen's inequality implies the lower bound
\begin{align}
    ({1-\alpha})^q = {\Vert \EE \Zm^{\top}\Pm_{\Rc}\Zm \Vert}^q\leq \EE \Vert\Zm^{\top}\Pm_{\Rc}\Zm  \Vert^q.
\end{align}
Replacing back $q = p/2$ we get the desired lower bound
\begin{equation}
    (\sqrt{1-\alpha})^p \leq  \EE \Vert\Pm_{\Rc}\Zm  \Vert^p.
\end{equation}
For the upper bound, we  apply the triangular inequality for the operator norm and Minkowski's inequality for the expectation
\begin{align}
    &\left(\EE \Vert\Zm^{\top}\Pm_{\Rc}\Zm  \Vert^q\right)^{1/q} \leq \Vert \EE \Zm^{\top}\Pm_{\Rc}\Zm \Vert +\\
    &\left(\EE\Vert \Zm^{\top}\Pm_{\Rc}\Zm - \EE\Zm^{\top}\Pm_{\Rc}\Zm  \Vert^q\right)^{1/q}\\
    &= (1-\alpha) + \left(\EE\Vert \Zm^{\top}\Pm_{\Rc}\Zm - (1-\alpha)\Zm^{\top}\Zm  \Vert^q\right)^{1/q} \\
    &\leq \left(\sqrt{1-\alpha} + \left(\EE\Vert \Zm^{\top}\Pm_{\Rc}\Zm - (1-\alpha)\Zm^{\top}\Zm  \Vert^q\right)^{\frac{1}{2q}}\right)^2.
\end{align}
%\newpage
%\subsection{Additional Lemmas}

% To show that the inequality is strict, we assume that for  some $\xv \neq 0$ 
% \begin{equation}\label{eq_contradiction_nonexpansive}
%     \Vert \Pm_{\Rc}\Zm \xv \Vert = \Vert \xv \Vert,
% \end{equation}
% and will arrive at a contradiction.
% First, \eqref{eq_contradiction_nonexpansive}  implies  that $\Vert (\Zm \xv)_{\Rc} \Vert = \Vert \xv \Vert$. We also know that
% \begin{equation}\label{eq_contradiction_nonexpansive2}
%     \Vert \Zm \xv \Vert^2 \leq \Vert (\Zm \xv)_{\Rc} \Vert^2 + \Vert (\Zm \xv)_{\Sc} \Vert^2 = \Vert \xv \Vert ^2  + \Vert (\Zm \xv)_{\Sc} \Vert^2.
% \end{equation}
% If $\xv = \alpha \mathbf{1}$ for some $\alpha \neq 0$, then $\Vert \Zm \xv \Vert = \Vert \xv \Vert$, and \eqref{eq_contradiction_nonexpansive2} implies that $\Vert (\Zm \xv)_{\Sc} \Vert^2=0$ and $\Pm_{\Rc} (\Zm \xv) = \Zm \xv$,
% % \begin{equation}
% %     \Pm_{\Rc} (\Zm \xv) = \Zm \xv,
% % \end{equation}
% which combined with \eqref{eq_contradiction_nonexpansive}, it implies that $\Vert \Zm \xv \Vert  = \Vert \xv \Vert$, which can only be true if $\Zm \xv = \xv$, which is a contradiction between $\Pm_{\Rc} (\Zm \xv) = \Zm \xv$ and $\xv = \alpha \mathbf{1}$.
% For the other case, when  $\xv \neq \alpha \mathbf{1}$, we have that $\Vert \Zm \xv \Vert < \Vert \xv \Vert$, which combined with \eqref{eq_contradiction_nonexpansive2} implies that $\Vert (\Zm \xv)_{\Sc} \Vert <0$, thus reaching a contradiction.  
\end{proof}

\end{document}